\newtheorem{proposition}{Proposition}[section]
\newtheorem{lemma}{Lemma}[section]
\newtheorem{claim}{Claim}[section]
\newtheorem{theorem}{Theorem}[section]
\newtheorem{corollary}{Corollary}[section]
\begin{document}
\title{On the (In)approximability of Combinatorial Contracts\thanks{This material is based upon work partially supported by the National Science Foundation under Grant No. DMS-1928930 and by the Alfred P. Sloan Foundation under grant G-2021-16778, while Tomer was in residence at the Simons Laufer Mathematical Sciences Institute (formerly MSRI) in Berkeley, California, during the Fall 2023 semester. This project has received funding from the European Research Council (ERC) under the European Union's Horizon 2020 research and innovation program (grant agreement No. 866132), by an Amazon Research Award, and by the NSF-BSF (grant number 2020788).
}} 
\author{Tomer Ezra\thanks{Simons Laufer Mathematical Sciences Institute, USA. Email: tomer.ezra@gmail.com}\and Michal Feldman\thanks{Tel Aviv University, Israel. Email: mfeldman@tauex.tau.ac.il}\and Maya Schlesinger\thanks{{Tel Aviv University, Israel. Email: mayas1@mail.tau.ac.il}}}
\maketitle
\newcommand{\reals}{\mathbb{R}}
\newcommand{\smallClique}{\text{SMALL}}
\newcommand{\largeClique}{\text{LARGE}}
\thispagestyle{empty}

\begin{abstract}
    We study two recent combinatorial contract design models, which highlight different sources of complexity that may arise in contract design, where a principal delegates the execution of a costly project to others. 
In both settings, the principal cannot observe the choices of the agent(s), only the project's outcome (success or failure), and incentivizes the agent(s) using a contract, a payment scheme that specifies the payment to the agent(s) upon a project's success. 
We present results that resolve open problems and advance our understanding of the computational complexity of both settings.

In the {\em multi-agent} setting,  
the project is delegated to a team of agents, where each agent chooses whether or not to exert effort. 
A success probability function maps any subset of agents who exert effort to a probability of the project's success. 
For the family of submodular success probability functions, D{\"u}tting et al. [2023] established a poly-time constant factor approximation to the optimal contract, and left open whether this problem admits a PTAS. We answer this question on the negative, by showing that no poly-time algorithm guarantees a better than $0.7$-approximation to the optimal contract.
For XOS functions, they give a poly-time constant approximation with value and demand queries. We show that with value queries only, one cannot get any constant approximation. 

In the {\em multi-action} setting,  
the project is delegated to a single agent, who can take any subset of a given set of actions.
Here, a success probability function maps any subset of actions to a probability of the project's success. 
D{\"u}tting et al. [2021a] showed a poly-time algorithm for computing an optimal contract for gross substitutes success probability functions, and showed that the problem is NP-hard for submodular functions. We further strengthen this hardness result by showing that this problem does not admit any constant approximation. Furthermore, for the broader class of XOS functions, we establish the hardness of obtaining a $n^{-1/2+\varepsilon}$-approximation for any $\varepsilon > 0$.
\end{abstract}

\newpage
\pagenumbering{arabic} 

\section{Introduction}
Contract theory is a pillar in microeconomics, studying how to incentivize agents to exert costly effort when their actions are hidden.
This problem is explored using the 
principal-agent model introduced by \citet{holmstrom1979moral} and \citet{grossman-hart-1983}.
In this model, a principal wishes to delegate the execution of a costly task to an agent who can take one of $n$ actions, each associated with a cost and a probability distribution over outcomes. The agent's action is hidden from the principal, who can observe only the realized outcome. 
To incentivize the agent to exert effort, the principal designs a contract, which is a payment scheme that specifies a payment for every possible outcome.
The goal of the principal is to find a contract that maximizes her utility (expected reward minus expected payment), assuming the agent takes the action that maximizes his own utility (expected payment minus cost). 
This problem can be solved in polynomial time using linear programming \citep{grossman-hart-1983}.

In recent years, the principal-agent model has been extended to combinatorial settings along different dimensions, such as multiple agents \citep{BabaioffFNW12,EmekF12,multi-agent-contracts}, multiple actions \citep{combinatorial-contracts} and exponentially many outcomes~\citep{dutting2021complexity}.

In this work, we study two of these combinatorial contract models, namely the multi-agent and multi-action settings. 
In both of these models, the focus is on the case of binary outcome, where the project can either succeed or fail, and the principal receives some reward (normalized to $1$) if the project succeeds.
Notably, finding an (approximate) optimal contract in the binary-outcome model, is equivalent to finding an (approximate) optimal linear contract in settings with more than two outcomes. 
Thus, we restrict attention to linear contracts without loss of generality. Moreover, since our focus is on hardness results, restricting attention to the binary-outcome case only strengthens our results.

\vspace{0.1in}
{\bf Setting 1: Multi-agent.} 
In the multi-agent model \citep{BabaioffFNW12,multi-agent-contracts}, the principal delegates the execution of a costly project to a team of $n$ agents. Every agent can either exert effort (at some cost to the agent) or not. 
At the heart of the model is a success probability function $f:2^{[n]}\rightarrow [0,1]$, which specifies, for every subset of agents who exert effort, the probability that the project succeeds. 

The principal incentivizes the agents through a contract that specifies for every agent $i$, a non-negative payment $\alpha_i$ that the principal pays the agent if the project succeeds. 
The principal's utility is defined as her expected reward minus the expected sum of payments to the agents.
Given a contract, an agent's utility is defined as his expected payment from the contract minus his cost if he chooses to exert effort. 
Thus, a contract by the principal induces a game between the agents, and we consider the agent actions in an equilibrium of the game. 
The principal's goal is to maximize her expected utility in equilibrium.

For {\em submodular} success probability functions, \citet{multi-agent-contracts} devise a constant factor approximation algorithm, using value query access (a value query receives a set $S\subseteq [n]$ and returns the value $f(S)$).
They left as an open problem whether the problem admits a PTAS. 
For the larger class of {\em XOS} success probability functions, they also give a constant approximation algorithm, using both value and demand queries (a demand query receives a price vector
$p \in \reals_{\geq 0}^n$, and returns a set $S$ that maximizes $f(S) -\sum_{i \in S} p_i$). 
For the XOS class, they show that it is not possible to obtain a better-than-constant approximation with value and demand queries.

\vspace{0.1in}
{\bf Setting 2: Multi-action.} 
In the multi-action model \citep{combinatorial-contracts}, the principal delegates the execution of the project to a single agent, who can take any subset of $n$ possible actions. Each action $i$ is associated with a cost $c_i$, and when the agent executes a set of actions $S \subseteq [n]$, he incurs the sum of their costs. 
Here, the success probability function $f:2^{[n]}\rightarrow [0,1]$ maps any subset of the actions to a success probability of the project. 

In this model, the principal specifies a single non-negative payment $\alpha$ that is paid to the agent if the project succeeds. 
The agent then chooses a subset of actions that maximizes his utility (the expected payment from the principal minus the cost he incurs). The principal's utility is the expected reward minus the expected payment to the agent.

\citet{combinatorial-contracts} show that computing an optimal contract for
submodular success probability functions
is NP-hard, and left as an open question whether there exists an approximation algorithm for the problem, for submodular success probability functions, as well as for the larger classes of XOS and subadditive functions. 
(For the class of gross-substitutes success probability functions --- a strict subclass of submodular functions --- they devise a polytime algorithm for computing an optimal contract, using access to a value oracle.)

\subsection{Our Results}

\paragraph{Setting 1: Multi-agent.}
Our first set of results concern the multi-agent setting. 
The first result resolves the open question from \citep{multi-agent-contracts} in the negative, showing that the multi-agent problem with  submodular success probability functions does not admit a PTAS. 

\vspace{0.1in}
\noindent {\bf Theorem (multi-agent, submodular):} 
    In the multi-agent model, with  
    submodular success probability function, no polynomial time algorithm with value oracle access can approximate the optimal contract to within a factor of $0.7$, unless P=NP.
\vspace{0.1in}

We then turn to XOS success probability functions. 
\citet{multi-agent-contracts} provide a poly-time constant approximation for XOS with demand queries. 
We show that no algorithm can do better than $O(n^{-1/6})$-approximation with poly-many value queries, thus establishing a separation between the power of value and demand queries for XOS functions. 

\vspace{0.1in}
\noindent {\bf Theorem (multi-agent, XOS):} 
    In the multi-agent model, with XOS success probability function, no (randomized) algorithm that makes poly-many value queries can approximate the optimal contract (with high probability) to within a factor greater than $4n^{-1/6}$.

\paragraph{Setting 2: Multi-action.}
Our second set of results consider the 
multi-action model. 
We first show that obtaining  any constant approximation for submodular functions is hard. 

\vspace{0.1in}
\noindent {\bf Theorem (multi-action, submodular):} 
    In the multi-action model, 
    with submodular success probability function $f$, no polynomial time algorithm with value oracle access can approximate the optimal contract to within a constant factor, unless P=NP.
\vspace{0.1in}

We then show that for the broader class of XOS success probability functions, it is hard to obtain a $n^{-1/2+\varepsilon}$-approximation for any $\varepsilon > 0$.

\vspace{0.1in}
\noindent {\bf Theorem (multi-action, XOS):} 
    In the multi-action model, 
    with XOS success probability function, under value oracle access, for any $\varepsilon > 0$, no polynomial time algorithm with value query access can approximate the optimal contract to within a factor of $n^{-\frac{1}{2}+\varepsilon}$, unless P=NP.
\vspace{0.1in}

\subsection{Our Techniques}

\noindent {\bf Submodular functions.} Both of our hardness results for submodular functions are based on an NP-hard promise problem for normalized unweighted coverage functions (a subclass of submodular), which is a generalization of a result by \citet{feige}. We introduce this problem and prove its hardness in Section~\ref{sec: coverage promise problem} and Appendix~\ref{app: coverage promise problem proof}.

In particular, we show that it is NP-hard to distinguish between a normalized unweighted coverage function $f$ that has a relatively small set $S$ with $f(S)=1$, and one for which a significantly larger set would be required to get close to $1$.
This leads to our submodular hardness results as follows: when setting uniform costs to all actions/agents, an optimal contract has a significantly different utility to the principal in each case. When a relatively small set $S$ achieves $f(S)=1$, the principal's utility is relatively high, and when a significantly larger set is required to get close to $1$, the principal's utility is relatively low. 
In the multi-agent setting, the principal's utility from a given contract is easy to compute, which allows an approximately optimal contract to distinguish between the two cases. 


In the multi-action setting, where the principal's utility from a given contract isn't necessarily easy to compute, our reduction also involves the addition of a new action. This new action is defined in such a way that only a high payment to the agent incentivizes the agent to take it. Thus - the principal has to decide whether to choose the optimal contract for the original problem, or to incentivize the agent to take this new action, which leads to a separation of contracts between the two cases.


\vspace{0.1in}
\noindent {\bf Multi-action, XOS functions.}
For this result, we construct a reduction from the problem of approximating the size of the largest clique in a graph to our problem. 
We rely on the hardness result of \citet{Hastad_1999, zuckerman2006} for approximating the largest clique in a graph $G$, denoted by $\omega(G)$. In particular, \citet{Hastad_1999, zuckerman2006} show that there is no poly-time algorithm that approximates $\omega(G)$ within a factor of $n^{-1+\varepsilon}$ (for any $\varepsilon > 0$) unless P=NP.
We show that given a $\beta$-approximation algorithm (for $\beta \in (0,1)$) for the optimal contract, one can approximate $\omega(G)$ within a factor of $\beta^2/4$, which implies our inapproximability of $n^{-1/2+\varepsilon}$ for the optimal contract. To achieve this, for any parameter $\beta \in (0,1)$, we give an algorithm that on input $(G, \delta)$, where $G$ is a graph and $\delta \in \mathbb{N}^+$, creates an instance of the multi-action contract problem with an XOS success probability function, for which value queries can be computed in polynomial time. In the constructed instance there are only two ``reasonable'' candidates for a contract, regardless of the structure of $G$; these are the values of $\alpha$ at which the agent's best response may change. The lower of these candidates is better than the other by at least a factor of $\beta$ when $\omega(G) \le \delta$, and the reverse is true when $\omega(G) \ge 2\delta/\beta^2$. This gives us the ability to distinguish between the case where $\omega(G) \le \delta$ and $\omega(G) \ge 2\delta/\beta^2$. 
By repeating this process for different values of $\delta$ we can approximate $\omega(G)$ within a factor of $\beta^2/4$.

\vspace{0.1in}
\noindent {\bf Multi-agent, XOS functions.}
Our inapproximability result for this case is information theoretic, and relies on ``hiding'' a good contract, so that no algorithm with poly-many value queries can find it with non-negligible probability.
In particular, for any $n$, we choose a set $G\subseteq A$ of $m = n^{1/3}$ ``good'' agents uniformly at random. 
We define an XOS success probability function such that sets of size $O(m)$ may have a high success probability only if they have a large intersection with $G$, and any value query reveals negligible information regarding the set $G$. 
We set equal costs such that incentivizing more than $2m$ agents becomes unprofitable to the principal. 
Thus, in order to get a good approximation, the algorithm must find a relatively small set of agents that has a large intersection with $G$.
Since our construction of the success probability function is such that value queries reveal negligible information on $G$, the algorithm has a negligible probability of finding such a set.

\subsection{Related Work}

\paragraph{Multi-agent settings: additional related work.}
\citet{BabaioffFNW12} introduced a multi-agent model where every agent decides whether to exert effort or not, and succeeds in his own task (independently) with a higher probability if he exerts effort. 
The project's success is then a function of the individual outcomes by the agents. 
They show that computing the optimal contract in this model is \#P-hard in general, and provide a polytime algorithm for the special case where the project succeeds iff all agents succeed in their individual tasks (AND function). 
\citet{EmekF12} show that computing the optimal contract in the special case where the project succeeds iff at least one agent succeeds (OR function) is NP-hard, and provide an FPTAS for this problem.

\citet{multi-agent-contracts} extend the model of \citep{BabaioffFNW12} to the model presented in our paper, where the project's outcome is stochastically determined by the set of agents who have exerted effort, according to a success probability function $f: 2^{[n]}\rightarrow [0,1]$. 
Their primary result is the development of a constant approximation algorithm for XOS success probability functions. They complement this result by showing an upper bound of $O(1/\sqrt{n})$ for subadditive functions, and an upper bound of a constant for XOS functions. They also show that the problem is NP-hard even for additive functions, and devise an FPTAS for this case.

\citet{vuong2023supermodular} study the model of \citep{multi-agent-contracts} for the case where the function $f$ is supermodular, and show that no polynomial time algorithm can achieve any constant approximation nor an additive FPTAS.
They also present an additive PTAS for a special case of graph-based super modular valuations.

\citet{castiglioni2023multiagent} study a multi-agent setting in which each agent has his own outcome, which is observable by the principal, and the principal's reward depends on all the individual outcomes. 
When the principal's reward is supermodular, they show that it is NP-hard to get any constant approximation to the optimal contract. They also give a poly-time algorithm for the optimal contract in special cases. 
When the principal's reward is submodular, they show that for any $\alpha \in (0,1)$ it is NP-hard to get a $n^{\alpha - 1}$-approximation, and they also provide a poly-time algorithm that gives a $(1-\frac{1}{e})$-approximation up to a small additive loss. 

\citet{dasaratha2023equity} consider a multi-agent setting with graph-based reward functions, and continuous effort levels, and characterize the optimal equilibrium induced by a linear contract.

\paragraph{Multi-action settings: additional related work. } 
\citet{vuong2023supermodular} and \citet{duetting2023combinatorial} further explore the multi-action model of \citep{combinatorial-contracts}. 
They present a poly-time algorithm for computing the optimal contract for any class of instances that admits an efficient algorithm for the agent's demand and poly-many ``breakpoints" in the agent's demand. A direct corollary of this result is a polynomial time algorithm for computing the optimal contract when the success probability function is supermodular and the cost function is submodular.
\citet{duetting2023combinatorial} further show a class of XOS success probability functions (matching-based) which admits an efficient algorithm for the agent's demand, but has a super-polynomial number of breakpoints in the agent's demand. Computing the optimal contract for this class remains an open problem. (Pseudo) polynomial algorithms are presented for two special cases.

\paragraph{Additional combinatorial contract models.}
Beyond multi-agent and multi-action, one can consider other dimensions in which a contracting problem grows. 
For example, \citet{dutting2021complexity} consider a setting with exponentially many outcomes. 
They show that under a constant number of actions, it is NP-hard to compute an optimal contract. They proceed to weaken their restriction on contracts,  
and consider ``approximate-IC'' contracts, in which the principal suggests an action for the agent to take (in addition to the payment scheme), and the agent takes it as long as its utility is not much lower than that of another possible action. They present an FPTAS that computes an approximate-IC contract that gives the principal an expected utility of at least that achieved in the optimal (IC) contract. For an arbitrary number of actions, they show NP-hardness of any constant approximation, even for approximate-IC contracts.

\paragraph{Contracts for agents with types.}
\citet{guruganesh2021contracts} consider a setting where in addition to hidden action, the agent also has a private type, which changes the effect of each action he takes on the project's outcome. In the private type setting the principal may wish to incentivize the agent with a menu of contracts, i.e., a set of contracts from which the agent is free to choose whichever contract he prefers.
They show APX-hardness of both the optimal contract and the optimal menu of contracts. In contrast, \citet{alon2021contracts} consider the case where the agent has a single-dimensional private type, and they present a characterization of implementable allocation rules (mappings of agent types to actions), which allows them to design a poly-time algorithm for the optimal contract with a constant number of actions.

\citet{castiglioni2022designing}  study the case where the agent's private type is Bayesian, i.e., drawn from some known finitely-supported distribution. They study menus of randomized contracts (defined as distributions over payment vectors), wherein upon the agent's choice of randomized contract, the principal draws a single deterministic contract from the distribution, and the agent plays his best response to this deterministic contract. They show that an almost optimal menu of randomized contracts can be computed in polynomial time. They also show that the problem of computing an optimal menu of deterministic contracts cannot be approximated within any constant factor in polynomial time, and that it does not admit an additive FPTAS.

\paragraph{Optimizing the efforts of others. }
Contract design is part of an emerging frontier in algorithmic game theory regarding optimizing the effort of others (see, e.g., the STOC 2022 TheoryFest workshop with the same title). 
In addition to contract design, this field includes recent work on strategic classification \citep{kleinberg2020classifiers, bechavod2022information}, delegation \citep{kleinberg2018delegated, bechtel2022delegated}, and scoring rules \citep{papireddygari2022contracts, hartline2022optimization}.

\section{Model and Preliminaries}\label{sec:model}
We first describe the basic version of the contract design problem, also known as the hidden-action or principal-agent setting.
We then present two extensions, one with multiple agents, the other with multiple actions.
For simplicity, we restrict attention to a binary-outcome setting, where a project either succeeds or fails.

\subsection{Basic Principal-Agent Setting} A single principal interacts with a single agent, in an effort to make a project succeed. The agent has a set $A$ of possible actions, each with associated cost $c_i \geq 0$ and probability $p_i \in [0,1]$. When the agent selects action $i\in A$, he incurs a cost of $c_i$, and the project succeeds with  probability $p_i$, and fails with probability $1-p_i$. If the project succeeds, the principal gets a reward which we normalize to $1$. The principal is not aware of which action the agent has taken, only if the project has succeeded or failed.

\paragraph{Contracts.}
Since exerting effort is costly and reaps benefits only to the principal, in and by itself the agent has no incentive to exert effort. This challenge is often referred to as ``moral hazard''.
To incentivize the agent to exert effort, the principal specifies a contract that maps project outcomes (in this case, ``success'' and ``failure'') to payments made to the agent by the principal. It is well known that in the binary-outcome case, it is without loss of generality to assume that the payment for failure is 0. Thus, a contract can be fully described by a parameter $\alpha \in [0,1]$, which is the fraction of the principal's reward that is paid to the agent (in our case, where the reward is normalized to $1$, $\alpha$ is essentially the payment for success).

Under a contract $\alpha \in [0, 1]$, the agent's utility from action $i\in A$ is the expected payment minus the cost, i.e.,
\[
u_A(\alpha, i) = p_i \cdot \alpha - c_i.
\]
The principal's utility under a contract $\alpha$ and an agent's action $i$ is the expected reward minus the expected payment, i.e.,
\[
u_P(\alpha, i) = p_i (1-\alpha).
\]
Given a contract $\alpha$, the agent's best response is an action that maximizes his utility, namely $i_{\alpha} \in \arg\max_{i \in A} u_A(\alpha,i)$.
As standard in the literature, the agent breaks ties in favor of the principal's utility.
The principal's problem, our problem in this paper, is to find the contract $\alpha$ that maximizes her utility, given that the agent best responds. 
Let $u_P(\alpha)=u_P(\alpha,i_{\alpha})$, then the principal's objective is to find $\alpha$ that maximizes $u_P(\alpha)$.

\subsection{Combinatorial Contract Settings}
\label{sec:comb-contracts-model}

In what follows, we define two combinatorial settings, one with multiple agent (introduced by \cite{BabaioffFNW12}, as presented in \cite{multi-agent-contracts}), the other with multiple actions (introduced by \cite{combinatorial-contracts}). In both cases we use a set function $f:2^{A}\rightarrow [0,1]$ that maps every subset of a set $A$ to a success probability. 
The set $A$ denotes the set of agents in the first setting, and the set of actions in the second setting. When considering $f$ outside of a specific context, we refer to $A$ as the set of items. 

We focus on success probability functions $f$ that belong to one of the following classes of complement-free set functions \citep{nisan2006communication}. A set function $f:2^A\rightarrow \reals_{\ge 0}$ is:
\begin{enumerate}
    \item \textit{Additive} if there exist values $v_1,\dots,v_n \in \reals_{\ge 0}$ such that $\forall S\subseteq A.~f(S) = \sum_{i\in S} v_i$.
    \item \textit{Coverage} if there is a set of elements $U$, with associated positive weights  $\{w_u\}_{u\in U}$, and a mapping $h:A \rightarrow 2^U$ such that for every $S\subseteq A$, $f(S)=\sum_{u\in U } w_u \cdot \mathbb{1}[\exists i \in S.~u \in  h(i)]$, where $\mathbb{1}[B]$ is the indicator variable of the event $B$.
    In this paper, we focus on a special case of coverage functions, called \textit{normalized unweighted coverage functions}, in which $w_u=\frac{1}{|U|}$ for every $u \in |U|$.
    We represent a normalized unweighted coverage function $f$ using a tuple $(U,A,h)$, for which $f(S) = \frac{1}{|U|}\left|\bigcup_{i\in S} h(i) \right|$.
    \item \textit{Submodular} if for any two sets $S, S'\subseteq A$ s.t. $S\subseteq S'$ and $i\in A$ it holds that $f(i\mid S) \ge f(i \mid S')$, where $f(i \mid S) = f(S \cup \{i\}) - f(S)$ is the marginal contribution of $i$ to $S$.
    \item \textit{XOS} if there exists a finite collection of additive functions $\{a_i: 2^A \rightarrow \reals_{\ge 0} \}_{i=1}^k$ such that for every $S\subseteq A$, $f(S) = \max_{i=1,\dots, k} a_i(S)$.
\end{enumerate}

It is well known that additive $\subset$ coverage $\subset$ submodular $\subset$ XOS, 
and all containment relations are strict \citep{lehmann2001combinatorial}.

\paragraph{Computational model.}
Since the success probability function $f:2^A\rightarrow [0,1]$ contains exponentially many (in $|A|$) values, we assume, as is common in the literature, that the algorithm has a value oracle access, which, for every set $S\subseteq A$, returns $f(S)$. 
It should be noted that most of our results hold under an even stronger assumption. Namely, that the success probability function $f$ admits a succinct representation, for which a value oracle can be computed efficiently, and that this representation is given to the algorithm. This assumption implies that these results are purely computational hardness ones, as the algorithm essentially knows the entire function.

We next present the two combinatorial models considered in this paper.

\paragraph{Setting 1: Multiple agents. }
In the multiple agents setting, the principal interacts with a set $A$ of $n$ agents. Every agent $i \in A$  decides whether to exert effort or not (binary action). Exerting effort comes with a cost of $c_i \ge 0$ (otherwise the cost is zero).
The success probability function $f:2^A\rightarrow [0,1]$ maps 
every set of agents who exert effort to a success probability of the project, where $f(S)$ denotes the success probability if $S$ is the set of agents who exert effort.  

A contract is now a vector $\alpha =(\alpha_1,\ldots,\alpha_n)\in [0,1]^n$, where $\alpha_i$ is the payment to agent $i$ upon a project success.

Given a contract 
$\alpha=(\alpha_1,\ldots,\alpha_n)$ and a set $S$ of agents who exert effort, the principal's utility is given by $\left(1-\sum_{i\in A} \alpha_i\right)f(S)$. Agent $i$'s utility is given by $\alpha_i f(S) - \mathbb{1}[i\in S] c_i$. Note that agent $i$ is paid in expectation $\alpha_i f(S)$ regardless of whether he exerts effort or not, but pays $c_i$ only if he exerts effort (i.e., if $i\in S$).

To analyze contracts, we consider the (pure) Nash equilibria of the induced game among the agents. A contract $\alpha=(\alpha_1,\ldots,\alpha_n)$ is said to incentivize a set $S\subseteq A$ of agents to exert effort (in equilibrium) if 
\begin{eqnarray*}
\alpha_i f(S) -  c_i  \ge \alpha_i f(S \setminus \{i\}) & &\text{for all }i\in S\text{, and}\\
\alpha_i f(S) \ge \alpha_i f(S\cup \{i\}) -  c_i & & \text{for all }i\notin S.
\end{eqnarray*}

Since equilibria may not be unique, we think of a contract as a pair $(\alpha, S)$ where $S$ is a set of agents incentivized to exert effort (in equilibrium).

It is easy to observe that 
for any set $S\subseteq A$, the best way for the principal to incentivize the agents in $S$ is by the contract 
\[
\begin{array}{lc}
\alpha_i = \frac{c_i}{f(i \mid S \setminus \{i\})} & \text{for all }i\in S\text{, and}\\
\alpha_i = 0 & \text{for all }i\notin S,
\end{array}
\]
where $f(i \mid S \setminus \{i\})=f(S) - f(S\setminus \{i\})$ is the marginal contribution to $S$ of adding $i$ to $S \setminus \{i\}$. We interpret $\frac{c_i}{f(i \mid S \setminus \{i\})}$ as $0$ if $c_i = 0$ and $f(i \mid S \setminus \{i\})=0$ and as infinity when $c_i > 0$ and $f(i \mid S \setminus \{i\})=0$. The principal thus tries to find a set $S$ that maximizes $g(S)$ where 
\[
g(S) = \left(1-\sum_{i\in S} \frac{c_i}{f(i \mid S\setminus \{i\})}\right)f(S).
\]

Let $S^\star$ be the optimal set of agents, i.e., the set that maximizes $g$. We say that $S$ is a $\beta$-approximation to the optimal contract (where $\beta \leq  1$) if $ g(S) \ge \beta \cdot g(S^\star)$.

\paragraph{Setting 2: Multiple actions. }
In the multiple actions setting, the principal interacts with a single agent, who faces a set $A$ of $n$ actions, and can choose any subset $S \subseteq A$ of them. 
Every action $i \in A$ is associated with a cost $c_i \ge 0$, and the cost of a set $S$ of actions is $\sum_{i\in S} c_i$. The success probability function $f(S)$ denotes the probability of a project success when the agent chooses the set of actions $S$. 

A contract is defined by a single parameter $\alpha \in (0,1)$, which denotes the payment to the agent upon the project's success. 
Given a contract $\alpha$, the agent's  and principal's utilities under a set of actions $S$ are, respectively,
\[
u_A(\alpha, S) = f(S) \cdot \alpha - \sum_{i\in S} c_i \quad\quad \mbox{and} \quad\quad u_P(\alpha, S) = f(S) (1-\alpha).
\]

The agent's best response for a contract $\alpha$ is $S_{\alpha} \in \arg\max_{S \subseteq A} u_A(\alpha, S)$. 
As before, the agent breaks ties in favor of the principal's utility. We also denote $u_P(\alpha)=u_P(\alpha,S_{\alpha})$, and the principal's objective is to find a contract $\alpha$ that maximizes her utility $u_P(\alpha)$.
We denote by $\alpha^\star$ the optimal contract, i.e., the contract that maximizes $u_P(\alpha^\star)$. We say that a contract $\alpha$ is a $\beta$-approximation (where $\beta \leq 1$) if $u_P(\alpha) \ge \beta \cdot u_P(\alpha^\star)$.

\section{An NP-hard Promise Problem of Coverage Functions} \label{sec: coverage promise problem}
In this section we define a promise problem regarding normalized unweighted coverage functions that is the basis of our hardness results for the contract models with submodular success probability functions. Essentially, we show that it is NP-hard to distinguish between a normalized unweighted coverage function $f$ that has a relatively small set $S$ with $f(S)=1$, and one for which a significantly larger set $T$ would be required to get close to $f(T)=1$. This naturally leads to our hardness results in Sections~\ref{sec:multi-agent model hardness} and~\ref{sec: multi-action-submodular}; by setting uniform costs to all actions / agents, an approximately optimal contract can usually distinguish between the case where a relatively small set achieves $f(S)=1$ (and incentivizing costs less to the principal) and a larger set is required to get close to $f(T)=1$ (and incentivizing costs more to the principal).

This hardness result is an extension of a hardness result by \citet{feige}, which we present next for completeness.
Recall that a normalized unweighted coverage function $f$ is given by a tuple $(U,A,h)$, where $U$ is a set of elements, and $h$ is a mapping from $A$ to $2^U$ (see Section~\ref{sec:comb-contracts-model}).

\begin{proposition}[\citep{feige}] \label{prop:feige}
    For every $0 < \varepsilon < e^{-1}$, on input $(k,f)$, where $k\in \mathbb{N}$ and $f=(U, A, h)$ is a normalized unweighted coverage function such that exactly one of the following two conditions holds:\vspace{-0.08in}
    \begin{enumerate}
    \itemsep-0.1em
        \item There exists a set $S\subseteq A$ of size $k$ such that $f(S) = 1$. 
        \item Every set $S\subseteq A$ of size $k$ satisfies $f(S) \le 1-e^{-1}+\varepsilon$.
    \end{enumerate}
    It is NP-hard to determine which of the two conditions is satisfied by the input.
\end{proposition}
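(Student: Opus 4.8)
The natural plan is to adapt Feige's proof of the $(1-o(1))\ln n$-inapproximability of Set Cover, with parameters tuned to produce exactly the $(1-e^{-1})$ gap and to keep the constructed coverage function normalized and unweighted. Two ingredients enter. The first is NP-hardness of a \emph{gap projection game} (two-query label cover): by the PCP theorem together with parallel repetition, for every constant $\rho>0$ there is a polynomial-time reduction from $3$-SAT to a projection game on a biregular bipartite graph with question sides $Q_1,Q_2$, answer alphabets $\Sigma_1,\Sigma_2$, and projection constraints $\pi_{q_1,q_2}\colon \Sigma_1\to\Sigma_2$, such that satisfiable formulas yield games having a labeling that satisfies \emph{all} constraints, while unsatisfiable formulas yield games in which \emph{every} labeling satisfies at most a $\rho$-fraction of the constraints; $\rho$ (equivalently, the repetition count) is fixed at the end, small in terms of $\varepsilon$. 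The second ingredient is a \emph{partition system}: a ground set $[m]$ with partitions $\mathcal{P}_1,\dots,\mathcal{P}_L$, each into $r$ equal-sized blocks, such that picking one block from each of any $t$ distinct partitions covers at most a $\bigl(1-(1-1/r)^{t}+o(1)\bigr)$ fraction of $[m]$; such systems exist, probabilistically or explicitly, with $m$ polynomially bounded and $L$ as large as needed. Balancing the game's degree and $\Sigma_2$ against $r$ is what will pin the gap at $1-e^{-1}$.

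From a projection game instance, build the tuple $(U,A,h)$ in Feige's manner. Let the item set be $A=Q_1\times\Sigma_1$, and let $k:=|Q_1|$ be the size parameter of the promise problem. Let $U$ be the disjoint union of one copy $U_{q_2}$ of a partition-system ground set per $q_2\in Q_2$; since all copies, and all blocks within each, have equal size, weighting every element by $1/|U|$ makes $(U,A,h)$ a normalized unweighted coverage function. For the map $h$: on a copy $U_{q_2}$ with $q_1$ adjacent to $q_2$, let $h(q_1,\sigma_1)\cap U_{q_2}$ be the block(s) of the partition system of $U_{q_2}$ that the value $\pi_{q_1,q_2}(\sigma_1)$ selects in Feige's gadget (and let $h(q_1,\sigma_1)$ be empty on copies of non-adjacent $q_2$). \emph{Completeness:} from a labeling $\lambda$ satisfying all constraints, take $S=\{(q_1,\lambda(q_1)):q_1\in Q_1\}$, which has size $k$; on every copy $U_{q_2}$ all contributing items carry the common projected value $\lambda(q_2)$, and the gadget is arranged so that a fully consistent family of selected blocks of this form covers the whole copy, whence $\bigcup_{a\in S}h(a)=U$ and $f(S)=1$. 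So the ``yes'' instance satisfies condition~1.

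\emph{Soundness:} suppose every labeling satisfies at most a $\rho$-fraction of the constraints, and let $S\subseteq A$ with $|S|=k$. By design of the partition system, on any single copy $U_{q_2}$ the items of $S$ cover noticeably more than a $(1-e^{-1})$ fraction only to the extent that the blocks they select behave ``consistently'' (like copies of one coordinated choice rather than independent ones); hence, averaging over the copies, if $f(S)$ exceeded $1-e^{-1}+\varepsilon$ a noticeable fraction of copies would exhibit such consistency, and a standard sampling/decoding step would extract from $S$ a (randomized) labeling satisfying a $\rho'$-fraction of the constraints, for some constant $\rho'=\rho'(\varepsilon)>0$. Choosing $\rho<\rho'$ contradicts the game's soundness, so $f(S)\le 1-e^{-1}+\varepsilon$ for every size-$k$ set $S$, which is condition~2; exactly one of the two conditions holds in each case, and since the reduction runs in polynomial time, NP-hardness follows. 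I expect the soundness step to be the crux: making the quantitative partition-system estimate and the label-cover decoding interlock so that the threshold comes out exactly $1-e^{-1}$, while keeping all parameters ($\rho$, the repetition count, $m$, $L$, and hence $|A|$ and $|U|$) polynomial in $n$, is the delicate part; by contrast, completeness and the normalized/unweighted bookkeeping are routine.
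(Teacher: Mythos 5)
The paper does \emph{not} prove Proposition~\ref{prop:feige}: it cites Feige [1998] and notes only that Feige ``used a different terminology, but proved an equivalent result.'' What the paper actually proves (in Appendix~\ref{app: coverage promise problem proof}) is the strengthened Proposition~\ref{prop:coverage hardness}, and that proof, like Feige's, goes via a \emph{$k$-prover} proof system built directly on MAX~3SAT-5: a binary code with $k$ codewords of weight $\ell/2$ decides which of the $\ell$ clause/variable queries each prover receives, the universe is $U=[k]^L\times R$, and the sets $B(r,j,i)$ encode ``under randomness $r$, prover $i$'s answer induces assignment $j$.'' Your proposal takes a genuinely different route: a two-prover projection game obtained from PCP plus parallel repetition, glued to explicit partition systems in the Lund--Yannakakis style. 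So the decomposition, the hardness source (generic label cover with small soundness $\rho$ vs.\ Feige's bespoke $k$-prover system), and the combinatorial gadget (external partition systems vs.\ the built-in $[k]^L$ product structure) are all different.

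The concern is that the step you flag as delicate is precisely the step that isn't a routine import. Feige moved from two provers to $k$ provers exactly because the two-prover machinery did not pin the constant: in the $k$-prover construction the anti-pigeonhole match-up is automatic ($k$ provers, $k$ blocks per partition, $k$ items hitting each $U_r$, so the soundness coverage is $1-(1-1/k)^k$), whereas in your construction the number of items landing in a copy $U_{q_2}$ is the degree of $q_2$ \emph{weighted by how $S$ is distributed over $Q_1$}, which is not fixed. Your budget of $|Q_1|$ does not force one item per $q_1$; an adversarial $S$ can stack items on a few $q_1$'s and starve others, so some copies receive many more than $d$ blocks (whose coverage then exceeds $1-e^{-1}$) and others fewer. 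To rescue the bound you need both a concavity argument across copies and a decoding argument that still works when the distribution is uneven; the paper's appendix does exactly this via the weights $\alpha_{q,i}$, Markov's inequality, and the concavity of $x\mapsto 1-(1-1/k)^x$. Your sketch asserts that ``a standard sampling/decoding step would extract a labeling,'' but that is where the threshold $1-e^{-1}$ is actually won or lost, and it is not standard in the two-prover setting with an uneven budget. As written, the completeness and normalization bookkeeping are fine, but the soundness is a plan rather than a proof, and it is the part that differs most from, and is hardest to reconcile with, the parameter discipline of Feige's construction.
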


Remark: \citet{feige} used a different terminology, but proved an equivalent result. 

We next present the following extension to Proposition~\ref{prop:feige}.
\begin{proposition} \label{prop:coverage hardness}
    For every $M > 1$ and every $0 < \varepsilon < e^{-1}$, on input $(k,f)$, where $k\in \mathbb{N}$ and $f=(U, A, h)$ is a normalized unweighted coverage function such that $\forall i\in A.~f(\{i\}) = \frac{1}{k}$ and exactly one of the following two conditions holds:
    \begin{enumerate}
        \item There exists a set $S\subseteq A$ of size $k$ such that $f(S) = 1$.
        \item Every set $S\subseteq A$ of size $\beta k$ such that $\beta \le M$ satisfies $f(S) \le 1-e^{-\beta}+\varepsilon$.
    \end{enumerate}
    It is NP-hard to determine which of the two conditions is satisfied by the input.
\end{proposition}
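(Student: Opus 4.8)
The plan is to reduce Proposition~\ref{prop:coverage hardness} to Proposition~\ref{prop:feige} by a ``blow-up'' (tensor/product) construction that amplifies the coverage guarantee from a single threshold at $1-e^{-1}+\varepsilon$ to the whole family of thresholds $1-e^{-\beta}+\varepsilon$ for $\beta\le M$, while also enforcing the extra normalization condition $f(\{i\})=\tfrac1k$ for all $i$. Concretely, given an instance $(k,g)$ of Proposition~\ref{prop:feige} with $g=(U,A,h)$, I would first preprocess it so that the ground set $A$ has size exactly $k\cdot k_0$ for some convenient $k_0$, and so that every element $i\in A$ covers the same number $|U|/k$ of elements of $U$ (this uniformity can be arranged by a standard gadget that replicates elements/sets; in Feige's hardness instances coming from set cover on regular instances this is essentially already true, so the cost is only bookkeeping). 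The point of making every singleton cover exactly a $1/k$ fraction is precisely to get $f(\{i\})=1/k$.

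The core step is the amplification. Fix an integer parameter $t$ (chosen as a function of $M$ and $\varepsilon$, roughly $t=\lceil M\rceil$ or a bit larger). Define a new normalized unweighted coverage function $f$ on the ground set $A^{(t)}$ whose elements are $t$-tuples (or, to keep the instance size polynomial, a suitably structured subset of tuples) built from $A$, with universe $U^{(t)}=U^t$ and $h$ defined coordinatewise, so that $f$ behaves like a ``$t$-fold product'' of $g$: a set of (scaled) size $\beta k$ in the new instance corresponds, in each coordinate, to roughly $\beta k / t$ original elements, but the fraction of $U^t$ it fails to cover is the product over coordinates of the fractions left uncovered in each coordinate. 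In the YES case, taking the product of the size-$k$ witness $S$ with $g(S)=1$ yields a set of size $k^{(t)}$ (the ``$k$'' of the new instance) with $f=1$. In the NO case, any set of scaled size $\beta k$ leaves, in a $1-o(1)$ fraction of coordinates, at least $(e^{-1}-\varepsilon')$ of that coordinate's universe uncovered (by applying Proposition~\ref{prop:feige}'s NO guarantee coordinatewise, using a counting/averaging argument over the coordinates, since on average each coordinate sees about $\beta k/t$ original elements and the set-cover guarantee degrades gracefully in the number of chosen sets); multiplying these per-coordinate deficits gives an uncovered fraction of at least roughly $e^{-\beta}-\varepsilon$, i.e. $f(S)\le 1-e^{-\beta}+\varepsilon$, which is exactly condition~2. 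I would track constants so that the $\varepsilon$ in the conclusion can be made as small as desired by taking the input $\varepsilon'$ small and $t$ large.

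The main obstacle is the NO-case analysis: controlling the coverage of an \emph{arbitrary} (not product-structured) set of scaled size $\beta k$ in the product instance. A worst-case adversary could concentrate its budget in a few coordinates, covering those coordinates completely while ignoring others. The resolution is that ignoring a coordinate entirely leaves that whole coordinate uncovered, which is far worse for the adversary than the $e^{-1}$ deficit; so a smoothing/convexity argument shows the adversary's best strategy is to spread the budget roughly evenly, and any deviation only helps us. Making this rigorous — choosing $t$ large enough relative to $M$ so the per-coordinate budget $\beta k/t$ is an integer-ish quantity to which Proposition~\ref{prop:feige} applies, handling the rounding, and carrying the averaging through the product — is where the real work lies; the details will go in Appendix~\ref{app: coverage promise problem proof}. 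Everything else (the size blow-up stays polynomial because $t=O(M)$ is a constant for fixed $M$, and the uniformity gadget is routine) is bookkeeping.
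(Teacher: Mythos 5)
Your plan is genuinely different from the paper's: you propose to black-box reduce from Proposition~\ref{prop:feige} via a tensor/product amplification, whereas the paper goes back to the 3CNF-5 separation problem and re-analyzes Feige's $k$-prover proof system from scratch with different parameters. Unfortunately, the black-box route has a gap that I do not think can be patched without effectively re-opening the proof system, which is what the paper does.

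The core problem is that the NO-case guarantee of Proposition~\ref{prop:feige} controls \emph{only} sets of size exactly $k$: every such set covers at most $1-e^{-1}+\varepsilon$. It says nothing about sets of size $\beta k$ for $\beta\ne 1$. In particular, nothing in the statement rules out that in a NO instance of Proposition~\ref{prop:feige} there is a set of size $2k$ covering all of $U$. If such a set $S$ exists, then in your $t$-fold tensor (with $|A^{(t)}|$-scale threshold $k^t$), the product $S\times\cdots\times S$ has size $(2k)^t = 2^t\, k^t$ and covers all of $U^t$; for any $M>2^t$ this violates condition~2 with $\beta = 2^t$. So the tensor of a NO instance of Proposition~\ref{prop:feige} need not be a NO instance of Proposition~\ref{prop:coverage hardness}. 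Your write-up assumes exactly what is missing when you say the ``set-cover guarantee degrades gracefully in the number of chosen sets'' and that you can apply ``Proposition~\ref{prop:feige}'s NO guarantee coordinatewise'' to per-coordinate budgets $\approx \beta k/t$: for $\beta\ne t$ those budgets are not $k$ and Proposition~\ref{prop:feige} simply does not speak to them. The graceful degradation is not a lemma you can cite; it is precisely the content of Proposition~\ref{prop:coverage hardness}, so the argument becomes circular.

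The concentration/smoothing discussion has a second, smaller issue: in a tensor ground set $A^{(t)}$ an adversarial set is an arbitrary subset of $t$-tuples, not necessarily a product $T_1\times\cdots\times T_t$, so the convexity argument over per-coordinate budgets does not apply as stated (there is no well-defined ``budget per coordinate'' for a diagonal-type family of tuples). The paper avoids both difficulties by constructing the coverage function directly from the $k$-prover proof system (the $\ell$-fold repetition inside that system plays the role of your $t$), where one does have fine-grained, size-dependent control: good random strings $r$ versus bad ones, the weights $\alpha_r$, and Claim~\ref{claim: bounding disjoint cover} together give the $\bigl(1-(1-1/k)^{\alpha_r}\bigr)$ per-$r$ coverage bound, and a concavity/averaging step over $r$ yields the $1-e^{-\beta}+\varepsilon$ bound uniformly for all $\beta\le M$. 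If you want to salvage your outline, you would need to replace the black-box use of Proposition~\ref{prop:feige} by the underlying proof-system guarantees, at which point you have essentially rederived the paper's appendix.
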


Proposition~\ref{prop:feige} asserts that it is hard to approximate the maximum value of $f(S)$ for sets of size $k$ within some factor, and in Proposition~\ref{prop:coverage hardness} we generalize this to the maximum value of $f(S)$ over all sets $S$ of some fixed size $\ell \in O(k)$. This is a necessary adjustment for our contract design problem, as we are not restricted to sets of size exactly $k$. 
Indeed, we can have either smaller or slightly larger sets (at smaller or slightly larger costs, respectively).

The proof of Proposition~\ref{prop:coverage hardness} is deferred to Appendix~\ref{app: coverage promise problem proof}.  
Like the proof of Proposition~\ref{prop:feige}, the proof is based on a reduction from the 
3CNF-5 separation problem defined in Proposition~\ref{prop: 3sat-5 hardness}
(known to be NP-hard), which we describe in Appendix~\ref{subsec:max-3sat-5}. The reduction relies on a $k$-prover proof system for 3CNF-5 formulas introduced by \citet{feige}, which we describe in Appendix~\ref{subsec:k-prover system}. In Appendix~\ref{subsec:construction of coverage function} we present the reduction, and in Appendix~\ref{subsec: coverage reduction proof} we prove its correctness, thus proving Proposition~\ref{prop:coverage hardness}.

\section{Hardness of Approximation for Multi-Agent, Submodular $f$}
\label{sec:multi-agent model hardness}

In this section, we settle an open problem
from \citep{multi-agent-contracts}. 
In particular, \citep{multi-agent-contracts} show that in a multi-agent setting, one can get constant-factor approximation for settings with submodular success probability function $f$, with value queries. 
It is left open whether one can get better than constant approximation for this setting. The following theorem resolves this question in the negative. 
This is cast in the following theorem.

\begin{theorem}\label{thm:multi-agent hardness}
    In the multi-agent model, 
    for submodular (and even normalized unweighted coverage) success probability function $f$, no polynomial time algorithm with value oracle access can approximate the optimal contract within a factor of $0.7$, unless $P=NP$. 
\end{theorem}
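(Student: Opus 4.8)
The plan is to reduce from the promise problem of Proposition~\ref{prop:coverage hardness}. Given an instance $(k, f)$ with $f = (U, A, h)$ satisfying $f(\{i\}) = \tfrac1k$ for all $i$ and one of the two conditions, I will build a multi-agent instance by keeping the same agent set $A$ and the same success probability function $f$ (which is normalized unweighted coverage, hence submodular), and assigning a uniform cost $c_i = c$ to every agent, where $c$ is a parameter to be tuned as a function of $k$ (roughly $c = \Theta(1/k^2)$, so that the per-agent payment $\alpha_i = c / f(i \mid S\setminus\{i\})$ stays small for the relevant sets). The idea is that in the ``YES'' case (Condition 1) there is a set $S$ of size $k$ with $f(S)=1$, so the principal can incentivize exactly that set at total payment roughly $k \cdot c / (\text{average marginal}) $; since the average marginal of the $k$ elements covering everything is $\tfrac1k$, this costs about $k \cdot c \cdot k = ck^2$, leaving the principal utility close to $f(S)\cdot(1 - ck^2) = 1 - ck^2$, which we make close to $1$. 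In the ``NO'' case (Condition 2), to reach success probability close to $1$ one needs a set of size $\beta k$ with $\beta$ large (more precisely, $f(S)\le 1 - e^{-\beta}+\varepsilon$ forces $\beta \gtrsim \ln(1/(1-f(S)+\varepsilon))$); such a set has many low-marginal elements, driving the total payment up and the principal's utility down, bounded away from $1$ by a definite constant.

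The key steps, in order: (1) State the reduction precisely, fixing $c$, $\varepsilon$, and $M$ in Proposition~\ref{prop:coverage hardness} as explicit functions that make the two cases separate by more than a factor of $0.7$. (2) Lower-bound the principal's optimal utility $g(S^\star)$ in the YES case: exhibit the size-$k$ set $S$ with $f(S)=1$, bound $\sum_{i\in S} c_i/f(i\mid S\setminus\{i\})$ from above — here I will use that $\sum_{i\in S} f(i\mid S\setminus\{i\}) \le f(S) = 1$ is the wrong direction, so instead I bound each marginal below; the subtlety is that individual marginals can be tiny, so I may need to argue that one can always find a good incentivized set by a greedy/peeling argument, or restrict to sub-instances, or use that a size-$k$ cover can be chosen to be a \emph{minimal} cover whose every element has marginal $\ge 1/k$... (3) Upper-bound $g(S^\star)$ in the NO case: for any set $S$, split into the contribution $f(S)$ and the payment term; if $f(S)$ is close to $1$ then $|S|$ is large by Condition 2, and then the payment term is large because $\sum_{i\in S} c_i / f(i\mid S\setminus\{i\}) \ge c |S| $ (using $f(i\mid S\setminus\{i\}) \le f(\{i\}) = 1/k$ by submodularity, so each summand is $\ge ck$), which for $|S| = \beta k$ gives payment $\ge c\beta k^2$, and optimizing the tradeoff $f(S)(1 - c\beta k^2)$ over the admissible $(\beta, f(S))$ pairs yields a bound strictly below $0.7 \cdot (\text{YES-case value})$. (4) Conclude: a $0.7$-approximation algorithm would let us distinguish the two cases in polynomial time, contradicting Proposition~\ref{prop:coverage hardness} unless P=NP; note that $f$ admits a succinct representation so value queries are computable in polynomial time, making this a genuine computational hardness result.

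The main obstacle I anticipate is step (2): controlling the payment term in the YES case from above. The naive bound $\alpha_i = c_i / f(i\mid S\setminus\{i\})$ can blow up if some element of the covering set $S$ has a very small marginal contribution. To handle this I would argue that the principal need not incentivize a set achieving $f(S)=1$ exactly; it suffices to incentivize a set achieving $f(S)$ close to $1$ while keeping all marginals bounded below by, say, $\Omega(1/k^2)$ or so. Concretely, take the size-$k$ cover, order its elements, and greedily keep adding elements only while their marginal is not too small; if at some point all remaining marginals are tiny, then the partial union already has value close to $1$ (since the total remaining mass to cover is small), so we can stop with a set that is not much larger than $k$, has value close to $1$, and has every marginal bounded below — giving a clean upper bound on the payment and hence a principal utility close to $1$. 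Getting the constants in this peeling argument to line up with the factor $0.7$ (choosing $\varepsilon$ small, $M$ large enough, and $c$ appropriately) is the delicate quantitative core of the proof.
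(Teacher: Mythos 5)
Your overall route matches the paper's exactly: reduce from Proposition~\ref{prop:coverage hardness} with uniform costs $c_i = \Theta(1/k^2)$, show that in the YES case the principal gets utility close to $1/2$ by incentivizing the size-$k$ covering set, and show that in the NO case the trade-off $f(S)\bigl(1-\sum_{i\in S}c_i/f(i\mid S\setminus\{i\})\bigr)$ is bounded away from that for every $S$. The paper picks $M=2$, $\varepsilon=0.01$, $c_i=\tfrac{1}{2k^2}$, gets exactly $g=1/2$ in the YES case, and shows $g(S)<0.35$ in the NO case via the elementary inequality $(1-x/2)(1-e^{-x}+0.01)<0.35$ for all $x\ge0$, giving the ratio $0.7$.

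The ``main obstacle'' you flag in step (2), however, is not an obstacle at all, and the peeling argument you sketch to work around it is unnecessary and would only weaken your constants. You have not used the full strength of the promise in Proposition~\ref{prop:coverage hardness}: it guarantees $f(\{i\})=\tfrac1k$ for \emph{every} $i\in A$. Given that, if $S$ has size $k$ and $f(S)=1$, every marginal is forced to equal exactly $\tfrac1k$: by subadditivity $f(S\setminus\{i\})\le (k-1)\cdot\tfrac1k$, so $f(i\mid S\setminus\{i\})=f(S)-f(S\setminus\{i\})\ge \tfrac1k$; and by submodularity $f(i\mid S\setminus\{i\})\le f(\{i\})=\tfrac1k$. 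There are no ``tiny marginals'' to worry about, no minimal cover to select, no peeling. With this sandwich in hand, the YES-case computation is a one-liner: $g(S)=\bigl(1-k\cdot k\cdot\tfrac{1}{2k^2}\bigr)\cdot 1=\tfrac12$. So your step (2), as proposed, is the one place where you would get stuck (or at least lose constants), and the fix is already sitting in the hypothesis you invoked. The rest of your sketch, including the NO-case split into $|S|\ge 2k$ (payment exceeds $1$, so $g\le0$) versus $|S|<Mk=2k$ (apply Condition~2), is essentially what the paper does.
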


Before presenting the proof of this theorem, we recall some of the details of the multi-agent model (see Section~\ref{sec:model}). 
In this setting, the principal interacts with a set $A$ of $n$ agents. Every agent $i \in A$ decides whether to exert effort (at cost $c_i \ge 0$) or not. 
The success probability function $f:2^A\rightarrow [0,1]$ maps 
every set of agents who exert effort to a success probability of the project.
A contract is a vector $\alpha = (\alpha_1, \dots, \alpha_n)$ specifying the payment to each agent upon success. 
The principal seeks to find the optimal set $S$ of agents to exert effort, which is equivalent to maximizing the function
\[
g(S) = \left(1-\sum_{i\in S} \frac{c_i}{f(i \mid S\setminus \{i\})}\right)f(S).
\]

\begin{proof}[Proof of Theorem~\ref{thm:multi-agent hardness}]
Our proof relies on Proposition~\ref{prop:coverage hardness}, by creating a reduction with the following properties:
Given as input $(k, f)$ where $f: 2^A\rightarrow [0,1]$ is a coverage function that satisfies one of the conditions of Proposition~\ref{prop:coverage hardness}, we construct an instance of the multi-agent contract problem with the following separation: 
Under condition (1) in the proposition, the principal's utility is at least $0.5$, whereas under condition (2), the principal's utility is strictly less than $0.35$.

Suppose we have a $0.7$-approximation algorithm for our contract problem, and let $(\alpha, S)$ be the output (i.e., contract) of this algorithm on our reduction-generated instance. If the principal's utility under $(\alpha, S)$ is greater than or equal to $0.35$ (note that computing the principal's utility is easy in this model), then we must be under condition (1) of Proposition~\ref{prop:coverage hardness}.
Otherwise (if the principal's utility is less than $0.35$), then we must be under condition (2), since for condition (1) to hold, we should get at least $0.7\cdot 0.5 = 0.35$.
Since the construction of our reduction is polynomial, this proves the theorem.

It remains to construct an instance that admits the separation above.

Given $k, f: 2^A\rightarrow [0,1]$ from Proposition~\ref{prop:coverage hardness} with $M=2, \varepsilon = 0.01$, we construct an instance of the multi-agent contract problem, where $A$ is the set of agents, $f$ is the success probability function, and $c_i = \frac{1}{2k^2}$ for every agent $i \in A$. We next establish the desired separation.
    
    \paragraph{Case 1: $f$ satisfies condition (1) from Proposition~\ref{prop:coverage hardness}.} Take set $S$ per condition (1) of Proposition~\ref{prop:coverage hardness} (i.e., $|S|=k$ and $f(S)=1$). We claim that for any $i\in S$, $f(i \mid S \setminus \{i\}) = \frac{1}{k}$. Indeed, 
    \[
    f(i \mid S \setminus \{i\}) = f(S) - f(S \setminus \{i\}) \ge 1  - (k-1)\frac{1}{k} = \frac{1}{k},
    \]
    where the inequality follows by $f(S)=1$ and submodularity of $f$. In the other direction we have $f(i\mid S \setminus \{i\}) \le f(\{i\}) = \frac{1}{k}$, giving $f(i\mid S \setminus \{i\}) = \frac{1}{k}$.
    We now have 
    \[
    \begin{split}
    g(S) &= \left(1-\sum_{i\in S}\frac{c_i}{f(i \mid S\setminus \{i\})}\right)f(S) = \left(1-\sum_{i\in S}k\cdot c_i\right)f(S) =\left(1-\frac{|S|}{2k}\right) \cdot 1=\frac{1}{2}.
    \end{split}
    \]
    We get that the principal's utility under the optimal contract is at least $0.5$, as desired.
    \footnote{We note that in this case, incentivizing $S$ is the optimal contract, since to incentivize each  agent the principal needs to pay at least $\frac{1}{2k}$, and since the success probability function  satisfies, for any $T\subseteq A$, $f(T) \le \frac{|T|}{k}$, we get that $g(T) \le (1-\frac{|T|}{2k}) \frac{|T|}{k}\le \frac{1}{2}$.}
    
    \paragraph{Case 2: $f$ satisfies condition (2) from Proposition~\ref{prop:coverage hardness}.} 
    Let $S\subseteq A$ be an arbitrary set. We show that $g(S) < 0.35$. 
    
    If $|S| \ge 2k$, we have 
    \[
    \begin{split}
    g(S) &= \left(1-\sum_{i\in S}\frac{c_i}{f(i \mid S\setminus \{i\})}\right)f(S) \le \left(1-\sum_{i\in S}\frac{c_i}{f(\{i\})}\right)f(S) = \left(1-\frac{|S|}{2k}\right) f(S) \le 0,
    \end{split}
    \] 
    where the first inequality is by submodularity of $f$, and the last inequality is by $|S| \ge 2k$.
    
    If $|S| < 2k = Mk$, we can apply condition (2) and get
    \[
    \begin{split}
    g(S) &\le \left(1-\sum_{i\in S}\frac{c_i}{f(\{i\})}\right)f(S) \le \left(1-\frac{|S|}{2k}\right)\left(1-e^{-\frac{|S|}{k}}+0.01\right) < 0.35,
    \end{split}
    \]
    where the third inequality is since $(1-\frac{x}{2})(1-e^{-x}+0.01) < 0.35$ for all $x$. This concludes the proof.
\end{proof}

In Appendix~\ref{app:approach}, we discuss the differences in our approach for hardness results, and the approach of \citet{multi-agent-contracts}, and explain why the hardness results in \citep{multi-agent-contracts} cannot be extended to show hardness of getting a PTAS for submodular success probability functions.

\section{Hardness of Approximation for Multi-Agent, XOS $f$}
In this section, we show that, in the multi-agent model, one cannot approximate the optimal contract under XOS success probability functions within a constant factor, with access to a value oracle. More formally, we prove the following theorem:

\begin{theorem}
    In the multi-agent model, with XOS success probability functions, no (randomized) algorithm that makes poly-many value queries can approximate the optimal contract (with high probability) to within a factor greater than $4n^{-1/6}$.
\end{theorem}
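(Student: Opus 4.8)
The plan is to prove this lower bound information‑theoretically, by hiding one good contract inside a random instance so that no algorithm making polynomially many value queries can locate it with non‑negligible probability. Set $m=\lceil n^{1/3}\rceil$ and fix a threshold $s_0$ with $2m^{3/2}<s_0<n/(e\sqrt m)$ (e.g.\ $s_0=\lceil n^{3/4}\rceil$ for $n$ large). Draw a ``good'' set $G\subseteq A=[n]$ uniformly at random among all $m$‑subsets, let every cost be $c_i=\frac{1}{2m^2}$, and define
\[
f(S)=\max\!\left(\frac{1}{\sqrt m}\cdot\mathbb{1}[S\neq\emptyset],\ \frac{|S\cap G|}{m},\ \min\!\Big(1,\frac{|S|}{s_0}\Big)\right).
\]
Each of the three terms is XOS — a unit‑demand (hence submodular) function, an additive function, and a budget‑additive (hence submodular) function — so $f$ is XOS and $f\colon 2^{A}\to[0,1]$. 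The easy direction is to lower‑bound the optimum: the contract paying $\frac1{2m}$ to each agent in $G$ and $0$ to the rest incentivizes $G$, and since $f(G)=1$ and $f(i\mid G\setminus\{i\})=1-(1-\tfrac1m)=\tfrac1m$ for every $i\in G$, the total payment is $\tfrac12$, so $g(G)=\tfrac12$ and hence $g(S^{\star})\ge\tfrac12$.

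The core is a ``hiding'' lemma: for every fixed $S\subseteq A$, the value $f(S)$ is, with overwhelming probability over $G$, independent of $G$. Indeed, if $|S|\ge s_0$ then $f(S)=1$ deterministically; and if $|S|<s_0$ then $|S\cap G|$ is hypergeometric with mean $\tfrac{|S|m}{n}<\tfrac{s_0m}{n}$, and a union bound over $\sqrt m$‑subsets of $S$ gives $\Pr[\,|S\cap G|>\sqrt m\,]\le\big(\tfrac{e\,s_0\sqrt m}{n}\big)^{\sqrt m}$, which is superpolynomially small by the choice of $s_0$; off this event $\tfrac{|S\cap G|}{m}\le\tfrac1{\sqrt m}$, so $f(S)=\max(\tfrac1{\sqrt m},\tfrac{|S|}{s_0})$, a function of $|S|$ alone. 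I would then run the standard simulation argument. Fix a deterministic algorithm making $q=\mathrm{poly}(n)$ queries; let $\pi_0$ be the transcript it produces when every query is answered by its $G$‑independent ``default'' value and let $\hat S$ be the set it outputs along $\pi_0$ — both are fixed, independent of $G$. A union bound over the $q$ queries of $\pi_0$ and over $\hat S$ shows that, with probability $1-o(1)$ over $G$, the real execution coincides with $\pi_0$ (so the algorithm outputs $\hat S$) and $|\hat S\cap G|\le\sqrt m$ whenever $|\hat S|<s_0$. On this event I bound $g(\hat S)$: if $|\hat S|>2m^{3/2}$ then subadditivity of $f$ gives $f(i\mid\hat S\setminus\{i\})\le f(\{i\})=\tfrac1{\sqrt m}$, so $\sum_{i\in\hat S}\tfrac{c_i}{f(i\mid\hat S\setminus\{i\})}\ge\tfrac{|\hat S|}{2m^{3/2}}>1$ and $g(\hat S)\le 0$; otherwise $|\hat S|\le 2m^{3/2}<s_0$, so $|\hat S\cap G|\le\sqrt m$, whence $\tfrac{|\hat S\cap G|}{m}\le\tfrac1{\sqrt m}$ and $\tfrac{|\hat S|}{s_0}\le\tfrac{2m^{3/2}}{s_0}\le\tfrac1{\sqrt m}$ (for $n$ large), giving $f(\hat S)=\tfrac1{\sqrt m}$ and $g(\hat S)\le f(\hat S)=n^{-1/6}$. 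In all cases $g(\hat S)\le n^{-1/6}<2n^{-1/6}\le 4n^{-1/6}\,g(S^{\star})$, so the algorithm is not a $(4n^{-1/6})$‑approximation with probability $1-o(1)$ over $G$. Finally, a randomized algorithm is a mixture of deterministic ones, each failing with probability $1-o(1)$ over $G$; averaging yields an instance on which it fails to give a $(4n^{-1/6})$‑approximation with high probability.

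The delicate point — the expected main obstacle — is designing $f$ so that $G$ is simultaneously valuable (the optimum is $\Omega(1)$) and hidden from value queries. Any single additive ``bonus'' term that is large on $G$ necessarily assigns per‑element value $\Omega(1/m)$, which a singleton query would instantly expose; the flat term $\tfrac1{\sqrt m}\cdot\mathbb{1}[S\neq\emptyset]$ masks this, dominating the bonus on every set whose intersection with $G$ is below $\sqrt m$ — and, by concentration, that includes every set of size below $s_0$ an algorithm can afford to encounter. Symmetrically, a large set $S$ automatically has $|S\cap G|\approx\tfrac{|S|m}{n}$, which for large enough $|S|$ makes the bonus active and hence (via incremental queries) leaky; the budget‑additive term $\min(1,\tfrac{|S|}{s_0})$ saturates $f$ at $1$ before that happens. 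What remains is a band of set sizes in which the bonus ``shows through'', but this band is simultaneously too large to hit with an informative query (by concentration) and too expensive to incentivize (by the choice of $c_i$). Getting the three scales — $m=n^{1/3}$, the intersection threshold $\sqrt m$, and $s_0\approx n^{3/4}$, obeying $2m^{3/2}<s_0<n/(e\sqrt m)$ — to line up is essentially the only real constraint; the rest is union bounds and the definition of $g$.
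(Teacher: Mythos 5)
Your proof is correct and follows essentially the same approach as the paper: hide a random good set $G$ of size $m=n^{1/3}$ inside an XOS function that is the maximum of the ``bonus'' term $|S\cap G|/m$ and two $G$-independent masking terms (a flat $1/\sqrt m$ term and a term growing linearly in $|S|$), set uniform costs so incentivizing more than $\Theta(m)$ agents is unprofitable, show via concentration on $|S\cap G|$ that every query the algorithm can afford to make is answered by a $G$-independent default with overwhelming probability, and finish with the standard deterministic-simulation-plus-Yao argument. Your differences --- a budget-additive cap $\min(1,|S|/s_0)$ in place of the paper's $|S|/\sqrt m$ term, and a subset-union bound in place of the paper's hypergeometric/Chernoff bound --- are cosmetic (and note the effective constraint you need is $s_0\ge 2m^2$ rather than the $s_0>2m^{3/2}$ you state, though your choice $s_0=\lceil n^{3/4}\rceil$ satisfies both for large $n$).
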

To prove this theorem, we define a probability distribution over XOS success probability functions and show an upper bound on the expected performance of any deterministic algorithm. By Yao's principle, this gives us an upper bound on the worst-case performance of a randomized algorithm, thus proving the theorem.

We note that unlike our other results, which are computational hardness results, this result is information theoretical, as we rely on the algorithm ``not knowing'' something about the success probability function.

\begin{proof}
For any $n$, we uniformly at random sample a subset $G\subseteq A=[n]$ of the agents of size $|G| = m=n^{1/3}$, and build the following instance:
\begin{enumerate}
    \item Our success probability function is $f_G: 2^A \rightarrow [0,1]$, and for any non-empty set $S\subseteq A$:
    \[
    f_G(S) = \frac{1}{n}\max\left( | S \cap G |, \sqrt{m}, \frac{|S|}{\sqrt{m}}\right).
    \]
    \item For any $i\in A$ the cost associated with agent $i$ is $c_i = \frac{1}{2m\cdot n}$.
\end{enumerate}
Note that $f_G$ is clearly XOS, as it is the maximum of three XOS functions (two of them are additive and one of them is unit demand), and the family of XOS is closed under maximum (in contrast to submodular).

We start by noting that
\[
g(G) = \left(1-\sum_{i\in G}\frac{c_i}{f_G(i \mid G \setminus \{i\})}\right)f_G(G) = \left(1-m\cdot \frac{n}{2m\cdot n}\right) \cdot \frac{m}{n} = \frac{m}{2n}.
\]

We call a value query on set $S$ \textit{successful} if $|S| \le m^{1.5}$ and $|S \cap G| > \sqrt{m}$, and \textit{unsuccessful} otherwise. 

We start by noting that the probability of any specific value query being successful is negligible:
\begin{lemma} \label{lemma: successful negligible}
    Let $S\subseteq A$. For $n\ge 512$, it holds that 
    \[
    Pr_G[S \text{ is successful}] \le e^{-\frac{\sqrt{m}}{4}}.
    \]
\end{lemma}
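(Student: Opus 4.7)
The plan is to reduce the claim to a straightforward hypergeometric tail bound. Observe first that the probability is zero whenever $|S|>m^{1.5}$ (the definition of successful requires $|S|\le m^{1.5}$) or $|S|\le \sqrt{m}$ (since $|S\cap G|\le |S|$ then forbids $|S\cap G|>\sqrt{m}$), so it suffices to handle $\sqrt{m}<|S|\le m^{1.5}$ and bound $\Pr_G[|S\cap G|>\sqrt{m}]$, where $|S\cap G|$ is a hypergeometric random variable (drawing the $m$ elements of $G$ from $[n]$ and counting how many land in the fixed set $S$).

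The main step is a union bound over $k$-subsets of $S$, which for any valid integer $k$ yields
\[
\Pr_G[|S\cap G|\ge k] \;\le\; \binom{|S|}{k}\cdot\frac{\binom{n-k}{m-k}}{\binom{n}{m}} \;\le\; \binom{|S|}{k}\Bigl(\frac{m}{n}\Bigr)^{\!k} \;\le\; \Bigl(\frac{e\,|S|\,m}{k\,n}\Bigr)^{\!k},
\]
using $\binom{n-k}{m-k}/\binom{n}{m} = \prod_{i=0}^{k-1}(m-i)/(n-i)\le (m/n)^k$ and the standard bound $\binom{|S|}{k}\le (e|S|/k)^k$. I would then choose $k=\lceil\sqrt{m}\rceil + 1\ge \sqrt{m}$ so that $|S\cap G|>\sqrt{m}$ is contained in $|S\cap G|\ge k$, plug in the worst case $|S|=m^{1.5}$, and use the identity $n=m^{3}$. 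The base of the exponential collapses to $e\,m^{1.5}/n = e/m^{1.5}$, which is strictly less than $1$ for $m\ge 2$, so the bound is decreasing in $k$ and yields $\Pr_G[S \text{ successful}] \le (e/m^{1.5})^{\sqrt{m}}$.

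The final step is a log comparison: $(e/m^{1.5})^{\sqrt{m}}\le e^{-\sqrt{m}/4}$ is equivalent, after taking logs and dividing by $\sqrt{m}$, to $1 - \tfrac{3}{2}\ln m \le -\tfrac{1}{4}$, i.e.\ $\ln m\ge 5/6$. The hypothesis $n\ge 512$ forces $m\ge 8$, so $\ln m \ge \ln 8 > 5/6$ and the inequality holds with comfortable slack. I do not expect any serious obstacle: the only care needed is in dispatching the two trivial regimes for $|S|$ and in choosing the integer threshold $k$ so that the strict inequality $>\sqrt{m}$ is captured; everything else is a short calculation.
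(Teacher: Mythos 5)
Your route is genuinely different from the paper's, and correct modulo two small slips. The paper reduces to $|S|=m^{1.5}$ by monotonicity, stochastically dominates the hypergeometric tail $HG(n,m^{1.5},m)$ by the binomial $Bin(m,2m^{1.5}/n)$, and finishes with a multiplicative Chernoff bound; you instead apply an elementary union bound over $k$-subsets of $S$, using $\Pr_G[|S\cap G|\ge k]\le \binom{|S|}{k}\binom{n-k}{m-k}/\binom{n}{m}\le \bigl(e|S|m/(kn)\bigr)^k$, which avoids both the hypergeometric-to-binomial comparison and the Chernoff step at the cost of a more explicit calculation. Two corrections: (i) with $|S|=m^{1.5}$ and $n=m^3$, the base is $e|S|m/(kn)=e/(k\sqrt m)\le e/m$ once $k\ge\sqrt m$, not $e/m^{1.5}$ (you dropped the factor $m/k$); the final comparison then reduces to $\ln m\ge 5/4$ rather than $5/6$, which still holds with room to spare since $m\ge 8$. (ii) To ensure $\{|S\cap G|>\sqrt m\}\subseteq\{|S\cap G|\ge k\}$ you must take $k=\lfloor\sqrt m\rfloor+1$, the least integer exceeding $\sqrt m$; your choice $k=\lceil\sqrt m\rceil+1$ overshoots by one whenever $\sqrt m$ is not an integer, breaking the containment and hence the first inequality in your chain. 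After these fixes the argument goes through and is, if anything, quantitatively stronger than the paper's bound.
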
 
\begin{proof}
    The probability of $S$ being successful when $G$ is chosen uniformly at random from all subsets of $A$ of size $m$ is $$ \Pr_G[|S \cap G| > \sqrt{m} \wedge |S| \leq m^{1.5}]$$ which is monotone in the size of $S$ up to size $m^{1.5}$. For $|S|=m^{1.5}$, $|S \cap G|$ is distributed as a hyper-geometric random variable $HG(n,m^{1.5},m)$.  Therefore 
    \[
    \begin{split}
    Pr_G[S \text{ is successful}] &\le Pr_{X\sim HG(n,m^{1.5},m)}[ X > \sqrt{m}] \le Pr_{X\sim Bin(m,\frac{m^{1.5}}{n-m})}[ X > \sqrt{m}] \\
    &\le Pr_{X\sim Bin(m,\frac{2m^{1.5}}{n})}[ X > \sqrt{m}],
    \end{split}
    \]
    where the second inequality is because the probability of success in each one of the $m$ draws (without replacement) in the hyper-geometric distribution $HG(n, m^{1.5}, m)$ is always at most $\frac{m^{1.5}}{n-m}$, and the last inequality is because for $n\ge 512$ it holds that $\frac{m^{1.5}}{n-m}\le \frac{2m^{1.5}}{n}$.
    
    When denoting $\mu = E_{X\sim Bin(m, \frac{2m^{1.5}}{n})}[X]= \frac{2m^{2.5}}{n}=\frac{2}{\sqrt{m}}$ and $\delta = \frac{\sqrt{m}}{2\mu} = \frac{m}{4}$ we can apply the Chernoff bound and get
    \[
    \begin{split}
    \Pr_{X\sim Bin(m, \frac{2m^{1.5}}{n})}[X > \sqrt{m}] &\le \Pr_{X\sim Bin(m, \frac{2m^{1.5}}{n})}[X \ge (1+\delta) \mu ] \\
    &\le e^\frac{-\delta^2 \mu}{2+\delta} \le e^\frac{-\delta^2 \mu}{2\delta} = e^{-\frac{m^{1.5}/8}{m/2}} = e^{-\frac{\sqrt{m}}{4}}. \qedhere
    \end{split} 
    \]
\end{proof}
We now note that any set $S\subseteq A$ that is an unsuccessful value query is also a poor approximation of the optimal contract: 

\begin{lemma} \label{lemma: unsuccessful approx.}
Let $S \subseteq A$. For $n \geq 64$, if a value query on $S$ is unsuccessful, then $g(S) \leq \frac{4g(G)}{\sqrt{m}}$.
\end{lemma}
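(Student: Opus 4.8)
The plan is to bound $g(S)$ from above by analyzing the two ways in which a query $S$ can be unsuccessful, and comparing the bound to $g(G) = \frac{m}{2n}$. Recall that
\[
g(S) = \left(1 - \sum_{i \in S} \frac{c_i}{f_G(i \mid S \setminus \{i\})}\right) f_G(S),
\]
and that a trivial upper bound on the ``profit'' term is $g(S) \le f_G(S)$. Also, since the costs are uniform $c_i = \frac{1}{2mn}$ and $f_G(i \mid S \setminus\{i\}) \le f_G(\{i\}) = \frac{\sqrt{m}}{n}$ by submodularity-type monotonicity of the max (actually $f_G(\{i\}) = \frac{1}{n}\max(|\{i\}\cap G|, \sqrt{m}, \frac{1}{\sqrt m}) = \frac{\sqrt m}{n}$ since $m \ge 1$), we also get the bound
\[
g(S) \le \left(1 - \sum_{i\in S}\frac{c_i}{f_G(\{i\})}\right) f_G(S) = \left(1 - \frac{|S|}{2m}\cdot\frac{1}{\sqrt m}\cdot\sqrt m\,\right)f_G(S) \;=\; \left(1 - \frac{|S|}{2m}\right) f_G(S).
\]
(I would recompute this ratio carefully: $\frac{c_i}{f_G(\{i\})} = \frac{1/(2mn)}{\sqrt m / n} = \frac{1}{2m^{1.5}}$, so the sum is $\frac{|S|}{2m^{1.5}}$; this is the form I'd actually use.) So $g(S) \le \left(1 - \frac{|S|}{2m^{1.5}}\right) f_G(S)$.

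First I would split into cases according to why $S$ is unsuccessful. \textbf{Case A: $|S| > m^{1.5}$.} Then $1 - \frac{|S|}{2m^{1.5}} < \frac{1}{2}$, but more usefully, if $|S| \ge 2m^{1.5}$ the profit term is $\le 0$ and $g(S) \le 0 \le \frac{4g(G)}{\sqrt m}$ trivially; if $m^{1.5} < |S| < 2m^{1.5}$, then $f_G(S) = \frac{1}{n}\max(|S\cap G|, \sqrt m, \frac{|S|}{\sqrt m})$. Since $|S\cap G| \le m$ and $\frac{|S|}{\sqrt m} < \frac{2m^{1.5}}{\sqrt m} = 2m$, we get $f_G(S) \le \frac{2m}{n}$, and combining with the profit bound $\le \left(1 - \frac{|S|}{2m^{1.5}}\right) \le \left(1 - \frac{1}{2}\right) = \frac12$... that only gives $g(S) \le \frac{m}{n} = 2g(G)$, which is not small enough. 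So here I'd be more careful: write $|S| = t\, m^{1.5}$ with $1 < t < 2$; then the profit term is $1 - t/2$ and $f_G(S) \le \frac{1}{n}\max(m, t\sqrt m \cdot m/m)$... I need $\frac{|S|}{\sqrt m} = t m$, so $f_G(S) \le \frac{tm}{n}$ (since $tm \ge m \ge |S\cap G|$), giving $g(S) \le \frac{tm(1 - t/2)}{n} = \frac{m}{n}\cdot t(1-t/2)$. The function $t(1-t/2) = t - t^2/2$ is maximized at $t=1$ on $[1,2]$ with value $1/2$, so $g(S) \le \frac{m}{2n} = g(G)$. Still not $\le \frac{4g(G)}{\sqrt m}$! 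This means Case A needs the query to also fail the \emph{other} condition, or I'm missing something — I should re-read: ``unsuccessful'' means NOT ($|S|\le m^{1.5}$ AND $|S\cap G|>\sqrt m$), i.e. $|S| > m^{1.5}$ OR $|S\cap G| \le \sqrt m$. So Case A alone ($|S|>m^{1.5}$, regardless of intersection) must yield the bound, and my computation above suggests it doesn't — so the resolution must be that when $|S| > m^{1.5}$ one uses the \emph{sharper} marginal-cost bound rather than $f_G(\{i\})$. The key realization: for $i \in S$ with $|S|$ large, the marginal $f_G(i\mid S\setminus\{i\})$ is governed by the $\frac{|S|}{\sqrt m}$ term, whose marginal is only $\frac{1}{n\sqrt m}$, much smaller than $\frac{\sqrt m}{n}$. \textbf{This is the main obstacle}: correctly lower-bounding the marginal contributions, hence upper-bounding the payment term, when $S$ is large — I expect that using $f_G(i \mid S\setminus\{i\}) \le \frac{1}{n\sqrt m}$ forces $\sum_i \frac{c_i}{f_G(i\mid\cdot)} \ge \frac{|S|\sqrt m}{2mn}\cdot n = \frac{|S|}{2\sqrt m} \ge \frac{m^{1.5}}{2\sqrt m} = \frac{m}{2}$, making the profit term hugely negative so $g(S) \le 0$. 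I would need to justify that the max in $f_G$ is ``really'' attained by the $|S|/\sqrt m$ term in the relevant regime, or more robustly bound the marginal directly.

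\textbf{Case B: $|S| \le m^{1.5}$ but $|S \cap G| \le \sqrt m$.} Then $f_G(S) = \frac{1}{n}\max(\sqrt m, \frac{|S|}{\sqrt m})$ since the intersection term is dominated. If $|S| \le m$, then $\frac{|S|}{\sqrt m} \le \sqrt m$, so $f_G(S) = \frac{\sqrt m}{n}$; but then incentivizing even one agent costs more than the total available value... more precisely, using $g(S) \le f_G(S) = \frac{\sqrt m}{n} = \frac{2g(G)}{\sqrt m}$ — wait $g(G) = \frac{m}{2n}$, so $\frac{2g(G)}{\sqrt m} = \frac{m}{n\sqrt m} = \frac{\sqrt m}{n}$. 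Good, so $g(S) \le \frac{\sqrt m}{n} = \frac{2g(G)}{\sqrt m} \le \frac{4g(G)}{\sqrt m}$. If $m < |S| \le m^{1.5}$, then $f_G(S) = \frac{|S|}{n\sqrt m}$, and using the profit bound $g(S) \le \left(1 - \frac{|S|}{2m^{1.5}}\right)\frac{|S|}{n\sqrt m}$; writing $|S| = s$, this is $\le \frac{s}{n\sqrt m}$, maximized near $s = m^{1.5}$ giving $\le \frac{m^{1.5}}{n\sqrt m} = \frac{m}{n} = 2g(G)$ — again too big by the uniform-cost bound, so I'd again need the sharper marginal bound here, or note that actually when $|S\cap G| \le \sqrt m$ and $|S| > m$, the marginals for most agents come from the $|S|/\sqrt m$ term (marginal $\frac{1}{n\sqrt m}$), inflating the payment. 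So I would: (i) observe $f_G(i\mid S\setminus\{i\}) = f_G(S) - f_G(S\setminus\{i\}) \le \frac{1}{n}\bigl[\max(|S\cap G|,\sqrt m, \tfrac{|S|}{\sqrt m}) - \max(|S\cap G| - 1, \sqrt m, \tfrac{|S|-1}{\sqrt m})\bigr]$, (ii) bound this difference by $\max\bigl(\mathbb{1}[i\in G], \tfrac{1}{\sqrt m}\bigr)/n$, and since $|S\cap G| \le \sqrt m \ll |S|/\sqrt m$ at most $\sqrt m$ agents have marginal as large as $\frac{1}{n}$ while the rest have marginal $\le \frac{1}{n\sqrt m}$... actually even simpler: $\sum_{i\in S}\frac{c_i}{f_G(i\mid S\setminus\{i\})} \ge \frac{1}{2mn}\sum_{i\in S} n\cdot\min\bigl(\ldots\bigr)$; I'd push the bookkeeping so that the payment term exceeds $1$ whenever $|S|$ is too large, killing $g(S)$.

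In summary: the skeleton is (1) the crude bound $g(S) \le f_G(S)$ handles small $S$; (2) $g(G) = \frac{m}{2n}$ is the target scale, so we need $g(S) \le \frac{2m}{n\sqrt m} = \frac{2\sqrt m}{n}$... wait, $\frac{4g(G)}{\sqrt m} = \frac{4}{\sqrt m}\cdot\frac{m}{2n} = \frac{2\sqrt m}{n}$, i.e. we need $f_G(S) \le \frac{2\sqrt m}{n}$ or the payment term $\le 0$; (3) for $S$ unsuccessful, either $f_G(S) = \frac{\sqrt m}{n} \le \frac{2\sqrt m}{n}$ (when $|S\cap G|\le\sqrt m$ and $|S|\le m$) and we're done, or $f_G(S)$ is driven by the $|S|/\sqrt m$ term with $|S|$ large, in which case I show the marginal contributions are so small that the payment term wipes out the profit. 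The hard part is a clean, uniform treatment of step (3)'s second sub-case — carefully lower-bounding $\sum_{i\in S} c_i/f_G(i\mid S\setminus\{i\})$ via the small marginals of the $|S|/\sqrt m$ branch — and checking the constant ``$4$'' survives (the $n \ge 64$ hypothesis presumably absorbs low-order slack, e.g. $\sqrt m \ge 4$).
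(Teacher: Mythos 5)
Your proof plan identifies the right difficulties but stops short of resolving them, and in fact you were circling around a key simplification that the paper uses to avoid all the case analysis you are struggling with. The bound you chose, $f_G(i\mid S\setminus\{i\}) \le f_G(\{i\}) = \frac{\sqrt m}{n}$ (from subadditivity), is too weak: it only shows $g(S) \le 0$ for $|S|\ge 2m^{1.5}$, which is why your Case A and the $|S|>m$ part of Case B refuse to close. The paper instead observes the \emph{universal} marginal bound
\[
f_G(i\mid S\setminus\{i\}) \le \frac{1}{n} \quad\text{for every } S\subseteq A, \ i\in S,
\]
which holds because each of the three arguments of the max in $f_G$ increases by at most $1$ when a single element is added ($|S\cap G|$ by at most $1$, $\sqrt m$ by $0$, $|S|/\sqrt m$ by $1/\sqrt m$). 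This gives $\frac{c_i}{f_G(i\mid S\setminus\{i\})}\ge \frac{1}{2m}$ for every $i\in S$, hence $g(S)\le 0$ already whenever $|S|\ge 2m$. Since $n\ge 64$ forces $m\ge 4$ and thus $2m\le m^{1.5}$, the remaining case $|S|<2m\le m^{1.5}$ automatically falls under the ``$|S\cap G|\le\sqrt m$'' branch of unsuccessfulness, so $f_G(S)=\frac{1}{n}\max(\sqrt m,|S|/\sqrt m)\le \frac{2\sqrt m}{n}=\frac{4g(G)}{\sqrt m}$ and $g(S)\le f_G(S)$ finishes.

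The genuine gap in your write-up is therefore that neither Case A nor the $|S|>m$ subcase of Case B is actually proved: you correctly diagnose that the marginal must be bounded by something like $\frac{1}{n\sqrt m}$ when $|S|$ is large, but you leave this as ``I expect that\ldots'' and ``I would need to justify\ldots''. The conditional bound $\frac{1}{n\sqrt m}$ would require checking which branch of the max dominates for both $S$ and $S\setminus\{i\}$, and separately handling agents in $G$ versus not, exactly the bookkeeping you flag as messy. The paper's $\frac{1}{n}$ bound sidesteps all of this because it is unconditional; the price is only that the cutoff for $g(S)\le 0$ moves from $2m^{1.5}$ to $2m$, but that is fine because $2m\le m^{1.5}$ under the hypothesis $n\ge 64$ — which, you will notice, is precisely what that hypothesis is there for (not merely to absorb ``low-order slack'' as you guessed).
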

\begin{proof}
    We start by noting that for any $S \subseteq A$ and $i\in S$, it holds that $f_G(i \mid S \setminus \{i\})\le \frac{1}{n}$, which implies $\frac{c_i}{f_G(i \mid S \setminus \{i\})} = \frac{1/(2mn)}{f_G(i \mid S \setminus \{i\})} \geq \frac{1}{2m}$.
    
    Now, if $|S| \geq 2m$, then $g(S) = \left(1-\sum_{i\in S}\frac{c_i}{f_G(i \mid S \setminus \{i\})}\right)f(S) \le \left(1-|S|\cdot \frac{1}{2m}\right)f(S) \le 0$, as needed.
    
    Otherwise, since $n\ge 64$, we have $|S| < 2m \le m^{\frac{3}{2}}$, and by our definition of an unsuccessful value query,  it holds that $|S \cap G| \le \sqrt{m}$, meaning $g(S) \le f_G(S) \le \frac{2\sqrt{m}}{n} = \frac{4g(G)}{\sqrt{m}}$, as needed.
\end{proof}

From Lemma~\ref{lemma: unsuccessful approx.}, if we assume without loss of generality that an algorithm value queries the set $S$ which it returns as a contract, we can say that an algorithm with no successful value queries achieves at most a $\frac{4}{\sqrt{m}}$-approximation. We therefore conclude the proof by showing that the probability of any algorithm that makes a polynomial number of value queries having a successful query is negligible.

\begin{lemma}
    For $n \ge 512$, if a determinstic algorithm makes at most $k$ value queries, the probability of at least one query being successful is at most $k \cdot e^{-\frac{\sqrt{m}}{4}}$.
\end{lemma}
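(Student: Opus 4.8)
The plan is to use a union bound over the (at most $k$) value queries made by the deterministic algorithm, combined with the per-query bound from Lemma~\ref{lemma: successful negligible}. The subtlety is that the sequence of queries made by the algorithm is \emph{adaptive}: the set queried at step $j$ depends on the answers $f_G(S_1),\dots,f_G(S_{j-1})$ to the previous queries, and those answers depend on the random set $G$. So the query sets are themselves random variables correlated with $G$, and one cannot directly apply Lemma~\ref{lemma: successful negligible} to ``$S_j$'' as if it were a fixed set.

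The way I would handle this is to fix the randomness of $G$ and think of the algorithm's execution as a path down a decision tree whose nodes are labeled by query sets and whose branches are labeled by possible answers. For each fixed $G$, the algorithm follows one root-to-leaf path determined by $G$, making a deterministic sequence of at most $k$ queries. The key observation is: if \emph{some} query along this path is successful, then in particular the \emph{first} successful query occurs at some step $j$, and the set $S_j$ queried at that step is a \emph{deterministic function of the answers to queries $1,\dots,j-1$} — all of which were unsuccessful. Since an unsuccessful query on a set $S$ returns an answer that does not depend on $|S\cap G|$ in the relevant regime (it returns $\tfrac1n\max(\sqrt m, |S|/\sqrt m)$, a quantity independent of $G$), the answers to the first $j-1$ queries carry no information about $G$, hence $S_j$ is in fact a \emph{fixed} set, not depending on $G$ at all. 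Therefore Lemma~\ref{lemma: successful negligible} applies to $S_j$, and $\Pr_G[S_j\text{ successful}]\le e^{-\sqrt m/4}$.

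More precisely, I would argue as follows. For $j=1,\dots,k$, let $\mathcal{E}_j$ be the event that the $j$-th query is successful but queries $1,\dots,j-1$ are all unsuccessful. The events $\mathcal{E}_1,\dots,\mathcal{E}_k$ are disjoint, and their union is exactly the event ``at least one query is successful.'' On the complement of $\mathcal{E}_1\cup\dots\cup\mathcal{E}_{j-1}$ (in particular, before $\mathcal{E}_j$ can occur), queries $1,\dots,j-1$ were unsuccessful, so their returned values are determined and independent of $G$; thus there is a fixed set $T_j\subseteq A$ such that $\mathcal{E}_j$ occurs only if the algorithm's $j$-th query equals $T_j$ and $T_j$ is successful. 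Hence $\Pr_G[\mathcal{E}_j]\le \Pr_G[T_j\text{ is successful}]\le e^{-\sqrt m/4}$ by Lemma~\ref{lemma: successful negligible} (using $n\ge 512$). Summing over $j$,
\[
\Pr_G[\text{some query is successful}] = \sum_{j=1}^k \Pr_G[\mathcal{E}_j] \le k\cdot e^{-\frac{\sqrt m}{4}},
\]
as claimed. The main obstacle, and the only place that needs care, is the adaptivity argument: making rigorous the claim that conditioning on all earlier queries being unsuccessful pins down the next query set to a value independent of $G$. Everything else is a routine union bound.
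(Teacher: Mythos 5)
Your proof is correct and takes essentially the same approach as the paper: the key observation in both is that an unsuccessful query returns $\frac{1}{n}\max(\sqrt m, |S|/\sqrt m)$, a value independent of $G$, so the query path up to the first successful query is a fixed, non-adaptive sequence, to which Lemma~\ref{lemma: successful negligible} and a union bound apply. The paper states this via an induction showing that the actual queries $T_1,\dots,T_i$ coincide with the default-answer sequence $S_1,\dots,S_i$, while you decompose into disjoint events $\mathcal{E}_j$ (``the $j$-th query is the first successful one''); these are cosmetically different packagings of the same argument.
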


\begin{proof}
    For any $m$ such that $k \cdot e^{-\frac{\sqrt{m}}{4}} \ge 1$ we are done. Otherwise, for $m$ such that $k \cdot e^{-\frac{\sqrt{m}}{4}} < 1$, 
    let $ALG$ be a deterministic algorithm that makes at most $k$ value queries. Our first step is to show that adaptivity doesn't  help $ALG$, which allows us to apply union bound. More precisely, we show the existence of non-adaptive queries $S_1, \dots, S_\ell$ such that $\ell \le k$ and \[
    \Pr_G[ALG \text{ makes a successful query}] \le \Pr_G[\text{one of }S_1, \dots, S_\ell\text{ is a successful query}].
    \]
    Let $S_i$ be the $i$-th query that $ALG$ asks after the answers to all previous queries $S_j$ were $\frac{1}{n} \max \left(\sqrt{m}, \frac{|S_j|}{\sqrt{m}}\right)$ for all $j<i$, and let $\ell\leq k$ be the index of the last query asked in this scenario\footnote{Since $k \cdot e^{-\frac{\sqrt{m}}{4}} < 1$, the sequence of sets $S_1,\ldots,S_\ell$ is well defined since by union bound and Lemma~\ref{lemma: successful negligible} there is a positive probability that all of $S_1, \dots, S_\ell$ are unsuccessful, in which case the answer to query $S_j$ is indeed $\frac{1}{n} \max \left(\sqrt{m}, \frac{|S_j|}{\sqrt{m}}\right)$.}.
    To see  that
    \[
    \Pr_G[ALG \text{ makes a successful query}] \le \Pr_G[\text{one of }S_1, \dots, S_\ell\text{ is a successful query}],
    \]
    let $T_1, \dots, T_{\ell'}$ be the (perhaps adaptive) queries $ALG$ actually makes. We will show that if at least one of those is successful, then at least one of $S_1, \dots, S_\ell$ is successful. Assume that one of $T_1, \dots, T_{\ell'}$ is successful, and let $i$ be the lowest index such that $T_i$ is successful. If $i=1$, note that since $ALG$ is deterministic, then $S_1 = T_1$, as needed. 
    Otherwise, by definition of $i$, for any $j < i$, $T_j$ is an unsuccessful query, then the answer of $ALG$ to value query $T_j$ is $\frac{1}{n} \max \left(\sqrt{m}, \frac{|T_j|}{\sqrt{m}}\right)$, which by induction gives us for any $j \le i$, $T_j = S_j$, meaning $S_i$ is a successful query, as needed.

    Now, by the union bound and Lemma~\ref{lemma: successful negligible} we have 
    \[
    Pr_G[ALG \text{ makes a successful query}] \le Pr_G[\text{one of }S_1, \dots, S_{\ell}\text{ is a successful query}] \le k \cdot e^{-\frac{\sqrt{m}}{4}}. \qedhere
    \]
\end{proof}
Note that for any $k$ that is polynomial in $n$ it holds that $k\cdot e^{-\frac{\sqrt{m}}{4}} = k\cdot e^{-\frac{n^{1/6}}{4}}$ is negligible, which concludes the proof.
\end{proof}

\section{Hardness of Approximation for Multi-Action, Submodular $f$} \label{sec: multi-action-submodular}
In this section, we use Proposition~\ref{prop:coverage hardness} to strengthen a hardness result of \citep{combinatorial-contracts}, showing that the optimal contract for multi-action settings with submodular $f$ is not only hard to compute exactly, but also to approximate within any constant.

Before presenting our result, we recall the multi-action model from Section~\ref{sec:model}.
The principal interacts with a single agent, who faces a set $A$ of $n$ actions, and can choose any subset $S \subseteq A$ of them. 
Every action $i \in A$ is associated with a cost $c_i \ge 0$, and the cost of a set $S$ of actions is $\sum_{i\in S} c_i$. The success probability function $f(S)$ denotes the probability of a project's success when the agent chooses the set of actions $S$. 
A contract is defined by a single parameter $\alpha \in [0,1]$, which denotes the payment to the agent upon the project's success. The principal's objective is to find a contract $\alpha$ that maximizes her utility $u_P(\alpha)=(1-\alpha)f(S_{\alpha})$, where $S_{\alpha} \in \arg\max_{S \subseteq A} u_A(\alpha, S) = \arg\max_{S \subseteq A} f(S)\cdot \alpha - \sum_{i\in S} c_i$ is the agent's best response to a contract $\alpha$, with tie-breaking in favor of the principal.

We are now ready to present the theorem.
\begin{theorem}\label{thm:multi-action hardness}
    In the multi-action model, 
    for submodular (and even normalized unweighted coverage) success probability function $f$, no polynomial time algorithm with value oracle access can approximate the optimal contract within any constant factor, unless $P=NP$. 
\end{theorem}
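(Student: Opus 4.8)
The plan is to reduce from the promise problem of Proposition~\ref{prop:coverage hardness}, mimicking the structure of the multi-agent proof (Theorem~\ref{thm:multi-agent hardness}) but with the crucial extra ingredient flagged in the "Our Techniques" section: since in the multi-action model the principal cannot efficiently evaluate the utility of a given contract $\alpha$ (computing the agent's best response $S_\alpha$ is itself hard), we cannot simply "read off" which case we are in. Instead, I want to engineer the instance so that the two promise cases are distinguished by the \emph{value of the approximately optimal contract} $\alpha$ itself, not by the principal's utility. To do this, first take $(k,f=(U,A,h))$ from Proposition~\ref{prop:coverage hardness} with $M=2$ and some small $\varepsilon$, set uniform costs $c_i=\Theta(1/k^2)$ on the original actions so that (as in the multi-agent case) incentivizing a size-$k$ set with $f(S)=1$ is cheap under condition~(1) but expensive under condition~(2). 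Then augment the action set with one new action $a^\ast$ of success probability, say, $p^\ast$ slightly above $0$ and cost $c^\ast$ chosen so that $a^\ast$ is a best response only when $\alpha$ exceeds some threshold $\alpha^\ast$ that is strictly larger than the $\alpha$ needed to incentivize the cheap size-$k$ set in case~(1). The point of $a^\ast$ is to be a "known fallback": the principal can always guarantee a small but positive utility $(1-\alpha^\ast)p^\ast$ by posting $\alpha^\ast$, so that under condition~(2), where the original actions yield essentially nothing at any affordable $\alpha$, the (approximately) optimal contract must be $\approx\alpha^\ast$ (a high payment), whereas under condition~(1) the optimal contract is a much lower $\alpha$. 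A $\beta$-approximate algorithm's output $\alpha$ then lets us distinguish the two cases by a simple threshold test on $\alpha$.

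The key steps, in order, are: (i) invoke Proposition~\ref{prop:coverage hardness} with $M=2$, $\varepsilon$ small, obtaining $f$ on item set $A$ with $f(\{i\})=1/k$ for all $i$; (ii) define the augmented instance $f'$ on $A\cup\{a^\ast\}$ — here I'd take $f'(S)=f(S\setminus\{a^\ast\})$ whenever $a^\ast\notin S$ and combine it additively (capped at $1$) with the contribution of $a^\ast$, checking that $f'$ remains normalized unweighted coverage by adding a single fresh element to $U$ mapped to $a^\ast$; (iii) fix costs: $c_i = 1/(2k^2)$ on original actions and $c^\ast$ tuned against $p^\ast$; (iv) analyze the agent's best response as a function of $\alpha$, showing that on $[0,\alpha^\ast)$ the agent ignores $a^\ast$ and plays (a subset of) the original actions, so the principal's utility curve is exactly the one analyzed in Theorem~\ref{thm:multi-agent hardness} reinterpreted via the demand, while at $\alpha\ge\alpha^\ast$ the agent is willing to take $a^\ast$; (v) in case~(1), exhibit a low contract $\alpha\approx 1/(2k)$ giving principal utility $\approx 1/2$, and argue any higher contract is worse, so OPT is low-$\alpha$ with utility bounded below by a constant; (vi) in case~(2), show that for \emph{every} $\alpha<\alpha^\ast$ the best-response set is either empty or so large (size $\ge 2k$, or size $\beta k$ with $\beta\le 2$ giving $f\le 1-e^{-\beta}+\varepsilon$) that the principal's utility is $o(1)$, hence OPT $\approx (1-\alpha^\ast)p^\ast$, which we make a vanishing quantity by scaling; (vii) choose parameters (letting $p^\ast\to 0$ appropriately, so that OPT in case~(2) is below $\beta\cdot$OPT$_{\text{case }1}$ by an arbitrarily large factor, ruling out any fixed constant $\beta$) and let the $\beta$-approximate contract's magnitude decide the promise; since the construction is polynomial, this contradicts $P\ne NP$.

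The main obstacle, and the step I expect to require the most care, is step~(vi): controlling the principal's utility for \emph{all} contracts $\alpha\in(0,\alpha^\ast)$ under condition~(2), because the agent's best response is not a fixed set but varies with $\alpha$, and a priori the agent could play a set whose size sits in an awkward range. The argument of Theorem~\ref{thm:multi-agent hardness} shows $g(S)<0.35$ for \emph{every} set $S$, but here I need the sharper statement that for small $\alpha$ the agent's \emph{own} utility-maximizing set still yields the principal only $o(1)$ — this uses that with uniform costs $c_i=1/(2k^2)$, making $\alpha$ small forces the agent toward small sets (which have $f(S)\le|S|/k$ hence small value) while making $\alpha$ moderate makes the agent want a large set whose coverage is capped by the $1-e^{-\beta}+\varepsilon$ bound, and in the intermediate regime the agent's incentive constraint (payment $\alpha f(i\mid S\setminus i) \ge c_i$ marginally) pins down the structure. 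I would handle this by casing on the size of $S_\alpha$ exactly as in the multi-agent proof, plus a separate check that the new action $a^\ast$ does not get triggered below $\alpha^\ast$ and does not interact badly with the capping of $f'$ at $1$. The threshold parameters $\alpha^\ast$, $c^\ast$, $p^\ast$ need to be chosen so that all these inequalities hold simultaneously with the target approximation ratio being an arbitrary constant; verifying this is routine but fiddly, and I'd relegate the exact constants to the body of the proof.
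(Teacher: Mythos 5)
Your high-level plan matches the paper's: reduce from Proposition~\ref{prop:coverage hardness}, augment the action set with one extra action that only a high contract can trigger, and use the \emph{magnitude} of the approximate contract (rather than its utility, which the principal cannot efficiently evaluate) to decide the promise. However, the concrete parameter choices you propose would make the argument fail.

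The critical flaw is the cost scale $c_i = \Theta(1/k^2)$, which you carry over from the multi-agent proof but which behaves very differently in the multi-action model. In the multi-agent model each agent $i$ is paid $\alpha_i = c_i / f(i \mid S \setminus \{i\}) \approx k c_i = \Theta(1/k)$, so the total payment $\sum_i \alpha_i$ already reaches $1$ once $|S| \approx 2k$; that is what kills large sets there. In the multi-action model the agent pays the \emph{raw} costs $\sum_{i\in S} c_i$, so with $c_i = 1/(2k^2)$ the budget constraint $\alpha f(S) \geq \sum_i c_i$ permits sets of size up to roughly $2\alpha k^2$. At a moderate contract, say $\alpha = 1/2$, the agent can afford $|S_\alpha| = \Theta(k^2)$, but Proposition~\ref{prop:coverage hardness} only bounds $f(S)$ for $|S| \leq Mk$ with $M$ a \emph{constant} (in the reduction the $k$-prover parameters $k,\ell$ are constants depending on $M,\varepsilon$). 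For sets of size $\Theta(k^2)$ the construction gives no control on $f$; in fact $f(A)$ can be $1$ (or close to it) in \emph{both} promise cases, so the principal obtains utility $\approx 1/2$ at $\alpha = 1/2$ regardless, and your step~(vi) separation collapses. The paper avoids this by taking $c_i = \frac{1-\beta^2}{2k} = \Theta(1/k)$, so that the agent's participation constraint forces $|S_\alpha| \leq \frac{k}{1-\beta^2} \leq 2k$, exactly the regime Proposition~\ref{prop:coverage hardness} covers; it also forces the triggering contract to be close to $1$ (namely in $[1-\beta^2, 1)$), not $\approx 1/(2k)$.

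A secondary issue: the extra action in the paper has success probability $1/2$, implemented by the product construction $U = U' \times \{0,1\}$, $f(S) = \tfrac{1}{2}\bigl(f'(S \cap A') + \mathbb{1}[0\in S]\bigr)$, rather than a "fresh element" of negligible weight. This makes the fallback utility $u_P(1-\beta^3) \geq \tfrac{1}{2}\beta^3$ and the case-1 utility $u_P(1-\beta^2)\geq \tfrac{1}{2}\beta^2$ comparable up to a fixed power of $\beta$, so the inequalities $u_P(\alpha) < 6\beta^4 \leq 12\beta \cdot u_P(1-\beta^3)$ close. With $p^\ast$ vanishing, your case-2 target becomes "utility $< \beta \cdot p^\ast$ for all $\alpha < \alpha^\ast$", which is strictly harder to prove and does not actually become easier in the limit $p^\ast \to 0$ (the left-hand side does not shrink with $p^\ast$). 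So the ``make a vanishing quantity by scaling'' step works against you rather than for you; you want a constant-size fallback, not a vanishing one.
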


\begin{proof}
Let $\beta \in \left(0,\frac{1}{12}\right)$, we prove that no poly-time $12\beta$-approximation algorithm exists, by reducing from the hardness presented in Proposition~\ref{prop:coverage hardness}.

Let $(k, f'=(U', A', h'))$ be the input to our reduction per Proposition~\ref{prop:coverage hardness} with $M=2, \varepsilon=\beta^4$. We build the following contract instance: 
\begin{itemize}
    \item The set of actions is $A = A'\cup \{0\}$.
    \item The success probability function is $f(S)=\frac{1}{2}(f'(S\cap A')+1[0 \in S])$. 
    \item The costs are $c_i = \frac{1-\beta^2}{2k}$ for all $i\in A'$ and $c_0=\frac{1}{2}(1-\beta^3)$.
\end{itemize}
\begin{claim}
    The success probability function defined above is a normalized unweighted coverage function. 
\end{claim}
\begin{proof}
    Let $U = U'
    \times \{0,1\}$, and define \[h(i)=\begin{cases}
        h'(i)\times \{0\} & i \in A' \\
        U'\times \{1\} & i = 0
    \end{cases}
    \]
    The normalized unweighted coverage function defined by $(U, A, h)$ is equal to $f$.
\end{proof}

\begin{lemma} \label{lemma: contracts separation}
In the contract problem instance defined above, the following holds:
\begin{enumerate}
    \item If $(k, f')$ satisfies the first condition of Proposition~\ref{prop:coverage hardness}, any $\alpha_0$ which is a $12\beta$-approximation of the optimal contract satisfies $\alpha_0 < 1-\beta^3$.
    \item If $(k, f')$ satisfies the second condition of Proposition~\ref{prop:coverage hardness}, any $\alpha_0$ which is a $12\beta$-approximation of the optimal contract satisfies $\alpha_0 \ge 1-\beta^3$.
\end{enumerate} 
\end{lemma}
Note that by proving Lemma~\ref{lemma: contracts separation} we conclude the proof of Theorem~\ref{thm:multi-action hardness}.

\begin{proof} [Proof of Lemma~\ref{lemma: contracts separation}]
We first observe that the agent's best response to a contract $\alpha < 1-\beta^2$ is to take no actions. This holds since for any $i\in A'$ it holds that 
\[
\alpha f(\{i\}) - c_i = \alpha\cdot \frac{1}{2} f'(\{i\}) - c_i = \alpha \cdot\frac{1}{2k} - \frac{1-\beta^2}{2k} < 0,
\]
where the second equality is since Proposition~\ref{prop:coverage hardness} guarantees $f'(\{i\}) = \frac{1}{k}$ for any $i\in A'$. It also holds that for $i=0$ we have
\[
\alpha f(\{i\}) - c_i = \alpha \cdot \frac{1}{2} - \frac{1}{2}
(1-\beta^3) < 0.
\]
This means the agent's utility from any non-empty set $S\subseteq A$ is strictly less than $0$, since
\[
u_A(\alpha, S) = \alpha f(S) - \sum_{i\in S} c_i \le \alpha \sum_{i\in S} f(\{i\}) - \sum_{i\in S} c_i = \sum_{i\in S} \alpha f(\{i\}) - c_i < 0,
\]
where the first inequality follows by subadditivity.

Note that the same arguments show that, given the contract $\alpha = 1-\beta^2$, the agent has non-positive utility from any non-empty set $S\subseteq A$.
\paragraph{Case 1: $(k, f')$ satisfies the first condition of Proposition~\ref{prop:coverage hardness}. } Let $S\subseteq A'$ be a set that satisfies the condition (i.e. $|S| = k$ and $f'(S)=1$). Under the contract $\alpha=1-\beta^2$, the agent's utility from the set $S$ is
\[
u_A(\alpha, S) = \alpha f(S) - \sum_{i\in S} c_i = (1-\beta^2)  \frac{1}{2} - |S|\frac{1-\beta^2}{2k} = 0.
\]
Since, as we noted earlier, no set has a greater utility to the agent, and ties are broken in favor of the principal, this implies that the agent's best response $S_\alpha$ satisfies $f(S_\alpha) \ge f(S) \ge \frac{1}{2}$. It follows that the principal's utility from the contract $\alpha = 1-\beta^2$ satisfies
\[
u_P(\alpha) = f(S_\alpha) (1-\alpha) \ge \frac{1}{2} \beta^2.
\]
Now, let $\alpha_0$ be some $12\beta$-approximation of the optimal contract. This implies that
\[
u_P(\alpha_0) \ge 12\beta \cdot u_P(1-\beta^2) \ge 12\beta \cdot \frac{1}{2}\beta^2 > \beta^3.
\]
On the other hand, trivially it holds that $u_P(\alpha_0) \le 1-\alpha_0$, which gives us $\alpha_0 < 1-\beta^3$, as needed.

\paragraph{Case 2: $(k, f')$ satisfies the second condition of Proposition 3.2.} We start by arguing that the principal's utility from the contract $\alpha=1-\beta^3$ is at least $\frac{1}{2}\beta^3$. First, we show that the agent's best response $S_\alpha$ will always include the action $0$. Indeed, if we assume by contradiction that $0\notin S_\alpha$, by adding $0$ to $S_\alpha$ we do not change the agent's utility:
\[
u_A(S_\alpha \cup \{0\}) - u_A(S_\alpha) = \alpha f(0 \mid S_\alpha) - c_0 = (1-\beta^3) \cdot \frac{1}{2} - \frac{1}{2}(1-\beta^3)=0.
\]
This means that $S_\alpha \cup \{0\}$ has the same utility to the agent, but a greater utility to the principal, contradicting tie-breaking in favor of the principal.
This implies that
\[
u_P(\alpha) = f(S_\alpha)(1-\alpha) \ge \frac{1}{2}(1-\alpha) = \frac{1}{2}\beta^3.
\]
Now, let $\alpha < 1-\beta^3$ be some contract, we show that $u_P(\alpha)< 6 \beta^4 \le 12\beta \cdot u_P(1-\beta^3)$, completing the proof of Lemma~\ref{lemma: contracts separation}. If $\alpha < 1-\beta^2$, as argued before, the agent's best response is the empty set, which means the principal's utility is $0$ and we are done. Otherwise, if $1-\beta^2 \le \alpha < 1-\beta^3$, since $S_\alpha$ is the agent's best response, it is clear that $0 \notin S_\alpha$ (since otherwise $S_\alpha \setminus \{0\}$ has a strictly better utility to the agent). From this we conclude that $f(S_\alpha) = \frac{1}{2}f'(S_\alpha) \le \frac{1}{2}$. Since $S_\alpha$ must have a non-negative utility to the agent, it also holds that
\begin{equation} \label{ineq: size of BR}
|S_\alpha| \frac{1-\beta^2}{2k} = \sum_{i\in S_\alpha} c_i \le \alpha f(S_\alpha) \le f(S_\alpha).
\end{equation}
Since $f(S_\alpha) \le \frac{1}{2}$, Inequality~\ref{ineq: size of BR} implies that $|S_\alpha| \le \frac{k}{1-\beta^2} \le M \cdot k$. This allows us to use condition (2) of Proposition~\ref{prop:coverage hardness}, which implies that $f(S_\alpha) \le \frac{1}{2}(1-e^{-|S_\alpha|/k}+\varepsilon)$. Denoting $x=\frac{|S_\alpha|}{k}$, and plugging this into Inequality~\ref{ineq: size of BR}, we get the inequality
\[
x \frac{1-\beta^2}{2} \le \frac{1}{2} (1-e^{-x} + \varepsilon) \le \frac{1}{2} \left(x-\frac{1}{4}x^2 + \varepsilon\right),
\]
where the last inequality is since $e^{-x} \ge 1-x+\frac{1}{4}x^2$ for any $x\in [0,2]$. By rearranging, we get that 
\[
\frac{1}{4} x^2 -\beta^2 x- \varepsilon \le 0,
\]
implying 
\[
x \le \frac{\beta^2 + \sqrt{\beta^4 + \varepsilon}}{1/2} \le \frac{\beta^2 + \sqrt{2} \beta^2}{1/2} < 6\beta^2 
\]
This means the principal's utility is at most
\[
u_P(\alpha) = f(S_\alpha) (1-\alpha) = \frac{1}{2}f'(S_\alpha) (1-\alpha) \leq |S_\alpha| \cdot \frac{1}{2k} (1-\alpha) < 6\beta^2(1-\alpha) < 6\beta^4,
\]
as needed.
\end{proof}
This concludes the proof of Theorem~\ref{thm:multi-action hardness}
\end{proof}

\section{Hardness of Approximation for Multi-Action, XOS $f$}

In this section, we show a hardness of approximation result for the multi-action model with XOS success probability functions.
More formally, we prove the following theorem:
\begin{theorem}\label{thm:multi-action hardness XOS}
    In the multi-action model, 
    for XOS $f$, for any $\varepsilon > 0$, no polynomial time algorithm with value query access can approximate the optimal contract to within a factor of $n^{-\frac{1}{2}+\varepsilon}$, unless P=NP. 
\end{theorem}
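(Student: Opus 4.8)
The plan is to reduce from the problem of approximating the maximum clique size $\omega(G)$ of an $n$-vertex graph $G$. By \citet{Hastad_1999, zuckerman2006} there is no polynomial-time algorithm that approximates $\omega(G)$ to within a factor $n^{-1+\varepsilon'}$ for any $\varepsilon'>0$ unless P\,=\,NP. Fix a would-be approximation ratio $\beta\in(0,1)$ for the contract problem. The goal is to show that a polynomial-time $\beta$-approximation algorithm for the optimal multi-action XOS contract can be converted into a polynomial-time algorithm that approximates $\omega(G)$ to within a factor $\beta^2/4$; instantiating $\beta = n^{-1/2+\varepsilon}$ then yields an $\Theta(n^{-1+2\varepsilon})$-approximation for $\omega(G)$, which for large $n$ beats the $n^{-1+\varepsilon}$ barrier and hence contradicts P\,$\neq$\,NP, proving the theorem.

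The heart of the argument is a construction that, given $G$ and a threshold $\delta\in\mathbb{N}^+$, outputs a multi-action contract instance $I(G,\delta)$ with three properties. (i) The success probability function $f$ is XOS and its value oracle is computable in $\mathrm{poly}(n)$ time, so $I(G,\delta)$ is a genuine polynomial-time reduction rather than an oracle construction. (ii) Only two values of $\alpha$, say $\alpha_{\mathrm{lo}}<\alpha_{\mathrm{hi}}$, both independent of the edge set of $G$, can be optimal: between consecutive breakpoints of the agent's best-response correspondence $S_\alpha$ is constant, so $u_P(\alpha)=(1-\alpha)f(S_\alpha)$ is decreasing there, and the construction is arranged to have exactly these two breakpoints --- below $\alpha_{\mathrm{lo}}$ the agent takes nothing; between $\alpha_{\mathrm{lo}}$ and $\alpha_{\mathrm{hi}}$ he takes a single ``cheap but moderately productive'' gadget action; at $\alpha_{\mathrm{hi}}$ the uniformly priced ``vertex actions'' become jointly worthwhile, but only if they can be assembled into a large clique. (iii) The principal's utility at the two candidates records the clique gap: if $\omega(G)\le\delta$ then $u_P(\alpha_{\mathrm{hi}})\le\beta\cdot u_P(\alpha_{\mathrm{lo}})$, and if $\omega(G)\ge 2\delta/\beta^2$ then $u_P(\alpha_{\mathrm{lo}})\le\beta\cdot u_P(\alpha_{\mathrm{hi}})$. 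Given (iii), I would run the $\beta$-approximation algorithm on $I(G,\delta)$ and read off which of $\alpha_{\mathrm{lo}},\alpha_{\mathrm{hi}}$ it effectively returns --- using monotonicity of $u_P$ between breakpoints and the principal-favoring tie-break to snap its output to a candidate --- thereby distinguishing $\omega(G)\le\delta$ from $\omega(G)\ge 2\delta/\beta^2$. Running this distinguisher for $\delta = 1,2,4,8,\dots$ localizes $\omega(G)$ to within the multiplicative gap, i.e.\ to within a factor $\beta^2/4$, which is exactly the reduction promised above.

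Concretely I would take the action set to be $V(G)\cup\{z\}$ with $z$ the gadget action; give every vertex action a common cost calibrated so that a set of vertex actions repays its cost to the agent only once $\alpha$ reaches $\alpha_{\mathrm{hi}}$, and then only if the set induces a clique of size $\approx\delta$ or larger; and define $f$ so that $z$ contributes a fixed amount while a vertex set's contribution is governed by the largest clique it contains, implemented as a max over additive clause functions derived from $G$ (the delicate point, discussed below, being to implement this --- or a sufficient surrogate --- with a polynomial-time oracle). Pricing $z$ and the vertices and fixing the two payment levels would be done so that the quadratic inequality governing ``how large a near-clique the agent can afford at $\alpha_{\mathrm{hi}}$'' produces exactly the $\delta$-versus-$2\delta/\beta^2$ separation, in the same spirit as the quadratic estimate $e^{-x}\ge 1-x+\tfrac14 x^2$ used in the proof of Theorem~\ref{thm:multi-action hardness}.

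The main obstacle is reconciling requirements (i)--(iii) simultaneously: $f$ must be honestly XOS \emph{and} admit a polynomial-time value oracle \emph{and} make the principal's optimal utility large precisely when $G$ has a large clique --- even though deciding $\omega(G)$ is itself NP-hard. Exposing ``there is a large clique inside $S$'' to a cheap oracle without letting that oracle solve clique (possibly by first re-encoding $G$'s cliques as a polynomial-time-checkable combinatorial structure on an enlarged action set), and then verifying that, for every $G$, no contract other than $\alpha_{\mathrm{lo}}$ and $\alpha_{\mathrm{hi}}$ can beat them, are the crux of the construction. Once that is in place, tracking the constants through the quadratic inequality, the geometric search over $\delta$, and the final choice $\beta=n^{-1/2+\varepsilon}$ are routine.
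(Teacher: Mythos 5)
Your high-level plan is exactly the one the paper uses: reduce from Håstad--Zuckerman clique inapproximability, build an instance with two breakpoint contracts $\alpha_{\mathrm{lo}}<\alpha_{\mathrm{hi}}$, show that a $\beta$-approximation lets you distinguish $\omega(G)\le\delta$ from $\omega(G)\ge 2\delta/\beta^2$, and geometrically search over $\delta$ to get a $\beta^2/4$-approximation of $\omega(G)$. But you have not actually resolved the obstacle you yourself flag at the end, and the candidate construction you sketch would not do so: if the value of a vertex set is ``governed by the largest clique it contains,'' then the value oracle itself must compute (or approximate) a maximum clique inside $S$, which is the very problem you are reducing \emph{from}. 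The missing idea is to make $f$ depend only on whether $S$ \emph{is} a clique --- a polynomial-time check --- rather than on the largest clique $S$ contains. Concretely, the paper takes actions $V' = V \cup [\delta]$ where the $\delta$ new vertices form a fresh clique (so a $\delta$-clique is guaranteed to exist regardless of $G$), sets every cost to $M$, and defines $f(S) = \bigl(M + \mathbb{1}[S \text{ is a clique in } G']\bigr)\cdot|S| + \min(|S|,\delta)\cdot\varepsilon$. This is XOS (a max of additive functions indexed by candidate sets $T$), its value oracle is trivially polynomial, and the per-element bonus of $1$ for being a clique, together with the $\min(|S|,\delta)\varepsilon$ term, manufactures exactly the two breakpoints $\alpha_{\mathrm{lo}}=M/(M+1+\varepsilon)$ and $\alpha_{\mathrm{hi}}=M/(M+1)$: below the first the agent does nothing, between them he takes a $\delta$-clique, above the second a maximum clique. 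There is no separate gadget action $z$ in the paper's construction; the guaranteed $\delta$-clique plays that role.

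Two smaller points. First, the quadratic bound $e^{-x}\ge 1-x+\tfrac14 x^2$ belongs to the submodular multi-action proof (Theorem~5.1), not to this one; the XOS argument here is a direct case analysis on the three ranges of $\alpha$ and never invokes that inequality. Second, ``snapping the approximation algorithm's output to a candidate'' is simpler than you make it sound: the distinguisher just compares the returned $\alpha_0$ against $M/(M+1)$ and outputs accordingly; the monotonicity-between-breakpoints reasoning is used to show both distinguishing inequalities, not to post-process the output.
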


Our proof of Theorem~\ref{thm:multi-action hardness XOS}  relies on the hardness of approximating $\omega(G)$, which is the size of the largest clique in the graph $G$. This hardness result was presented by \citet{Hastad_1999, zuckerman2006}, who prove the following theorem:

\begin{theorem} [\citep{Hastad_1999, zuckerman2006}]
    For all $\varepsilon > 0$, it is NP-hard to approximate $\omega(G)$ to within $n^{-1+\varepsilon}$. 
\end{theorem}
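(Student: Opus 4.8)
The plan is to derive the clique inapproximability from a sufficiently strong probabilistically checkable proof (PCP) for $\mathrm{NP}$ via the FGLSS reduction, and then to make the resulting graph polynomial-size by a randomness-efficient (``derandomized'') version of the construction. Concretely, I would start from the PCP theorem, $\mathrm{NP} = \mathrm{PCP}[O(\log n), O(1)]$, and then invoke its strong quantitative refinement: for every $\epsilon' > 0$, every language in $\mathrm{NP}$ has a verifier with \emph{perfect completeness}, $O(\log n)$ randomness, and amortized free-bit complexity at most $\epsilon'$ --- i.e.\ the verifier, conditioned on the values of the queried proof bits, accepts only a $2^{-(1-o(1))b}$ fraction of the $2^b$ possible ``free'' answer patterns. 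This is Håstad's theorem, and it is the heart of the matter; the FGLSS step is then essentially bookkeeping.

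For the PCP ingredient I would recall the standard pipeline: begin from an $\mathrm{NP}$-hard gap version of a constraint satisfaction problem (e.g.\ Gap-Label-Cover / two-prover one-round games with a constant soundness gap, obtained from the PCP theorem), apply Raz's parallel repetition theorem $u$ times to drive the soundness down to $2^{-\Omega(u)}$ while keeping the answer alphabet of size $2^{O(u)}$, and then compose with a \emph{long-code based} inner verifier. The inner verifier tests that the claimed long codes are close to genuine long codes of a satisfying pair of labels; analyzing the acceptance probability via Fourier analysis of Boolean functions (and using the ``folding over true'' trick to enforce perfect completeness) shows that the test uses very few free bits relative to the logarithm of its soundness, so the amortized free-bit complexity can be pushed below any prescribed $\epsilon'$. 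The output is a PCP verifier $V$ with perfect completeness, randomness $r = O(\log n)$, $b$ free bits, and soundness $s$ satisfying $b \le \epsilon' \log(1/s)$.

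Next comes FGLSS. Given $V$, build a graph whose vertices are the pairs $(R, a)$ where $R$ is a random string of $V$ and $a$ is an accepting pattern of the free bits on coins $R$; join $(R,a)$ and $(R',a')$ by an edge iff the partial proof-assignments they determine are mutually consistent. A consistent proof accepted with probability $p$ yields a clique of size $p\cdot 2^r$ and conversely, so $\omega(G)$ equals $2^r$ times the maximum acceptance probability of $V$. Perfect completeness gives $\omega(G) = 2^r$ in the YES case and soundness gives $\omega(G) \le s\cdot 2^r$ in the NO case, while $N \le 2^{r}\cdot 2^{b}$. Hence $\omega_{\mathrm{YES}}/\omega_{\mathrm{NO}} \ge 1/s$, and since $2^b \le (1/s)^{\epsilon'}$ one gets a gap of roughly $N^{1/(1+\epsilon') - o(1)}$; taking $\epsilon'$ small enough relative to the target $\epsilon$ makes this at least $N^{1-\epsilon}$. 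Since $r = O(\log n)$, $N$ is polynomial, and we are done --- \emph{except} that in a naive construction pushing amortized free bits all the way to $0$ forces a super-constant (e.g.\ $\mathrm{polylog}$) number of repetitions, blowing the randomness up to $\omega(\log n)$ and $N$ super-polynomial. The fix, due to Zuckerman, is to amplify soundness not by naive parallel repetition but by a \emph{randomness-efficient} repetition driven by a walk on an explicit disperser (or expander/extractor), achieving the same soundness decay with only $O(\log n)$ total randomness; this is exactly what upgrades the hypothesis from $\mathrm{NP}\not\subseteq\mathrm{ZPP}$ (Håstad's original statement) to $\mathrm{P}\ne\mathrm{NP}$.

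The main obstacle is the PCP construction itself: obtaining perfect completeness together with amortized free-bit complexity tending to $0$ requires a delicate Fourier-analytic analysis of long-code tests (controlling the correlation between claimed codewords and the noise/folding terms so that soundness decays almost as fast as $2^{-b}$), combined with the sharp parallel repetition theorem. The second, more technical-but-routine obstacle is plugging in an \emph{explicit} disperser with good enough parameters to keep the randomness logarithmic during soundness amplification. Once both ingredients are in place, the FGLSS bookkeeping and the choice of $\epsilon'$ as a function of $\epsilon$ are straightforward.
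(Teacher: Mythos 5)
This theorem is not proved in the paper at all --- it is imported as a black box from the cited works of H{\aa}stad and Zuckerman, so there is no in-paper argument to compare against. Your outline (low amortized free-bit PCPs with perfect completeness via parallel repetition and long-code tests, the FGLSS graph, and Zuckerman's explicit disperser-based soundness amplification to replace the randomized amplification and thereby upgrade the assumption from $\mathrm{NP}\not\subseteq\mathrm{ZPP}$ to $\mathrm{P}\neq\mathrm{NP}$) is a faithful summary of how the cited sources actually establish the result, with only minor imprecisions (e.g., the randomness blow-up you describe arises in the soundness-amplification stage that makes $1/s$ polynomially large, not in driving the amortized free-bit complexity itself to a small constant).
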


Our technique is to use any $\beta$-approximation algorithm of the optimal contract to distinguish between the cases $\omega(G) \le \delta$ and $\omega(G) \ge \frac{2\delta}{\beta^2}$, for any $\delta>0$. 
Solving this promise problem allows us to get a guarantee of either $\omega(G) > \delta$ or $\omega(G) < \frac{2\delta}{\beta^2}$. By iterating over $\delta_i = 2^i$, we can get some $i$ for which we are guaranteed $\omega(G) > \delta_i$ and $\omega(G) < \frac{2\delta_{i+1}}{\beta^2}$. This allows us to approximate $\omega(G)$ within a factor of $\frac{\beta^2}{4}$ (see formal arguments in Lemma~\ref{lemma: approximating clique}), which implies Theorem~\ref{thm:multi-action hardness XOS} from \citep{Hastad_1999, zuckerman2006}.

In Section~\ref{subsec:distinguishing_clique} we show how to use a $\beta$-approximation algorithm for the contract problem to distinguish between the two cases $\omega(G) \le \delta$ and $\omega(G) \ge \frac{2\delta}{\beta^2} $ in polynomial time for any $\delta > 0$, and in Section~\ref{subsec:approximating_clique} we formally prove that this distinction gives us the ability to approximate $\omega(G)$ to within a factor of $\frac{\beta^2}{4}$ in polynomial time, thus concluding the proof of Theorem~\ref{thm:multi-action hardness XOS}.

\subsection{Distinguishing Between $\omega(G) \le \delta$ and $\omega(G) \ge \frac{2\delta}{\beta^2}$} \label{subsec:distinguishing_clique}
In this section, we prove the following lemma:
\begin{lemma} \label{lemma: distinguishing clique}
    Algorithm~\ref{alg:distinguishing_clique} runs in polynomial time, given oracle access to a $\beta$-approximation of the optimal contract for XOS functions.  Additionally, on input $(G, \delta)$ composed of a graph $G$ and a positive integer $\delta$
    it holds that:
    \begin{enumerate}
        \item If $\omega(G) \le \delta$, then Algorithm~\ref{alg:distinguishing_clique} returns $\smallClique$.
        \item If $\omega(G) \ge \frac{2\delta}{\beta^2} $, then Algorithm~\ref{alg:distinguishing_clique} returns $\largeClique$.
        \item If $\delta < \omega(G) < \frac{2\delta}{\beta^2}$, then Algorithm~\ref{alg:distinguishing_clique} returns either $\smallClique$ or \largeClique.
    \end{enumerate}
\end{lemma}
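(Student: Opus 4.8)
The plan is for Algorithm~\ref{alg:distinguishing_clique} to (i) build from the input $(G,\delta)$, writing $n:=|V(G)|$, a multi-action contract instance $I(G,\delta)$ whose success probability function $f$ is XOS and admits a value oracle computable in $\mathrm{poly}(n)$ time; (ii) invoke the assumed $\beta$-approximation oracle once on $I(G,\delta)$, obtaining a contract $\alpha$; and (iii) output $\smallClique$ or $\largeClique$ according to which side of an explicit threshold the returned $\alpha$ falls. The instance will be engineered so that, \emph{regardless of the edge set of $G$}, the agent's best response changes at only two values $0<\alpha_1<\alpha_2<1$, both depending only on $n,\delta,\beta$: for $\alpha<\alpha_1$ the best response gives the principal utility $0$; for $\alpha\in[\alpha_1,\alpha_2)$ the best response is a ``small'' structure of success probability equal to a $G$-independent constant $\gamma_1$; and for $\alpha\in[\alpha_2,1)$ the best response is a ``large'' structure (morally, a maximum clique of $G$) of success probability $\gamma_2\cdot\omega(G)$ for a $G$-independent constant $\gamma_2$. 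Because $u_P(\alpha)=(1-\alpha)f(S_\alpha)$ is decreasing in $\alpha$ on each non-trivial regime, the only contracts worth considering for the optimum are $\alpha_1$, with $u_P(\alpha_1)=(1-\alpha_1)\gamma_1$, and $\alpha_2$, with $u_P(\alpha_2)=(1-\alpha_2)\gamma_2\,\omega(G)$; hence $\mathrm{OPT}=\max\{(1-\alpha_1)\gamma_1,\ (1-\alpha_2)\gamma_2\,\omega(G)\}$.

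I would then calibrate the free parameters of $I(G,\delta)$ --- the common action cost, a normalizer, and the discount factor that produces the second breakpoint --- so that the fixed quantity $(1-\alpha_1)\gamma_1$ lies strictly inside the non-empty interval $\bigl[\tfrac{(1-\alpha_2)\gamma_2\,\delta}{\beta},\,\tfrac{2(1-\alpha_2)\gamma_2\,\delta}{\beta}\bigr]$; the resulting parameters have bit length polynomial in $n$ and $\log(1/\beta)$. With this calibration the two ``forced'' cases follow. If $\omega(G)\le\delta$, then every contract with $\alpha\ge\alpha_2$ has $u_P\le u_P(\alpha_2)=(1-\alpha_2)\gamma_2\,\omega(G)\le(1-\alpha_2)\gamma_2\,\delta<\beta(1-\alpha_1)\gamma_1\le\beta\cdot\mathrm{OPT}$, and every contract with $\alpha<\alpha_1$ has $u_P=0<\beta\cdot\mathrm{OPT}$, so a $\beta$-approximation must return $\alpha\in[\alpha_1,\alpha_2)$, and the algorithm outputs $\smallClique$. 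If instead $\omega(G)\ge 2\delta/\beta^2$, then $u_P(\alpha_2)=(1-\alpha_2)\gamma_2\,\omega(G)\ge\tfrac{2(1-\alpha_2)\gamma_2\,\delta}{\beta^2}>\tfrac1\beta(1-\alpha_1)\gamma_1=\tfrac1\beta u_P(\alpha_1)$, so $\mathrm{OPT}=u_P(\alpha_2)$ and every contract with $\alpha<\alpha_2$ has $u_P\le u_P(\alpha_1)<\beta\cdot\mathrm{OPT}$; the oracle must return $\alpha\ge\alpha_2$, and the algorithm outputs $\largeClique$. When $\delta<\omega(G)<2\delta/\beta^2$ neither side is forced, so either output is consistent with the third case. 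Running time is polynomial: $I(G,\delta)$ has $\mathrm{poly}(n)$ actions and $f$ is polynomially evaluable, so the single oracle call is polynomial, and the final threshold comparison is trivial.

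The crux --- and the step I expect to be the main obstacle --- is constructing $f$ and verifying the two-breakpoint demand structure assumed above. The difficulty is that $f$ must be XOS and polynomially evaluable, yet the agent's best response at contracts in the second regime must realize a maximum clique of $G$ with success probability exactly $\gamma_2\,\omega(G)$; since finding a maximum clique is NP-hard, all the intractability must sit in the agent's (unbounded) optimization and none of it in evaluating $f$. Concretely, one would build $f$ on a ground set formed from $V(G)$ (possibly with auxiliary gadget actions) from an explicit XOS expression depending on $G,\delta,\beta$ so that: (a) cliques of $G$ are exactly the sets that are maximally cost-efficient, which drives the agent's demand onto a maximum clique for $\alpha$ in the relevant range; (b) a discounted, structure-independent additive/unit-demand component, active only for $\alpha\ge\alpha_2$, fixes the first regime's value at the constant $\gamma_1$ and makes $\alpha_2$ a $G$-independent rational; and (c) the tie-breaking in favor of the principal at $\alpha=\alpha_2$ is handled so the threshold test in step~(iii) is unambiguous. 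Point~(a) is the delicate part, and it is exactly where the XOS (rather than submodular) class is needed: the ``natural'' submodular candidate with the required property, namely $S\mapsto\omega(G[S])$, is itself NP-hard to evaluate, whereas within XOS one can exhibit a polynomially evaluable surrogate whose demand still traces out maximum cliques. Granting this construction and its demand analysis, the calibration and case analysis above establish the lemma.
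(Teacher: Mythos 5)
Your high-level plan matches the paper's: build a multi-action instance from $(G,\delta)$ whose agent demand has exactly two $G$-independent breakpoints, one at which the best response is a ``small'' fixed-value structure and one at which it is a maximum clique; calibrate the parameters so the two candidate contracts separate by a factor outside $[\beta,1/\beta]$ precisely when $\omega(G)$ is below $\delta$ or above $2\delta/\beta^2$; and threshold on the returned $\alpha$. The calibration and case analysis you sketch are essentially the ones the paper carries out (with $\varepsilon=\tfrac{2}{\beta}-1$, $M=|V'|+\varepsilon$, breakpoints $\tfrac{M}{M+1+\varepsilon}$ and $\tfrac{M}{M+1}$, and thresholding at $\tfrac{M}{M+1}$).

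However, you explicitly punt on the single step that carries all the technical weight: exhibiting an XOS, polynomially evaluable $f$ whose demand structure has the two-breakpoint form you assume. You write ``Granting this construction and its demand analysis \ldots'' --- but that \emph{is} the lemma. Without it the argument is a reduction template, not a proof. The missing construction is not a routine gadget: naively one wants $f(S)$ to reward cliques, and the ``natural'' $S\mapsto\omega(G[S])$ is itself NP-hard to evaluate, exactly the obstacle you flag. The paper resolves this by (i) first augmenting $G$ with a fresh $\delta$-clique on new vertices so that a clique of size $\delta$ always exists, and (ii) taking
\[
f(S)\;=\;\bigl(M+\mathbb{1}[S\text{ is a clique in }G']\bigr)\cdot|S|\;+\;\min(|S|,\delta)\cdot\varepsilon,
\]
with uniform action costs $M$, then proving $f$ is XOS by writing it as $\max_{\emptyset\ne T\subseteq V'} f_T$ for explicit additive $f_T$ that place weight $M+\mathbb{1}[T\text{ is a clique}]+\varepsilon\min(|T|,\delta)/|T|$ on each $i\in T$ and $0$ elsewhere; the clique indicator makes $f$ evaluable in polynomial time. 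The two-breakpoint demand analysis then requires a separate argument (the paper's Lemma on the agent's best response) to rule out non-clique sets and to pin the best response to a size-$\delta$ clique in the middle regime and to a maximum clique in the upper regime, including careful treatment of tie-breaking. None of this is in your proposal; you should supply the construction and the best-response analysis explicitly rather than assert their existence.

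A smaller imprecision: you model the upper regime's success probability as exactly $\gamma_2\,\omega(G)$, i.e.\ linear and homogeneous in $\omega(G)$. In the paper's construction it is $(M+1)\omega(G')+\delta\varepsilon$, affine rather than linear, and with $\omega(G')=\max(\omega(G),\delta)$ because of the added auxiliary clique. Your calibration inequalities need to be restated to accommodate the additive $\delta\varepsilon$ term and the substitution of $\omega(G')$ for $\omega(G)$; the paper does this and the bounds still go through, but your version as written does not quite match what a correct construction will give you.
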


We note that in Algorithm~\ref{alg:distinguishing_clique} we build a contract problem instance with a success probability function that attains values greater than $1$. This is done for simplicity, and the result clearly holds for success probability functions that attain values within $[0,1]$, by normalizing both $f$ and the costs $c_i$ by $f(V')$ (the maximum value of $f$).
\begin{algorithm}
    \caption{Distinguishing Between $\omega(G) \le \delta$ and $\omega(G) \ge \frac{2\delta}{\beta^2}$} \label{alg:distinguishing_clique}
    \begin{algorithmic}[1]
            \State Given a graph $G=(V,E)$ and $\delta \in \mathbb{N}^+$, build the graph $G'=(V',E')$ such that \[
            \begin{split}
                V' &= V \cup [\delta] \\
                E' &= E \cup \{\{i, j\} \mid i,j\in [\delta] \land i\ne j\}
            \end{split}
            \]
            \State Denote $\varepsilon = \frac{2}{\beta}-1$, $M=|V'|+\varepsilon$
            \State Get $\alpha_0$ which is a $\beta$-approximation to the optimal contract in the instance $(V', f: 2^{V'} \rightarrow \mathbb{R}_{\ge 0}, \{c_i=M\}_{i\in V'})$ where\[
            f(S) = (M+\mathbb{1}[S \text{ is a clique in $G'$}])\cdot|S| + \min (|S|, \delta) \cdot \varepsilon 
            \] 
            \State If $\alpha_0<\frac{M}{M+1}$, return \smallClique, otherwise return \largeClique.
    \end{algorithmic} 
\end{algorithm}

First, we note the following lemma, which proves our oracle call in step 2 is valid:
\begin{lemma}
    The function $f$ as defined in step 3 of Algorithm~\ref{alg:distinguishing_clique} is monotone,  XOS, and value queries can be computed in polynomial time.
\end{lemma}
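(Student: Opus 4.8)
The statement asserts three properties of the function
\[
f(S) = (M+\mathbb{1}[S \text{ is a clique in $G'$}])\cdot|S| + \min (|S|, \delta) \cdot \varepsilon
\]
defined in step~3: that it is monotone, that it is XOS, and that a value query can be answered in polynomial time. I will handle the three parts separately, in that order, since the last two are essentially routine once the structure of $f$ is made explicit.

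**Monotonicity.** I would show $f(S) \le f(S \cup \{v\})$ for every $S$ and every $v \notin S$. The term $\min(|S|,\delta)\cdot\varepsilon$ is clearly nondecreasing in $|S|$ (and $\varepsilon = \tfrac2\beta - 1 > 0$ since $\beta < 1$... actually $\beta \in (0,1)$ gives $\varepsilon>1>0$). The delicate part is the first term: when we add $v$ to $S$, the size goes up by one, but the clique indicator could drop from $1$ to $0$. So I must check that the gain $M\cdot 1$ from the size increase dominates any loss from the indicator flipping off, i.e. that $(M+0)(|S|+1) \ge (M+1)|S|$, which rearranges to $M \ge |S|$. This is exactly why $M$ was chosen to be $|V'| + \varepsilon > |V'| \ge |S|$; so monotonicity holds. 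I would write this as a short case split on whether $S\cup\{v\}$ is a clique.

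**XOS and polynomial-time value queries.** For XOS, the natural approach is to exhibit $f$ as a maximum of additive functions. Write $f = f_1 + f_2$ where $f_1(S) = M|S| + \min(|S|,\delta)\varepsilon$ and $f_2(S) = \mathbb{1}[S\text{ clique}]\cdot|S|$. The function $\min(|S|,\delta)\varepsilon$ is a symmetric concave function of cardinality, hence XOS (it is a positive combination of unit-demand/budget-additive pieces, or directly: $\min(|S|,\delta) = \max_{T \subseteq V', |T|=\delta} |S \cap T|$, a max of additive functions), and $M|S|$ is additive, so $f_1$ is XOS. For $f_2$: I claim $f_2(S) = \max_{C} \sum_{i \in S} \mathbb{1}[i \in C]$ where $C$ ranges over all cliques of $G'$ — because if $S$ is itself a clique, taking $C = S$ gives $|S|$, and if $S$ is not a clique then for any clique $C$ we have $|S \cap C| \le |S| - 1 < |S|$, but also $|S\cap C|$ could be as large as... hmm, I need $f_2(S)=0$ when $S$ is not a clique, so this representation as a max over cliques is wrong. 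Instead I would use the standard fact that the sum of two XOS functions need not be XOS, so decomposing as $f_1+f_2$ is not the right move. Better: write $f(S) = \max_C a_C(S)$ directly, where for each clique $C$ of $G'$ (including $C = \emptyset$) we set $a_C(i) = M + \varepsilon\cdot\mathbb{1}[\text{(assigned to the }\delta\text{-budget)}] + \mathbb{1}[i \in C]\cdot\mathbb{1}[S\subseteq C]$ — but the $\mathbb{1}[S\subseteq C]$ factor breaks additivity too.

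So the honest approach is: the clique bonus $\mathbb{1}[S\text{ clique}]\cdot|S|$ equals $\max_{C\text{ clique}} \big(|S| - \infty\cdot\mathbb{1}[S\not\subseteq C]\big)$ which is not additive. The correct XOS certificate, I expect, comes from observing that $f(S) = M|S| + \min(|S|,\delta)\varepsilon$ whenever $S$ is not a clique, and $f(S) = (M+1)|S| + \min(|S|,\delta)\varepsilon$ when it is; and one builds, for each clique $C$ of $G'$, the additive function
\[
a_C(i) = \begin{cases} M+1 & i \in C\\ M & i \notin C\end{cases}
\]
plus a separate additive ``budget'' contribution handled by the $\min$ — and takes the max over all cliques $C$ together with the additive functions witnessing $\min(|S|,\delta)\varepsilon$, combined appropriately. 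Checking that $\max_C a_C(S) = f(S)$: for $S$ a clique, pick $C \supseteq S$ to get $(M+1)|S|$; for $S$ not a clique, the best $C$ can cover at most $|S|-1$ vertices of $S$ with the $+1$, giving $M|S| + (|S|-1) < M|S| + |S|$, but wait we need it to equal $M|S|$ exactly. This forces me to instead use $a_C(i) = M + \mathbb{1}[i\in C] - (\text{penalty})$, which again fails additivity.

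**The main obstacle.** The genuine difficulty — and the step I'd budget the most care for — is producing a correct XOS representation of the clique-bonus term, because ``$|S|$ if $S$ is a clique, else $0$'' is emphatically not XOS-friendly in isolation (it is not even monotone), and it is only the presence of the large additive $M|S|$ term that rescues both monotonicity and, plausibly, the XOS property. I expect the right move is: for every clique $C$ of $G'$ define the additive function $a_C$ on $V'$ by $a_C(i) = M+1+\varepsilon$ if $i \in C \cap (\text{first }\delta\text{ slots})$-type bookkeeping... and more cleanly, to verify XOS by checking the equivalent ``fractionally subadditive'' characterization directly on this specific $f$, or to note $f$ is a sum of $M\cdot(\text{additive})$ — which dominates — plus a bounded correction, and invoke the fact that a function within an additive shift of being additive, when it stays monotone, admits an XOS clearing price argument. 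I would carefully state the clique-indexed additive family, verify $f = \max$ of it by the two-case argument (clique / non-clique, using that a non-clique $S$ meets any clique $C$ in $\le |S|-1$ points so the max is attained at $C=\emptyset$ giving exactly $M|S|$), and fold in the $\min(|S|,\delta)\varepsilon$ piece as $\max_{|T|=\delta}\varepsilon|S\cap T|$ distributed into each $a_C$. Finally, polynomial-time value queries are immediate: given $S$, compute $|S|$, test whether $S$ induces a clique in $G'$ by checking all $\binom{|S|}{2}$ pairs against $E'$, and compute $\min(|S|,\delta)$ — all in time polynomial in $|V'|$. I would close with that one-line observation.
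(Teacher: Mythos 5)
Your monotonicity argument and the polynomial-time value query observation are both correct, and your monotonicity argument is actually more direct than the paper's (the paper derives monotonicity as a byproduct of the XOS representation having nonnegative weights, whereas you check it head-on by noting that the possible loss of the clique indicator, $(M+1)|S| - M(|S|+1) = |S| - M$, is nonpositive because $M = |V'| + \varepsilon > |V'| \ge |S|$).

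However, the XOS part has a genuine gap: you correctly diagnose why several candidate constructions fail, but you never arrive at a working one. The attempt you spend the most time on — indexing additive functions by cliques $C$ with weights $a_C(i) = M + \mathbb{1}[i \in C]$ — fails precisely because it assigns weight $M$ to vertices \emph{outside} $C$, so a non-clique $S$ gets $\max_C a_C(S) = M|S| + (|S|-1) > f(S) - \varepsilon\min(|S|,\delta)$, overshooting. The missing idea is the paper's construction: index additive functions by \emph{all nonempty subsets} $T \subseteq V'$ (not just cliques), and set the weight to zero outside $T$:
\[
a_i^T = \begin{cases} M + \mathbb{1}[T\text{ is a clique}] + \varepsilon\cdot \dfrac{\min(|T|,\delta)}{|T|} & i \in T,\\[4pt] 0 & i \notin T.\end{cases}
\]
Then $f_T(S) := \sum_{i \in S} a_i^T$ satisfies $f_S(S) = f(S)$ exactly (so the max is attained), and the zero-weight-outside-$T$ structure is what kills the overcounting: if $S \not\subseteq T$ then only $|S \cap T| \le |S|-1$ elements contribute, each at most $M+1+\varepsilon\min(|T|,\delta)/|T|$, and the bound
\[
f_T(S) \le |S\cap T|(M+1) + \varepsilon\min(|S|,\delta) < (|S\cap T|+1)M + \varepsilon\min(|S|,\delta) \le M|S| + \varepsilon\min(|S|,\delta) \le f(S)
\]
closes the gap using $M > |S\cap T|$; if $S \subseteq T$, one instead checks coordinatewise that $a_i^T \le a_i^S$ (the indicator can only drop going from $S$ to a superset $T$, and $\min(|T|,\delta)/|T|$ is nonincreasing in $|T|$). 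Also note that the paper folds the $\varepsilon\min(|S|,\delta)$ term into the same family of additive functions via the factor $\min(|T|,\delta)/|T|$, rather than treating it as a separate max as you suggest — this keeps the whole thing one max over a single indexed family, which is what the XOS definition requires.
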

\begin{proof}
    For any non-empty subset $T\subseteq V'$, we define the additive function $f_T: V'\rightarrow \mathbb{R}_{\ge 0}$ according to the weights
    \[
    a_i^T = \begin{cases}
        M+\mathbb{1}[T\text{ is a clique}] + \varepsilon\cdot \frac{\min(|T|, \delta)}{|T|} & i\in T \\
        0 & else,
    \end{cases}
    \]
    where $f_T(S) = \sum_{i\in S } a_i^T$.
    
    Let $S\subseteq V'$, we claim that
    \[
    f(S) = \max_{T\in 2^{V'}\setminus \{\emptyset\}} f_T(S),
    \]
    which proves both XOS and monotonicity since all weights $a_i^T$ are non-negative. 
    First, by taking $T=S$, it is clear that 
    \[
    \max_{T\in 2^{V'}\setminus \{\emptyset\}} f_T(S) \ge f_S(S) = f(S).
    \]

    Let $T\subseteq V'$. We will show that $f_T(S) \le f(S)$, thus concluding the proof.
    
    If $S \subseteq T$, it is clear that for any $i\in S$ we have $a_i^T \le a_i^S$ which implies $f_T(S) \le f_S(S) = f(S)$.
    
    Otherwise, if $S \not \subseteq T$, it is clear that $|S \cap T| < |S|$, which gives us
    \[
    \begin{split}
    f_T(S) &= \sum_{i\in S} a_i^T \le |S \cap T| \left(M+1 + \varepsilon \frac{\min(|T|, \delta)}{|T|}\right) \le |S\cap T| (M+1) + \varepsilon \min (|S|, \delta) \\
    &< (|S \cap T| + 1) M + \varepsilon \min (|S|, \delta) \le |S| \cdot M + \varepsilon \min(|S|, \delta) \le f(S),
    \end{split}
    \]
    where the strict inequality is since $M > |S \cap T|$.

    A value oracle of $f$ can be computed efficiently by checking, for a given set $S$, whether it is a clique or not.
    \end{proof}

Next, we prove the correctness of Algorithm~\ref{alg:distinguishing_clique}. We start by characterizing the agent's best response.
\begin{lemma} \label{lemma:agent's best response}
    On any input $(G, \delta)$, when considering the contract problem instance defined in line 3 of Algorithm~\ref{alg:distinguishing_clique}:
    \begin{enumerate}
        \item The agent's best response to a contract $0 < \alpha < \frac{M}{M+1+\varepsilon}$ is $\emptyset$.
        \item The agent's best response to a contract $\frac{M}{M+1+\varepsilon} \le \alpha < \frac{M}{M+1}$ is a clique of size $\delta$.
        \item The agent's best response to a contract $ \frac{M}{M+1} \le \alpha < 1$ is a maximum size clique.
    \end{enumerate}
\end{lemma}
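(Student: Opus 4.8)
The plan is to collapse the best-response computation to a one-dimensional problem in the size of the chosen set. Writing the agent's utility as $u_A(\alpha,S) = \alpha f(S) - M|S|$ and substituting the definition of $f$ from line~3, I would first observe that $u_A(\alpha,S)$ depends on $S$ only through $s=|S|$ and through whether $S$ is a clique of $G'$: for a non-clique of size $s$ it equals $\psi(s) := Ms(\alpha-1) + \alpha\varepsilon\min(s,\delta)$, and for a clique of size $s$ it equals $\phi(s)$, which is $s\bigl(\alpha(M+1+\varepsilon)-M\bigr)$ for $s\le\delta$ and $s\bigl(\alpha(M+1)-M\bigr) + \alpha\varepsilon\delta$ for $s\ge\delta$. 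The two expressions agree at $s=\delta$, so $\phi$ is continuous and piecewise linear, with slope $\alpha(M+1+\varepsilon)-M$ on $[0,\delta]$ and the strictly smaller slope $\alpha(M+1)-M$ afterwards. I would also record that, since $[\delta]$ is a clique of $G'$, a clique of every size $0\le j\le\delta$ exists, and the largest clique of $G'$ has size $\omega(G')=\max(\omega(G),\delta)$.

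The first substantive step is to discard non-cliques. For a non-clique $S$ with $s=|S|\ge 2$, let $j=\min(s,\delta)\ge 1$ and compare $S$ with the clique $[j]$: a one-line computation gives $u_A(\alpha,[j]) - u_A(\alpha,S) = M(s-j)(1-\alpha) + j\alpha$, which is strictly positive for every $0<\alpha<1$. Hence every non-clique is strictly dominated, for the agent, by a clique, so no non-clique is ever a best response; it remains to maximize $\phi(s)$ over $s\in\{0,1,\dots,\omega(G')\}$ and apply the tie-break.

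The second step is a case analysis governed by the signs of the two slopes of $\phi$ (since $\varepsilon>0$, the thresholds satisfy $\tfrac{M}{M+1+\varepsilon}<\tfrac{M}{M+1}<1$ and partition $(0,1)$). If $0<\alpha<\tfrac{M}{M+1+\varepsilon}$, both slopes are negative, $\phi$ is strictly decreasing, its unique maximizer is $s=0$, and the dominated non-cliques then also have negative utility, so $\emptyset$ is the unique best response. If $\tfrac{M}{M+1+\varepsilon}\le\alpha<\tfrac{M}{M+1}$, the first slope is $\ge 0$ and the second is $<0$, so $\phi$ is maximized at the kink $s=\delta$ (uniquely among clique sizes in the open interval, tying with sizes $0,\dots,\delta$ at the left endpoint); the best response is thus a clique of size $\delta$. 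If $\tfrac{M}{M+1}\le\alpha<1$, both slopes are $\ge 0$, so $\phi$ is non-decreasing and maximized at $\omega(G')$ (uniquely when $\alpha>\tfrac{M}{M+1}$, tying with sizes $\delta,\dots,\omega(G')$ at the endpoint); the best response is a maximum clique.

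The remaining point, which I expect to be the only delicate one, is the tie-break at the two breakpoints $\alpha=\tfrac{M}{M+1+\varepsilon}$ and $\alpha=\tfrac{M}{M+1}$, where several clique sizes attain the maximal agent utility. This is resolved by noting that on cliques $f(S)=(M+1)|S|+\varepsilon\min(|S|,\delta)$ is strictly increasing in $|S|$, so the principal — and hence the agent's principal-favoring tie-break — selects the largest clique among the tying ones: size $\delta$ in the middle regime and a maximum clique in the top regime. Since the strict domination of step one keeps all non-cliques out of every tie set, this completes all three items. The only genuine computation is the utility comparison of step one; everything else is sign-bookkeeping on a piecewise-linear function.
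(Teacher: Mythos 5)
Your proof is correct. You first reduce to cliques via a clean domination argument and then treat $u_A$ as a one-dimensional piecewise-linear function of $|S|$ with two slopes, whereas the paper fixes a target set $T$ (a clique of size $\delta$, or a maximum clique, depending on the regime), writes out $u_A(\alpha,T)$, and shows case by case that sets of larger or smaller size cannot beat $T$, using size-dependent upper bounds on $f(S)$. Both routes hinge on the same arithmetic facts — $f$ depends on $S$ only through $|S|$ and clique-ness, and the two threshold values of $\alpha$ are exactly where the slopes $\alpha(M{+}1{+}\varepsilon)-M$ and $\alpha(M{+}1)-M$ change sign — but your organization has a real advantage: the explicit domination lemma ($u_A(\alpha,[j])-u_A(\alpha,S)=M(s-j)(1-\alpha)+j\alpha>0$ for any non-clique $S$ with $s=|S|$, $j=\min(s,\delta)$) disposes of all non-cliques in one stroke and all three cases of the lemma, including the tie-break at the breakpoints, then follow by reading off the maximizer of the concave piecewise-linear $\phi$. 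The paper instead re-derives the irrelevance of non-cliques inside each case (e.g.\ in its third case it separately observes that a set larger than a maximum clique cannot itself be a clique). Your tie-break handling — noting $f$ is strictly increasing in clique size, so the principal-favoring tie-break picks the largest tying clique, and the strict domination keeps non-cliques out of every tie set — is a point the paper treats more tersely and you make fully explicit. No gaps.
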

\begin{proof}
    Let $\alpha\in (0,1)$.
    \paragraph{If $\alpha < \frac{M}{M+1+\varepsilon}$.}
    The agent's utility from a set $S\subseteq V'$ is 
    \[
    u_A(\alpha, S) = f(S)\cdot \alpha - |S|\cdot M \le (M+1+\varepsilon) |S| \cdot \alpha - |S| \cdot M,
    \]
    which is strictly negative unless $|S| = 0$.
    \paragraph{If $\frac{M}{M+1+\varepsilon} \le \alpha < \frac{M}{M+1}$.}
    Let $T$ be a clique of size $\delta$ in $G'$ (such a clique exists since we can simply take the $\delta$ vertices we added to $G$). The agent's utility from $T$ is 
    \[
    u_A(\alpha, T) = f(T) \cdot \alpha - |T| \cdot M = (M+1+\varepsilon) \cdot \delta \cdot \alpha - \delta \cdot M = \delta \left((M+1+\varepsilon) \cdot \alpha - M \right). 
    \]
    First, we note that any set $S$ with $|S| > \delta$ is strictly worse than $T$:
    \[
    \begin{split}
    u_A(\alpha, S) - u_A(\alpha, T) &\le (|S|(M+1)+\varepsilon\delta )\cdot \alpha - M |S| - \delta \left((M+1+\varepsilon) \cdot \alpha - M \right) \\
    &= (|S|-\delta)\left( (M+1) \alpha -M\right) < 0.
    \end{split}
    \]
    Additionally, any $S$ such that $|S| \le \delta$ is not better than $T$:
    \[
    u_A(\alpha, S) \le |S| \left((M+1+\varepsilon)\cdot \alpha - M\right) \le u_A(\alpha, T).
    \]
    These two facts, together with the fact that out of the sets with $|S| \le \delta$, $T$ maximizes $f(T)$ and the agent breaks ties in favor of the principal, give us that the agent's best response is $T$ (or another clique of size $\delta$).
    
    \paragraph{If $ \frac{M}{M+1} \le \alpha <1$.} Let $T$ be a maximum size clique. The agent's utility from $T$ is
    \[
    u_A(\alpha, T) = f(T) \cdot \alpha - |T| \cdot M = \alpha((M+1)|T| + \varepsilon \delta) - |T| \cdot M \ge \alpha\cdot \varepsilon \cdot \delta.
    \]
    Again, we note that any set $S$ with $|S| > |T|$ is strictly worse than $T$. Since $S$ cannot be a clique, we get that:
    \[
    \begin{split}
    u_A(\alpha, S) = \alpha(M|S| + \varepsilon \delta) - |S| \cdot M < \alpha \cdot \varepsilon \cdot \delta 
    \end{split}
    \]
    Likewise, any set $S$ with $|S| \le |T|$ isn't better than $T$:
    \[
    \begin{split}
    u_A(\alpha, S) &\le (|S|(M+1)+\varepsilon \delta) \cdot \alpha - |S| \cdot M = \varepsilon \cdot \delta \cdot \alpha + |S|((M+1)\cdot \alpha - M) \\
    &\le \varepsilon \cdot \delta \cdot \alpha + |T|((M+1)\cdot \alpha - M) = u_A(\alpha, T).
    \end{split}
    \]
    Again, since $T$ maximizes $f(T)$ (among all sets of size at most $|T|$) and the agent breaks ties in favor of the principal, the the agent's best response is $T$ (or another clique of maximum size).
\end{proof}

A direct corollary of Lemma~\ref{lemma:agent's best response} is the following lemma, which is a full characterization of the principal's utility from any contract: 
\begin{corollary} \label{cor: principal's utility} 
    On any input $(G, \delta)$, when considering the contract problem instance defined in line 3 of Algorithm~\ref{alg:distinguishing_clique}:
    \begin{enumerate}
        \item The principal's utility from a contract $0 < \alpha < \frac{M}{M+1+\varepsilon}$ is 
        \[
        u_P(\alpha) = 0.
        \]
        \item The principal's utility from a contract $\frac{M}{M+1+\varepsilon} \le \alpha < \frac{M}{M+1}$ is \[
        u_P(\alpha) = \left((M+1)\cdot \delta + \min (\delta, \delta) \cdot \varepsilon \right)\cdot (1-\alpha) = (M+1+\varepsilon)\delta (1-\alpha).
        \]
        \item The principal's utility from a contract $\frac{M}{M+1} \le \alpha < 1$ is
        \[
        u_P(\alpha) = \left((M+1)\omega(G') +\delta \varepsilon \right)(1-\alpha).
        \]
    \end{enumerate}
\end{corollary}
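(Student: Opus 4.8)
The plan is to invoke Lemma~\ref{lemma:agent's best response} to identify the agent's best response $S_\alpha$ in each of the three contract regimes, and then evaluate the principal's utility through the identity $u_P(\alpha)=f(S_\alpha)(1-\alpha)$, reading off $f(S_\alpha)$ from the definition of $f$ in line~3 of Algorithm~\ref{alg:distinguishing_clique}. No new idea is needed — this is precisely why the statement is phrased as a corollary — so the whole argument is a case split that mirrors the case split of the lemma.

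First, for $0<\alpha<\frac{M}{M+1+\varepsilon}$, Lemma~\ref{lemma:agent's best response}(1) gives $S_\alpha=\emptyset$; substituting $|S|=0$ into $f$ gives $f(\emptyset)=0$, hence $u_P(\alpha)=0$. Next, for $\frac{M}{M+1+\varepsilon}\le\alpha<\frac{M}{M+1}$, Lemma~\ref{lemma:agent's best response}(2) gives that $S_\alpha$ is a clique of size $\delta$; since it is a clique the indicator term equals $1$, and since $|S_\alpha|=\delta$ we have $\min(|S_\alpha|,\delta)=\delta$, so $f(S_\alpha)=(M+1)\delta+\delta\varepsilon=(M+1+\varepsilon)\delta$, giving $u_P(\alpha)=(M+1+\varepsilon)\delta(1-\alpha)$.

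Finally, for $\frac{M}{M+1}\le\alpha<1$, Lemma~\ref{lemma:agent's best response}(3) gives that $S_\alpha$ is a maximum-size clique of $G'$, so $|S_\alpha|=\omega(G')$ and the indicator term equals $1$. The one point worth spelling out is that $\omega(G')\ge\delta$, because the $\delta$ added vertices of $G'$ form a clique (a fact already used inside the proof of Lemma~\ref{lemma:agent's best response}); consequently $\min(|S_\alpha|,\delta)=\delta$ and $f(S_\alpha)=(M+1)\omega(G')+\delta\varepsilon$, which yields $u_P(\alpha)=\big((M+1)\omega(G')+\delta\varepsilon\big)(1-\alpha)$.

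The ``hard part'' is essentially negligible here: the only thing one must not overlook is that the $\min(\cdot,\delta)$ term in the definition of $f$ collapses to $\delta$ in the second and third regimes, which follows from $|S_\alpha|=\delta$ and from $|S_\alpha|=\omega(G')\ge\delta$ respectively; everything else is direct substitution into $u_P(\alpha)=f(S_\alpha)(1-\alpha)$.
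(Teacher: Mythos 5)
Your proof is correct and takes exactly the route the paper intends: the paper presents this as a ``direct corollary'' of Lemma~\ref{lemma:agent's best response} without a written proof, and your case split together with the observation that $\min(|S_\alpha|,\delta)=\delta$ in regimes (2) and (3) (the latter because the $\delta$ added vertices guarantee $\omega(G')\ge\delta$) is precisely the computation that is being left to the reader.
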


We are now ready to prove Lemma~\ref{lemma: distinguishing clique}
\begin{proof} [Proof of Lemma~\ref{lemma: distinguishing clique}]
    Let $(G,\delta)$ be the input of Algorithm~\ref{alg:distinguishing_clique}. To prove Lemma~\ref{lemma: distinguishing clique} it suffices to prove that if $w(G) \le \delta$, Algorithm~\ref{alg:distinguishing_clique} returns $\smallClique$ and if $w(G) \ge \frac{2\delta}{\beta^2}$, it returns $\largeClique$.

    \paragraph{If $w(G) \le \delta$.} In this case,  $\omega(G') = \delta$. Assume towards contradiction that Algorithm~\ref{alg:distinguishing_clique} returns $\largeClique$, i.e., $\alpha_0 \ge \frac{M}{M+1}$. Since $\alpha_0$ is a $\beta$-approximation, we have 
    \[
    u_P(\alpha_0) \ge \beta \cdot u_P\left(\frac{M}{M+1+\varepsilon}\right).
    \]
    Corollary~\ref{cor: principal's utility} gives us
    \[
    u_P\left(\frac{M}{M+1}\right) \ge u_P(\alpha_0).
    \]
    Putting the two inequalities together, and substituting them with the expressions for the principal's utility from Corollary~\ref{cor: principal's utility} gives us
    \begin{equation} \label{eq: first utility ineq.}
    \left((M+1)\omega(G') +\delta \varepsilon \right)\left(1-\frac{M}{M+1}\right) \ge \beta \cdot (M+1+\varepsilon)\delta \left(1-\frac{M}{M+1+\varepsilon}\right).
    \end{equation}
    Note that since $\varepsilon = \frac{2}{\beta}-1$, then
    \[
    \left(1-\frac{M}{M+1+\varepsilon}\right)=  \frac{1+\varepsilon}{M+1+\varepsilon} = \frac{2}{\beta} \cdot \frac{1}{M+1+\varepsilon} > \frac{2}{\beta} \cdot \frac{1}{2(M+1)} = \frac{1}{\beta} \cdot \left(1-\frac{M}{M+1}\right),
    \]
    where the inequality is since $M \ge \varepsilon$. Plugging this into Inequality~\ref{eq: first utility ineq.} and dividing both sides by $\left(1-\frac{M}{M+1}\right)$ gives us 
    \[
    (M+1)\omega(G') +\delta \varepsilon   >  (M+1+\varepsilon)\delta.
    \]
    This gives us $\omega(G') > \delta$, contradiction.

    \paragraph{If $w(G) \ge \frac{2\delta}{\beta^2}$.}  In this case, $\omega(G') \ge \frac{2\delta}{\beta^2}$. Assume by contradiction that Algorithm~\ref{alg:distinguishing_clique} returns $\smallClique$, i.e., $\alpha_0 < \frac{M}{M+1}$. Since $\alpha_0$ is a $\beta$-approximation, we have 
    \[
    u_P(\alpha_0) \ge \beta \cdot u_P\left(\frac{M}{M+1}\right).
    \]
    Corollary~\ref{cor: principal's utility} gives us
    \[
    u_P\left(\frac{M}{M+1+\varepsilon}\right) \ge u_P(\alpha_0).
    \]
    Putting the two inequalities together, and substituting them with the expressions for the principal's utility from Corollary~\ref{cor: principal's utility} gives us
    \begin{equation} \label{eq: second utility ineq.}
     (M+1+\varepsilon)\delta \left(1-\frac{M}{M+1+\varepsilon}\right) \ge \beta \cdot \left((M+1)\omega(G') +\delta \varepsilon \right)\left(1-\frac{M}{M+1}\right).
    \end{equation}
    Note that since $\varepsilon = \frac{2}{\beta}-1$, then
    \[
    \left(1-\frac{M}{M+1+\varepsilon}\right)=  \frac{1+\varepsilon}{M+1+\varepsilon} = \frac{2}{\beta} \cdot \frac{1}{M+1+\varepsilon} < \frac{2}{\beta} \cdot \frac{1}{M+1}.
    \]
    Plugging this into the LHS of Inequality~\ref{eq: second utility ineq.} and dividing both sides by $\frac{2}{\beta} \cdot \left(1-\frac{M}{M+1}\right)$ gives us 
    \[
    (M+1+\varepsilon)\delta > \frac{\beta^2}{2}  \cdot \left((M+1)\omega(G') +\delta \varepsilon \right) \ge \frac{\beta^2}{2}  \cdot (M+1+\varepsilon)\omega(G')
    \]
    where the second inequality is from $\omega(G') \ge \delta$. This gives us $\omega(G') < \frac{2\delta}{\beta^2}$, contradiction.
\end{proof}
\subsection{Using the Differentiation to Approximate $\omega(G)$} \label{subsec:approximating_clique}
In this section we show how to use the guarantees given by Section~\ref{subsec:distinguishing_clique} to get a $\beta^2/4$-approximation algorithm for $\omega(G)$, given an oracle access to a $\beta$-approximation of the optimal contract, thus concluding the proof of Theorem~\ref{thm:multi-action hardness XOS}.

\begin{lemma} \label{lemma: approximating clique}
    Algorithm~\ref{alg:approx_clique}, given oracle access to a $\beta$-approximation of the optimal contract for XOS functions, runs in polynomial time, and on input $G=(V,E)$ gives a $\frac{\beta^2}{4}$-approximation of $\omega(G)$. 
\end{lemma}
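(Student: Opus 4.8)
The plan is to run Algorithm~\ref{alg:distinguishing_clique} as a subroutine for a geometrically-growing sequence of thresholds $\delta_i = 2^i$, and to extract from the sequence of $\smallClique$/$\largeClique$ answers a narrow enough window containing $\omega(G)$. Concretely, I would iterate $i$ from $0$ up to $\lceil \log_2 |V| \rceil$, calling Algorithm~\ref{alg:distinguishing_clique} on $(G,\delta_i)$ each time; this is polynomially many calls, each of which runs in polynomial time by Lemma~\ref{lemma: distinguishing clique}, so the overall running time is polynomial. By part~1 of Lemma~\ref{lemma: distinguishing clique}, whenever $\delta_i \ge \omega(G)$ the call returns $\smallClique$; by part~2, whenever $\delta_i \le \frac{\beta^2}{2}\omega(G)$ the call returns $\largeClique$. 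Hence the answers are not literally monotone, but there must exist an index $i^\star$ at which the answer on $\delta_{i^\star}$ is $\largeClique$ and the answer on $\delta_{i^\star+1}$ is $\smallClique$ --- take $i^\star$ to be the largest index with a $\largeClique$ answer (it exists since $\delta_0 = 1$ forces $\largeClique$ whenever $\omega(G)$ is large, and for small $\omega(G)$ one can just output $\omega(G)$ exactly by brute force on the trivial range, or handle it as a boundary case).

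From the $\largeClique$ answer on $\delta_{i^\star}$, the contrapositive of part~1 of Lemma~\ref{lemma: distinguishing clique} gives $\omega(G) > \delta_{i^\star}$. From the $\smallClique$ answer on $\delta_{i^\star+1}$, the contrapositive of part~2 gives $\omega(G) < \frac{2\delta_{i^\star+1}}{\beta^2} = \frac{4\delta_{i^\star}}{\beta^2}$. So the algorithm outputs $\delta_{i^\star}$ (or, say, $\delta_{i^\star}+1$ to be safe with the strict inequality) as its estimate, and we have
\[
\delta_{i^\star} < \omega(G) < \frac{4}{\beta^2}\,\delta_{i^\star},
\]
which means $\delta_{i^\star}$ is a $\frac{\beta^2}{4}$-approximation of $\omega(G)$, as required. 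Combining this with the Håstad--Zuckerman theorem that $\omega(G)$ is NP-hard to approximate within $n^{-1+\varepsilon}$ for every $\varepsilon>0$: if a polynomial-time $\beta$-approximation for the optimal contract existed with $\beta = n^{-1/2+\varepsilon'}$, we would get a polynomial-time $\frac{\beta^2}{4} = \frac{1}{4}n^{-1+2\varepsilon'}$-approximation for $\omega(G)$, contradicting the theorem (for small enough $\varepsilon'$); this yields Theorem~\ref{thm:multi-action hardness XOS}.

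I expect the only real subtlety to be bookkeeping at the boundaries of the $\delta$-sweep and the handling of the "middle" case (part~3 of Lemma~\ref{lemma: distinguishing clique}), where the distinguisher is allowed to answer either way: one must argue that the index $i^\star$ defined above still yields a valid window, and that the constant-factor slack $4/\beta^2$ swallows the factor-$2$ granularity of the geometric sweep --- which it does, precisely because consecutive thresholds differ by a factor of $2$ and the promise gap in Lemma~\ref{lemma: distinguishing clique} is a factor of $2/\beta^2$. There is also a minor point that $\omega(G)$ and the $\delta_i$ are integers, so the strict inequalities above are harmless, and the trivial regime $\omega(G) = O(1)$ can be detected directly.
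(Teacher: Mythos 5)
Your proposal follows essentially the same route as the paper: sweep $\delta_i = 2^i$, locate the largest index returning $\largeClique$, and read off a factor-$4/\beta^2$ window around $\omega(G)$ from the contrapositives of parts 1 and 2 of Lemma~\ref{lemma: distinguishing clique}. One small flaw: your fallback of ``brute force on the trivial range'' when no $\largeClique$ answer appears does not actually work, because that regime is $\omega(G) < 2/\beta^2$, and in the application $\beta$ is polynomially small (e.g.\ $\beta = n^{-1/2+\varepsilon}$), so $2/\beta^2$ is polynomial in $n$ and enumerating candidate cliques of that size is exponential. The correct handling, and the one the paper uses, is simply to output $1$ in that case: the $\smallClique$ answer at $\delta_0 = 1$ already certifies $\omega(G) < 2/\beta^2$, so the constant estimate $1$ is a $\beta^2/2 \ge \beta^2/4$ approximation with no further search. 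Your hedge ``or handle it as a boundary case'' gestures at this, but the brute-force suggestion as stated would fail.
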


\begin{algorithm}
    \caption{Approximation $\omega(G)$} \label{alg:approx_clique}
    \begin{algorithmic}[1]
            \State Given a graph $G=(V,E)$
            \For{$i \gets 0$ to $\lfloor\log_2(|V|)\rfloor$}
                \State Run Algorithm~\ref{alg:distinguishing_clique} on $(G, \delta=2^i$), and denote its answer by $a(i)$.
            \EndFor
            \If {$a(0) = \smallClique$}
                \State \Return 1.
            \Else
                \State \Return $2^{i_{\max}}$ where $i_{\max}$ is the maximal $i$ such that $a(i)=\largeClique$.
            \EndIf
    \end{algorithmic} 
\end{algorithm}

\begin{proof} [proof of Lemma~\ref{lemma: approximating clique}]
    If Algorithm~\ref{alg:approx_clique} returns $1$ on line 6, then from Lemma~\ref{lemma: distinguishing clique}, we know that $\omega(G) \le \frac{2}{\beta^2}$, meaning the algorithm's output is a $\frac{\beta^2}{2}$-approximation, as needed.
    
    Otherwise, let $i_{\max}$ be the maximal $i$ such that $a(i)=\largeClique$. By Lemma~\ref{lemma: distinguishing clique} we know that $\omega(G) \ge 2^{i_{\max}}$. 
    
    If $i_{\max}=\lfloor \log_2 (|V|) \rfloor$, returning $2^{i_{\max}} \ge \frac{|V|}{2}$ gives us a $\frac{1}{2}$-approximation (since trivially $\omega(G) \le |V|$), as needed.
    
    Finally, if $i_{\max}<\lfloor \log_2 (|V|) \rfloor$, by our choice of $i_{\max}$ we know that $a(i_{\max}+1)=\smallClique$, meaning from Lemma~\ref{lemma: distinguishing clique} that $\omega(G) \le \frac{2\cdot 2^{i+1}}{\beta^2}$, which means returning $2^i$ gives us a $\frac{\beta^2}{4}$-approximation, as needed.
\end{proof}

\bibliographystyle{abbrvnat}

\bibliography{bib.bib}
\appendix
\section{Proof of Proposition~\ref{prop:coverage hardness}} \label{app: coverage promise problem proof}
\subsection{The MAX 3SAT-5 Problem} \label{subsec:max-3sat-5}
In this section we present the MAX 3SAT-5 problem.

A 3CNF-5 formula is a CNF formula $\phi = (n, \{X_1, \dots, X_n\}, \{\phi_1, \dots, \phi_{\frac{5n}{3}}\})$ with $n$ variables $X_1, \dots, X_n$ and $\frac{5n}{3}$ clauses $\phi_1, \dots, \phi_{\frac{5n}{3}}$, in which every clause contains exactly three literals, every variable appears in exactly five clauses, and a variable does not appear in a clause more than once.

The MAX 3SAT-5 problem asks to find the maximum number of clauses that can be satisfied simultaneously in a 3CNF-5 formula.

\cite{feige} establishes the following proposition (stated as 
Proposition 2.1.2 in \cite{feige}).
\begin{proposition}[\cite{feige}] \label{prop: 3sat-5 hardness}
For some $\varepsilon > 0$, it is NP-hard to distinguish between satisfiable 3CNF-5 formulas and 3CNF-5 formulas in which at most a $(1-\varepsilon)$-fraction of the clauses can be satisfied simultaneously.
\end{proposition}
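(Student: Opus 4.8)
Since this proposition is quoted from \citet{feige}, the plan is to recall the standard derivation: combine the PCP theorem with a gap-preserving ``regularization'' reduction that turns an arbitrary 3CNF instance into a 3CNF-5 one. The starting point is the PCP theorem in its ``hardness of Gap-3SAT'' form: there is an absolute constant $\varepsilon_0>0$ such that it is NP-hard to distinguish satisfiable 3CNF formulas from 3CNF formulas in which every assignment satisfies at most a $(1-\varepsilon_0)$-fraction of the clauses (one does not need the optimal constant here). It then remains to reduce, in a gap-preserving way, to formulas of the restricted form demanded by Proposition~\ref{prop: 3sat-5 hardness}: exactly three literals per clause, exactly five clauses per variable, and no variable repeated within a clause.

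For the reduction I would apply the classical occurrence-equalizing gadget (Papadimitriou--Yannakakis, as used by Feige). Given a 3CNF $\psi$, replace, for each variable $x$ occurring $d_x$ times, the $j$-th occurrence of $x$ by a fresh ``copy'' variable $x^{(j)}$, and add a bounded-degree \emph{consistency gadget} on $x^{(1)},\dots,x^{(d_x)}$ --- a family of short clauses laid out along a constant-degree expander graph on the copies --- whose role is to force $x^{(1)}=\dots=x^{(d_x)}$. One then pads every clause to exactly three distinct literals, and finally adds a constant number of dummy variables and clauses (or splits/merges occurrences) so that every variable appears in exactly five clauses; these cosmetic steps change the clause count by at most a constant factor. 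The output $\phi$ is a 3CNF-5 formula, computable in polynomial time, with $|\text{clauses of }\phi|=\Theta(|\text{clauses of }\psi|)$.

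The gap analysis is the substantive part. Completeness is immediate: a satisfying assignment of $\psi$ lifts to one of $\phi$ by setting every copy $x^{(j)}$ equal to $x$, which satisfies the original clauses and every consistency clause. For soundness, suppose some assignment to $\phi$ satisfies a $(1-\gamma)$-fraction of its clauses; round it to an assignment of $\psi$ by giving $x$ the majority value of its copies $x^{(j)}$. The edge-expansion of the consistency gadgets guarantees that the number of original clauses whose satisfaction is destroyed by this rounding is at most a constant times the number of violated consistency clauses, hence at most $c\gamma\cdot|\text{clauses of }\psi|$ for an absolute constant $c$. Contrapositively, if every assignment of $\psi$ leaves an $\varepsilon_0$-fraction unsatisfied, then every assignment of $\phi$ leaves at least an $\varepsilon_0/c$-fraction unsatisfied, so Proposition~\ref{prop: 3sat-5 hardness} holds with $\varepsilon=\varepsilon_0/c$.

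The main obstacle is precisely this soundness rounding: one must choose the expander degree and gadget size so that violating few consistency clauses provably forces the copies of each variable to be ``mostly'' a single value (a flow / edge-expansion argument), and carry out the exact ``five occurrences'' normalization without blowing the instance up super-linearly. All of this is routine and can be read off verbatim from \citet{feige}; the only point of spelling it out is to pin down the resulting constant $\varepsilon$.
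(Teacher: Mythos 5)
The paper does not prove this proposition; it is quoted verbatim as Proposition~2.1.2 of \citet{feige} and used as a black box. So there is no in-paper proof to compare against. Your reconstruction --- PCP-theorem hardness of Gap-3SAT followed by the Papadimitriou--Yannakakis occurrence-equalizing reduction with an expander-based consistency gadget and a majority-vote rounding for soundness --- is indeed the standard derivation on which Feige's Proposition~2.1.2 rests, and the overall argument is structurally sound.

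One point you treat as ``cosmetic'' deserves more care than you give it: getting \emph{exactly} five occurrences per variable while preserving the gap. The expander gadget already fixes the per-copy degree. If each edge $(j,j')$ of a $d$-regular expander contributes two $2$-clauses $(x^{(j)}\vee\neg x^{(j')})$ and $(\neg x^{(j)}\vee x^{(j')})$, then each copy occurs $2d+1$ times, which equals $5$ only for $d=2$ --- and $2$-regular graphs (disjoint cycles) are not expanders, so the soundness step collapses. To land on exactly five occurrences with a working expansion argument one must choose the gadget more delicately (e.g., orient a constant-degree expander so each copy lies on a bounded number of directed implication clauses, or use a different consistency encoding, and then absorb the $2$-clauses into $3$-clauses with fresh padding variables each appearing the right number of times). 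None of this is hard, but it is not a ``cosmetic'' post-processing step that preserves the gap for free; the regularization parameters must be chosen in tandem with the gadget. Since this is a cited classical result, filling that detail precisely would simply mean following the normalization in \citet{feige} or its references.
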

We refer to the problem defined in this proposition as the 3CNF-5 separation problem.

\subsection{The $k$-Prover Proof System} \label{subsec:k-prover system}
The $k$-prover proof system presented by \cite{feige} has a parameter $\ell$ that affects its runtime and error rate. In it, a verifier interacts with $k$ provers, denoted $P_1, \dots, P_k$.
Consider a binary code that contains $k$ code words $w_1, \dots, w_k \in \{0,1\}^\ell$, each of length $\ell$, and
weight $\ell/2$, and Hamming distance at least $\ell/3$ between any two different code words.

In the $k$-prover system, the verifier performs the following steps when presented with a 3CNF-5 formula $\phi = (n, \{X_1, \dots, X_n\}, \{\phi_1, \dots, \phi_{\frac{5n}{3}}\})$:
\begin{enumerate}
    \item The verifier initiates by uniformly sampling a random state $r$ from the set of possible random states $R = [\frac{5n}{3}]^\ell\times[3]^\ell$.
    \item Based on the random first $\ell$ coordinates of state $r$, the verifier chooses clauses $C_i = \phi_{r_i}$. Using the last $\ell$ coordinates of $r$, from each clause $C_i$, a variable denoted as $x_i$ is selected using $r_{\ell+i}$. Specifically, $x_i$ represents the $r_{\ell+i}$-th variable appearing in $C_i$. The variables $x_1, \dots, x_\ell$ are referred to as the \emph{distinguished variables}, and may include repetitions. 
    \item The verifier has a set of possible questions $Q = [\frac{5n}{3}]^{\ell/2} \times [n]^{\ell/2}$ that can be directed to each of the provers which represents all possible choices of $\ell/2$ clauses and $\ell/2$ variables (with repetition). For every $i$, question $q_i(\phi, r)\in Q$, which consists of $\ell/2$ \textit{variable queries} and $\ell/2$ \textit{clause queries}, is sent to prover $P_i$.  
    To prover $P_i$, for each $j\in [\ell]$, the verifier queries either clause $C_j$ --- if $\left(w_i\right)_j = 1$ --- or the variable $x_j$ --- if $\left(w_i\right)_j = 0$. We call this the $j$th query, despite it not necessarily being the $j$th element in $q_i(\phi, r)$. The first $\ell/2$ elements of $q_i(\phi, r)$ represent the indices of the clauses to be queried (i.e., $\ell/2$ of the first $\ell$ elements of $r$). The last $\ell/2$ elements of $q_i(\phi, r)$ represent the indices of the variables to be queried. 
    \item The verifier receives an answer $a_i \in \{0, 1\}^{2\ell}$ from each prover $P_i$ that depends only on $\phi$ and on $q_i(\phi,r)$.\footnote{The prover $P_i$ is not aware of the random choice $r$ of the verifier, neither of the questions $q_{i'}(\phi,r)$ sent to prover $P_{i'}$ for  $i'\neq i$.} The answer $a_i$ consists of $2\ell$ bits, where each variable query is answered with an assignment to that variable, and each clause query is answered with an assignment to all three variables within the clause. Note that conflicting answers may be received for queries directed to the same prover. It is assumed, without loss of generality, that each answer to a clause query represents a satisfying assignment for that clause. In case the answer does not satisfy the clause, the verifier can simply flip the first bit of the three bits corresponding to that clause query's answer. 
    \item From each answer $a_i$, an assignment $\rho_i$ of the distinguished variables is extracted. The assignment assigns a value to the distinguished variable $x_j$ based on the answer to the $j$th query. More specifically, if $(w_i)_j = 0$, then $\rho_i$ simply assigns to $x_j$ the answer to the $j$th query (which directly queried $x_j$). If $(w_i)_j = 1$, then $\rho_i$ assigns to $x_j$ the bit corresponding to it from the 3-bit answer to the $j$th query. Note that if the distinguished variables contain repetitions, the extracted assignment may not be valid for the formula's variables corresponding to the distinguished variables; we allow such contradictions.
    \item The verifier \emph{weakly accepts} if \emph{at least one pair} of provers $P_i \ne P_{i'}$ agrees on the assignment to the distinguished variables (i.e., $\rho_i = \rho_{i'}$).
    \item The verifier \emph{strongly accepts} if \emph{every} pair of provers $P_i \ne P_{i'}$ agrees on the assignment to the distinguished variables.
\end{enumerate}

This k-prover proof system satisfies the following lemma (lemma 2.3.1 from \cite{feige}):

\begin{lemma}[\cite{feige}] \label{k-prover lemma}
Consider the $k$-prover proof system defined above and a 3CNF-5
formula $\phi$. If $\phi$ is satisfiable, then the provers have a strategy that causes the verifier to always strongly accept. If at most a $(1-\varepsilon)$-fraction of the clauses in $\phi$ are simultaneously satisfiable, then for every strategies of the provers the verifier weakly accepts with probability at most $k^2 \cdot 2^{-c\ell}$, where $c$ is a constant that depends only on $\varepsilon$.
\end{lemma}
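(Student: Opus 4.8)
The plan is to derive Lemma~\ref{k-prover lemma} from three ingredients: the basic two-prover one-round game for gap-3CNF-5 (whose value is bounded away from $1$), the parallel repetition theorem, and the combinatorial properties of the binary code $w_1,\dots,w_k$, which will let us extract, from any pair of provers whose extracted assignments agree, a strategy for a parallel-repeated two-prover game. Completeness is the easy direction: if $\phi$ is satisfiable, fix a satisfying assignment $A$ and instruct every prover to answer each variable query with $A$'s value on that variable and each clause query with $A$ restricted to the three variables of the clause (a satisfying partial assignment, as the protocol requires). Then for every prover $P_i$ and every $j\in[\ell]$ the extracted value $\rho_i(x_j)$ equals $A(x_j)$, regardless of whether $(w_i)_j$ is $0$ (the variable was queried directly) or $1$ (its value is read off the clause answer); hence $\rho_i=\rho_{i'}$ for all $i,i'$ and the verifier strongly — in particular weakly — accepts with probability $1$.

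For soundness, assume at most a $(1-\varepsilon)$-fraction of the clauses of $\phi$ are simultaneously satisfiable and fix an arbitrary strategy of the $k$ provers. By a union bound $\Pr[\text{weakly accept}]\le\sum_{i\neq i'}\Pr[\rho_i=\rho_{i'}]$, so it suffices to bound $\Pr[\rho_i=\rho_{i'}]$ for each fixed ordered pair by $2^{-c\ell}$; the at most $k^2$ pairs then produce the factor in the statement. Fix such a pair. Since $w_i$ and $w_{i'}$ both have weight $\ell/2$ and Hamming distance at least $\ell/3$, the coordinate set $J=\{\,j\in[\ell]:(w_i)_j=1,\ (w_{i'})_j=0\,\}$ has size at least $\ell/6$: the number of coordinates where $w_i$ is $1$ and $w_{i'}$ is $0$ equals the number where $w_i$ is $0$ and $w_{i'}$ is $1$ (equal weights), and these two sum to the Hamming distance. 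On each $j\in J$, prover $P_i$ is asked the clause $C_j$ and prover $P_{i'}$ the distinguished variable $x_j$, and — because disjoint blocks of the random string $r$ govern distinct coordinates — these $|J|$ coordinates are distributed exactly as $|J|$ independent copies of the basic two-prover question pair (a uniformly random clause to one prover, a uniformly random variable of it to the other). Crucially, $\rho_i=\rho_{i'}$ forces, for every $j\in J$, that the bit $P_i$'s (satisfying) clause answer assigns to $x_j$ equals $P_{i'}$'s variable answer — precisely the basic two-prover consistency check on the $j$-th repetition.

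It remains to bound the success probability of the embedded game. I would use the standard fact that the basic two-prover game on this gap-3CNF-5 instance has value at most $1-\varepsilon/3$: any prover strategy induces, via the variable prover, an assignment satisfying at most a $(1-\varepsilon)$-fraction of the clauses, so with probability at least $\varepsilon$ the verifier picks an unsatisfied clause, on which the clause prover's satisfying answer must disagree with the assignment on some variable, and this is exposed with probability at least $1/3$. By Raz's parallel repetition theorem, the $|J|$-fold repetition of this game then has value at most $2^{-c'|J|}\le 2^{-c'\ell/6}$ for a constant $c'=c'(\varepsilon)>0$. Hence $\Pr[\rho_i=\rho_{i'}]\le 2^{-c'\ell/6}$; setting $c=c'/6$ and summing over the at most $k^2$ pairs yields $\Pr[\text{weakly accept}]\le k^2\cdot 2^{-c\ell}$, as claimed.

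The step I expect to be the main obstacle is turning a $k$-prover strategy into a legitimate strategy for the $|J|$-fold two-prover game: on the coordinates of $J$, provers $P_i$ and $P_{i'}$ have also received questions on the coordinates outside $J$, so the two simulated players must re-generate those extra questions themselves. This is sound precisely because the extra questions are independent of the sub-game carried by $J$, so each simulated player can produce them using private randomness and then run the corresponding $k$-prover strategy; after this reduction, Raz's parallel repetition theorem applies as a black box. A secondary point to verify is that repetitions among the distinguished variables are harmless — the protocol explicitly tolerates self-contradictory extracted assignments, and nowhere does the argument use consistency across different coordinates.
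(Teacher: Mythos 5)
The paper does not prove this lemma; it quotes it verbatim as Lemma~2.3.1 from Feige, so there is no internal proof to compare against. Your reconstruction is correct and matches Feige's argument: the union bound over prover pairs, the $|J|\ge \ell/6$ ``clause-vs-variable'' coordinates guaranteed by the code having weight $\ell/2$ and pairwise distance at least $\ell/3$, and Raz's parallel repetition theorem applied to the constant-answer-size, $(1-\varepsilon/3)$-value clause-variable game. The subtlety you flag --- that the simulated players must regenerate the questions on coordinates outside $J$ --- is handled correctly: those coordinates of $r$ are independent of the $J$-block, so both simulated players can reproduce them from a shared random string that is uncorrelated with the repeated-game questions, and by averaging this shared randomness cannot increase the value of $G^{\otimes J}$, so Raz's bound applies unchanged and yields $\Pr[\rho_i=\rho_{i'}]\le 2^{-c'\ell/6}$ as you state.
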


\subsection{Construction of Our Coverage Function} \label{subsec:construction of coverage function}
In this section, we use the $k$-prover proof system to build a reduction from 3CNF-5 separation problem
{(defined in Proposition~\ref{prop: 3sat-5 hardness})}, thereby proving Proposition~\ref{prop:coverage hardness}. Specifically, we prove the following lemma:
\begin{lemma} \label{lemma: coverage reduction}
    For any $\varepsilon'$ that satisfies Proposition~\ref{prop: 3sat-5 hardness}, and for any $M > 1,~ 0 < \varepsilon < e^{-1}$ there exists a polynomial time algorithm that, 
    given a 3CNF-5 formula $\phi = (n, \{X_1, \dots, X_n\}, \{\phi_1, \dots, \phi_{\frac{5n}{3}}\})$, 
    returns a parameter $k'$ and a description of a normalized unweighted coverage function $f$ defined by $ (U, A, h)$, such that $k'$ is polynomial in $n$, and which satisfy the following conditions: 
    \begin{enumerate}
        \item For every $i\in A$, it holds that $f(\{i\}) = \frac{1}{k'}$.
        \item If $\phi$ is satisfiable, then there exists a set $S\subseteq A$ of size $k'$ s.t $f(S) = 1$.
        \item If at most a $(1-\varepsilon')$-fraction of the clauses of $\phi$ are simultaneously satisfiable, then for any $\beta < M$, any set $S\subseteq A$ of size $\beta k'$ has $f(S) \le 1-e^{-\beta} +\varepsilon$. 
    \end{enumerate}
\end{lemma}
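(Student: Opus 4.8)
The plan is to adapt Feige's construction underlying Proposition~\ref{prop:feige} (which corresponds to the case $M$ just above $1$, comparing size-$k$ sets against size-$k$ sets), but track the approximation guarantee as a function of the \emph{multiplicative} size blow-up $\beta$, rather than only at $\beta \approx 1$. First I would fix the parameters: given $\varepsilon'$ from Proposition~\ref{prop: 3sat-5 hardness}, choose $\ell$ large enough (polynomial in $n$ suffices, since the $k$-prover system runs in time polynomial in $n$ and $2^\ell$, and we only need $\ell = \Theta(\log n / c)$ to drive the error term down) so that the weak-acceptance error $k^2 2^{-c\ell}$ from Lemma~\ref{k-prover lemma} is below some threshold to be determined by $M$ and $\varepsilon$. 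The universe $U$ will be (as in Feige) indexed by pairs $(r, \text{partial transcript})$, with $A$ the set of (prover, question, answer) triples $(i, q, a)$; the map $h$ sends such a triple to the set of random strings $r$ together with the ``slots'' consistent with that answer, so that covering an element of $U$ corresponds to some pair of provers weakly agreeing on a given random string. The weights are uniform, and $k'$ will be the number of (prover) ``columns'' one must pick to get full coverage in the YES case — concretely $k' = k$ times the number of answer-coordinates, chosen so that the single-element value is exactly $1/k'$, which forces condition (1) essentially by construction (each item covers an equal $1/k'$ fraction of $U$, because the code words have fixed weight $\ell/2$ and the system is symmetric across random strings).

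For condition (2), if $\phi$ is satisfiable, Lemma~\ref{k-prover lemma} gives a prover strategy making the verifier strongly accept always; taking $S$ to be the $k'$ items corresponding to that strategy's answers covers every element of $U$, so $f(S) = 1$. This is the direct, easy direction and mirrors Feige exactly.

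The crux is condition (3): the soundness/covering analysis for sets of size $\beta k'$ with $1 < \beta \le M$. Here I would argue as Feige does but with a counting refinement. Suppose $S$ has size $\beta k'$; think of $S$ as choosing, on average, $\beta$ answers per (prover, coordinate) ``cell''. For each random string $r$, the element(s) of $U$ indexed by $r$ get covered only if some pair of provers, among the answers $S$ assigns to the cells queried by $r$, weakly agree. The expected fraction of $r$'s for which \emph{a fixed single strategy} causes weak acceptance is at most the soundness bound $\eta := k^2 2^{-c\ell}$; when we allow $\beta$ answers per cell we are effectively taking a union over roughly $\beta^{k}$ (or a similar polynomial-in-$\beta$ exponentiated) combinations of one-answer-per-prover strategies, but the key point — and this is the Feige trick — is a \emph{convexity/averaging} argument showing the coverage cannot exceed $1 - e^{-\beta} + (\text{error})$: intuitively, each of the $\beta k'$ items covers a $1/k'$ fraction of $U$, and if coverage were concentrated the items would overlap heavily, while independence-style bounds cap the union of $\beta k'$ many $\tfrac1{k'}$-fraction sets at $1 - (1 - \tfrac{1}{k'})^{\beta k'} \approx 1 - e^{-\beta}$, with the soundness guarantee ensuring the ``correlated'' part that would let overlaps be smaller contributes at most an additive term we push below $\varepsilon$ by enlarging $\ell$. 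Making this rigorous — quantifying precisely how the soundness error $\eta$ enters the $1 - e^{-\beta}$ bound uniformly over all $\beta \le M$, and confirming the error term can be made smaller than the prescribed $\varepsilon$ simultaneously for all such $\beta$ — is the main obstacle; the dependence of $\ell$ on both $M$ and $\varepsilon$ must be made explicit, and one must be careful that $\beta$ ranges over a continuum (handled by monotonicity of $f$ in $|S|$, reducing to $\beta$ at the endpoint $\beta k'$, or by noting the bound is increasing in $\beta$). Once condition (3) is established in this uniform-in-$\beta$ form, Proposition~\ref{prop:coverage hardness} follows immediately: a YES instance of 3CNF-5 separation yields case (1) of the proposition, a NO instance yields case (2), and distinguishing them would solve the NP-hard separation problem.
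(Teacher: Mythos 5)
Your proposal sketches the right high-level framework (Feige-style construction with universe indexed by random strings, items being prover--question--answer triples, completeness case via a strongly-accepting strategy), but there are two genuine gaps, one of which is a concrete error.

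First, your parameter choice $\ell = \Theta(\log n / c)$, or more generally letting $\ell$ grow with $n$, breaks polynomiality. The universe in Feige's construction (and the paper's) is $U = [k]^L \times R$ with $L = 2^\ell$, so $|U| = k^{2^\ell}\cdot|R|$. Even $\ell = \log n$ gives $|U| = k^n \cdot |R|$, which is exponential in $n$. The paper therefore keeps both $k$ and $\ell$ as \emph{constants} depending only on $M$ and $\varepsilon$; the soundness error $k^2 2^{-c\ell}$ is then also a constant, and the contradiction in condition~(3) is obtained by exhibiting a prover strategy whose weak-acceptance probability is bounded below by a \emph{constant} threshold $\frac{\varepsilon}{3}\bigl(\frac{\varepsilon}{3Mk}\bigr)^2$ that exceeds $k^2 2^{-c\ell}$ for $\ell$ a sufficiently large constant. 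So the dependence you flag as needing to be made explicit is resolved by constant-size parameters, not by growing $\ell$ with $n$.

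Second, your description of when an element of $U_r$ is covered is not how the construction works, and this matters for the soundness step you identify as ``the main obstacle.'' An element of $U_r$ is covered by $S$ if \emph{any single} item $(q,a,i)\in S$ with $q = q_i(\phi,r)$ hits it (i.e., $u_{\rho(\phi,r,q,a)} = i$); pairwise agreement of provers is not a precondition for coverage. The role of agreement is the opposite: when \emph{no} two provers (as chosen by $S$) extract the same distinguished-variable assignment under $r$, the building-block sets $B(r,j,i)$ used by $S$ carry distinct $j$-indices, and an inclusion--exclusion computation (Claim~\ref{claim: bounding disjoint cover}) gives an \emph{exact} formula $\bigl(1 - (1-\tfrac1k)^{\alpha_r}\bigr)|U_r|$ for the covered fraction. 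The soundness argument then converts $S$ into a randomized prover strategy (each prover $P_i$, on question $q$, answers uniformly among $a$ with $(q,a,i)\in S$), shows the average of $\alpha_r$ over $r$ equals $\beta k$, applies Markov to discard the heavy tail of $\alpha_r$, and uses concavity of $x \mapsto 1-(1-\tfrac1k)^x$ (Jensen) to cap the average coverage over ``bad'' $r$ by roughly $1-e^{-\beta}$. If $f(S) > 1-e^{-\beta}+\varepsilon$, an $\Omega(\varepsilon)$-fraction of $r$ must then exhibit a pair of agreeing provers with small $\alpha_r$, which forces the randomized strategy to weakly accept with probability above the soundness bound --- contradiction. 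Your ``union over $\beta^k$ combinations'' intuition and the claim that the soundness guarantee ``ensures the correlated part contributes at most an additive term'' do not match this mechanism; the argument is by contradiction from a \emph{single} randomized strategy plus an averaging/concavity bound, not by union-bounding over pure strategies. Since you explicitly flag this step as unresolved, I'd say the proposal identifies the right target but does not contain the actual proof of the hard direction.
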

Combining  Proposition~\ref{prop: 3sat-5 hardness}, and Lemma~\ref{lemma: coverage reduction} implies immediately Proposition~\ref{prop:coverage hardness}.

We start by describing the construction of $k'$ and $f=(U, A, h)$.

\paragraph{The construction of $k'$ and the coverage function $f$.} 
 Let $k, \ell \in \mathbb{N}$ be constants to be chosen later (they depend on $M, \varepsilon$). $k, \ell$ will be the parameters of the $k$-prover system which is the basis for our reduction. 
Let $\phi = (n, \{X_1, \dots, X_n\}, \{\phi_1, \dots, \phi_{\frac{5n}{3}}\})$ be a 3CNF-5 formula.
Consider the verifier in the $k$-prover proof system with parameter $\ell$ on input $\phi$. Let $R$ be the set of possible random states for the verifier (as in step 1), and let $Q$ be the set of possible questions (as in step 3). 

The algorithm returns $k'$ and $f$. $k'$ is defined as $k'=k\cdot |Q|$, which is polynomial in $n$ since $k,\ell$ are constant and $|Q| = \left(\frac{5n}{3}\right)^{\ell/2}n^{\ell/2}$. We next define $f$.  Let $L=2^\ell$ be the number of possible assignments of the distinguished variables. We use assignments and numbers $j\in [L]$  interchangeably (to switch between them we can think of a binary representation of $j-1$).  In what follows we specify the $U, A, h$.

We set
\[
U = [k]^L \times R.
\]
Since $k,\ell$ are constants, 
and $|R| = \left(\frac{5n}{3}\right)^\ell\cdot 3^\ell$ is polynomial in $n$, $|U|$ is polynomial in $n$. 
We next define the set of items $A$. We set
\[
A = \{ (q, a, i) \in Q \times \{0,1\}^{2\ell} \times [k] \mid a \text{ is a valid answer to question } q \}.
\]
Note that $|A| \le |Q| \cdot 2^{2\ell}\cdot k$  is polynomial in $n$. Intuitively, the item $(q, a, i)$ represents the case that prover $P_i$ answers question $q$ with answer $a$. The condition that $a$ is a valid answer to question $q$ depends on $\phi$ and means that all clause queries were answered with a satisfying assignment (which is assumed without loss of generality in the $k$-prover proof system definition).
Before describing $h$, we first specify our building blocks for $h$. For $r\in R, ~j\in [L], ~i\in[k]$ we define
\[
B(r, j, i) = \{u\in U \mid u_j = i \land u_{L+1} = r \}.
\]
Intuitively, $B(r, j, i)$ represents the case where under random state $r$ of the verifier, the assignment to the distinguished variables induced by $P_i$'s answer is $j$ (i.e., $\rho_i = j$). For any $r\in R$, we also denote $U_r = \{u\in U \mid u_{L+1} = r\}$, intuitively this is the case where the verifier selects randomness $r$. The sets $B(r, j, i)$ have immediate desirable properties, which we describe in the following two claims:
\begin{claim} \label{claim:covering U_r}
    For any $r\in R, j \in [L]$ it holds that  $\bigcup_{i=1}^k B(r, j, i) = U_r$. 
\end{claim}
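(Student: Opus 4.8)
The plan is to prove this claim directly from the definitions of $B(r,j,i)$ and $U_r$. Recall that
\[
B(r, j, i) = \{u\in U \mid u_j = i \land u_{L+1} = r \},
\qquad
U_r = \{u\in U \mid u_{L+1} = r\},
\]
and that $U = [k]^L \times R$, so an element $u \in U$ is a tuple whose first $L$ coordinates lie in $[k]$ and whose last coordinate lies in $R$.

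First I would show the inclusion $\bigcup_{i=1}^k B(r,j,i) \subseteq U_r$. This is immediate: any $u$ in the union lies in some $B(r,j,i)$, hence satisfies $u_{L+1} = r$, hence lies in $U_r$. For the reverse inclusion $U_r \subseteq \bigcup_{i=1}^k B(r,j,i)$, take any $u \in U_r$, so $u_{L+1} = r$. The $j$th coordinate $u_j$ is by definition of $U$ an element of $[k]$, so set $i := u_j \in [k]$; then $u$ satisfies both $u_j = i$ and $u_{L+1} = r$, so $u \in B(r,j,i)$ and therefore $u$ lies in the union. Combining the two inclusions gives the claimed equality.

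There is no real obstacle here — the claim is a bookkeeping observation that the sets $B(r,j,\cdot)$ partition $U_r$ according to the value of the $j$th coordinate (and in fact this is the content of the companion Claim about disjointness, stated next in the paper). The only thing to be careful about is that $u_j$ always takes a value in $[k]$ precisely because $U = [k]^L \times R$, which guarantees that the $k$ sets $B(r,j,1), \dots, B(r,j,k)$ exhaust all possibilities for that coordinate.
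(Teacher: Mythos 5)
Your proof is correct and follows essentially the same approach as the paper's: both hinge on the observation that $u_j \in [k]$ for any $u \in U$, so that unioning over $i \in [k]$ exhausts all possible values of the $j$th coordinate. The paper compresses this into a single chain of set equalities, whereas you spell out the two inclusions separately, but the underlying argument is identical.
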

\begin{proof}
    Let $r\in R, j\in [L]$, it holds that:
    \[
    \bigcup_{i=1}^k B(r, j, i) = \bigcup_{i=1}^k \{u\in U \mid u_j = i \land u_{L+1} = r \} = \{u\in U \mid u_{L+1} = r \} = U_r,
    \]
    where the second equality is since $u_j\in [k]$ for all $j\in [L]$. 
\end{proof}
\begin{claim} \label{claim: bounding disjoint cover}
    For any $r\in R,~ \mathcal{I} \subseteq [k] \times [L]$, if no two pairs in $\mathcal{I}$ share $j\in [L]$ (i.e., for every $j
    \in [L]$  it holds that $|\{i \mid (i,j)
     \in \mathcal{I}
     \}| \leq 1$),  then 
    \[
    \left|\bigcup_{(i,j) \in \mathcal{I}} B(r, j, i))\right| = \left(1-\left(1-\frac{1}{k}\right)^{|\mathcal{I}|}\right)|U_r|.
    \]
\end{claim}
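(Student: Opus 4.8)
The plan is to compute the size of the union by counting its complement within $U_r$, using the explicit product structure $U = [k]^L \times R$. Fix $r \in R$ and $\mathcal{I} \subseteq [k]\times[L]$ with the stated property that each $j \in [L]$ appears in at most one pair of $\mathcal{I}$. An element $u \in U_r$ has the form $u = (u_1,\dots,u_L, r)$ with each coordinate $u_j \in [k]$, and $u \in B(r,j,i)$ exactly when $u_j = i$. Hence $u$ fails to lie in \emph{any} of the sets $B(r,j,i)$ with $(i,j) \in \mathcal{I}$ if and only if, for every pair $(i,j)\in \mathcal{I}$, we have $u_j \neq i$. Because no index $j$ is shared by two pairs in $\mathcal{I}$, these are constraints on $|\mathcal{I}|$ distinct coordinates of $u$, each forbidding exactly one value out of $k$; the remaining $L - |\mathcal{I}|$ coordinates are unconstrained.

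From here I would just count: for each of the $|\mathcal{I}|$ constrained coordinates there are $k-1$ allowed values, and for each of the other $L-|\mathcal{I}|$ coordinates there are $k$ allowed values, so the number of $u \in U_r$ lying in none of the $B(r,j,i)$ is $(k-1)^{|\mathcal{I}|} k^{L-|\mathcal{I}|}$. Since $|U_r| = k^L$, the fraction of $U_r$ avoiding all the sets is $(k-1)^{|\mathcal{I}|} k^{-|\mathcal{I}|} = (1 - \tfrac{1}{k})^{|\mathcal{I}|}$, and therefore
\[
\left|\bigcup_{(i,j)\in\mathcal{I}} B(r,j,i)\right| = |U_r| - \left(1-\tfrac{1}{k}\right)^{|\mathcal{I}|}|U_r| = \left(1 - \left(1-\tfrac{1}{k}\right)^{|\mathcal{I}|}\right)|U_r|,
\]
as claimed. (All the sets $B(r,j,i)$ for $(i,j)\in\mathcal{I}$ are contained in $U_r$, so intersecting with $U_r$ changes nothing.)

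There is no real obstacle here; the only thing to be careful about is the role of the hypothesis that no two pairs of $\mathcal{I}$ share a coordinate $j$. Without it, the events ``$u_j \neq i$'' for different pairs could involve the same coordinate, the factorization of the count over coordinates would fail, and the clean $(1-1/k)^{|\mathcal{I}|}$ formula would no longer hold — indeed if $(i,j)$ and $(i',j)$ with $i\neq i'$ are both in $\mathcal{I}$, then $B(r,j,i)$ and $B(r,j,i')$ are disjoint and the product structure of the complement breaks. So the single substantive point to state explicitly is that the independence of the coordinate constraints is exactly what the sharing hypothesis buys, after which the count is a one-line product.
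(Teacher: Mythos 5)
Your proof is correct. You take a genuinely different route from the paper: the paper computes $\bigl|\bigcap_{(i,j)\in\mathcal{I}'} B(r,j,i)\bigr| = k^{L-|\mathcal{I}'|}$ for every nonempty $\mathcal{I}'\subseteq\mathcal{I}$, then assembles the union via inclusion--exclusion and collapses the resulting alternating sum with the binomial theorem. You instead count the complement directly inside $U_r$: because the constrained coordinates $j$ are pairwise distinct, the "avoid all $B(r,j,i)$" condition factors coordinate by coordinate, giving $(k-1)^{|\mathcal{I}|}k^{L-|\mathcal{I}|}$ elements outside the union. Both arguments hinge on exactly the same structural fact --- the hypothesis that no two pairs in $\mathcal{I}$ share a $j$ makes the coordinate constraints independent --- but your complement count reaches the closed form in one multiplication, whereas the paper's version requires the extra bookkeeping of inclusion--exclusion and a binomial-theorem simplification. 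Your version is a bit more economical and makes it immediately transparent where the $(1-1/k)^{|\mathcal{I}|}$ factor comes from; the paper's version has the mild advantage of exhibiting the intersection sizes explicitly, which can be reassuring when the sets are less obviously a product structure. For this particular claim, where $U_r\cong[k]^L$ is a clean product, your approach is the more natural one.
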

\begin{proof}
    Let $r\in R$ and let $\mathcal{I} \subseteq [k] \times [L]$ be such that no two pairs in $\mathcal{I}$ share $j\in [L]$.
    We note that for every non-empty $\mathcal{I'}\subseteq \mathcal{I}$, since no two pairs $(i_1, j_1) \ne (i_2, j_2) \in \mathcal{I'}$ can have $j_1 = j_2$ we get
    \[
    \left|\bigcap_{(i,j) \in \mathcal{I'}} B(r, j, i)\right| = k^{L-|\mathcal{I'}|}.
    \]
    Now, from the inclusion-exclusion principle and the binomial theorem we get:
    \[
    \begin{split}
    \left|\bigcup_{(i,j) \in \mathcal{I}} B(r, j, i) \right| &= \sum_{\emptyset \ne \mathcal{I'} \subseteq \mathcal{I}} (-1)^{|\mathcal{I}'|}\left|\bigcap_{(i,j) \in \mathcal{I'}} B(r, j, i)\right| = \sum_{m=1}^{|\mathcal{I}|} {|\mathcal{I}| \choose m} (-1)^m k^{L-m} = \\
    &= \left(1-\left(1-\frac{1}{k}\right)^{|\mathcal{I}|}\right)k^L = \left(1-\left(1-\frac{1}{k}\right)^{|\mathcal{I}|}\right)|U_r|,
    \end{split}
    \]
    as needed. 
\end{proof}
In terms of our intuition for the meaning of $B(r, j, i)$, these two properties mean that if all provers agree on the assignment to the distinguished variables (i.e., the verifier strongly accepts) then the corresponding $k$ sets $B(r,j,i)$ cover $U_r$, and if the verifier does not weakly accept then they only cover a constant fraction of it, 
which is reminiscent of what we need.

We are now ready to define $h:A\rightarrow 2^U$, defined as follows:
\[
h((q, a, i)) = \bigcup_{r\in R: q_i(\phi, r) = q} B(r, \rho(\phi, r, q, a), i),
\]
where $\rho(\phi, r, q, a)$ denotes the assignment on the distinguished variables induced by answer $a$ (as the verifier extracts in step 5). This definition of $h((q, a, i))$  is the natural formalization of our intuition for $(q, a, i) \in A$  representing the case where prover $P_i$ answers question $q$ with answer $a$ and $B(r, j, i)$ representing the case where under random state $r$ of the verifier, the assignment to the distinguished variables induced by $P_i$'s answer is $j$. Per the definition of a normalized unweighted coverage function, $f:2^A\rightarrow [0,1]$ is now defined as follows: 
\[
f(S)
= \frac{1}{|U|}\left|\bigcup_{(q,a,i)\in S} h((q,a,i))\right|.
\]

\subsection{Proof of Lemma~\ref{lemma: coverage reduction}} \label{subsec: coverage reduction proof}
To prove Lemma~\ref{lemma: coverage reduction} we need to prove that $k'$ and $f$ satisfy conditions (1), (2) and (3) in the lemma.
    
    \textbf{Proof of condition (1):} Let $(q, a, i)\in A$, now:
    \[
    \begin{split}
    f((q,a,i)) &= \frac{|h((q,a,i))|}{|U|} = \frac{1}{|U|}\left|\biguplus_{r\in R: q_i(\phi, r) = q} B(r, \rho(\phi, r, q, a), i)\right| \\
    &=  \frac{1}{|U|} \left|\biguplus_{r\in R: q_i(\phi, r) = q} \{u\in U \mid u_{\rho(\phi, r, q, a)} = i \land u_{L+1} = r \}\right| \\
    &= \frac{1}{|U|} \sum_{r\in R: q_i(\phi, r) = q} |\{u\in U \mid u_{\rho(\phi, r, q, a)} = i \land u_{L+1} = r \}| = \frac{1}{|U|} \sum_{r\in R: q_i(\phi, r) = q} k^{L-1} \\
    &= \frac{1}{|U|} \cdot \frac{|R|}{|Q|} k^{L-1} = \frac{1}{|U|}\cdot \frac{|U|}{|Q|k} =\frac{1}{k'},
    \end{split}
    \]
    where for the second equality we used that by definition, the sets $B(r,j,i)$ are disjoint for different values of $r$.

    \textbf{Proof of condition (2):} Assume that $\phi$ is satisfiable, and let $\rho$ be a satisfying assignment (to all variables). Denote by $\rho_r$ the assignment induced by $\rho$ to the distinguished variables chosen by randomness $r$. For any $q\in Q$, denote by $a(q)$ the answer to question $q$ according to assignment $\rho$. We claim that 
    \[
    S = \{ (q, a(q), i) \mid q\in Q, i\in [k]\}
    \]
    satisfies condition (2) of Lemma~\ref{lemma: coverage reduction}. It is clear that $|S| = k|Q| = k'$ as needed. To show that $f(S) = 1$ we show that $\bigcup_{(q, a, i) \in S} h((q,a,i))=U$. Indeed:
    \[
    \begin{split}
    \bigcup_{(q, a, i) \in S} h((q,a,i)) &= 
    \bigcup_{(q,a, i)\in S} \bigcup_{r\in R: q_i(\phi, r)=q} B(r, \rho(\phi, r, q, a), i) = \bigcup_{(q,a, i)\in S} \bigcup_{r\in R: q_i(\phi, r)=q} B(r, \rho_r, i) \\
    &= \bigcup_{q\in Q, i \in [k]} \bigcup_{r\in R: q_i(\phi, r)=q} B(r, \rho_r, i) =  
    \bigcup_{r\in R, i\in [k]} \bigcup_{q\in Q: q_i(\phi, r)=q} B(r, \rho_r, i)
    \\
    &= \bigcup_{r\in R, i\in [k]} B(r, \rho_r, i) = \bigcup_{r\in R} U_r = U ,
    \end{split}
    \]
    where the second to last equality  follows  from Claim~\ref{claim:covering U_r}. Thus,
    \[
    f(S) = \frac{1}{|U|}\left|\bigcup_{(q,a,i)\in S} h((q,a,i))\right| = \frac{1}{|U|}|U|= 1.
    \]
    This concludes the proof of condition (2). We now turn to the proof of condition (3) of Lemma~\ref{lemma: coverage reduction}.
    
    \textbf{Proof of condition (3):} Assume that at most a $(1-\varepsilon')$-fraction of the clauses of $\phi$ are satisfiable, and assume towards contradiction the existence of a set $S\subseteq A$ of size $\beta k'$ where $\beta < M$ s.t $f(S) > 1-e^{-\beta} + \varepsilon$. We will show a contradiction for large enough $k,\ell$ (that depend only on $\varepsilon, M$), thus proving condition~(3).

    The intuition for the contradiction is that the set $S$ defines a strategy for the provers; namely, prover $P_i$ answers question $q\in Q$ by picking an answer $a$ uniformly out of those answers that satisfy the condition $(q, a, i) \in S$ (If no answer satisfies this condition, $P_i$ answers arbitrarily). The construction is such that, when fixing randomness $r\in R$, if the provers don't agree on the assignment to the distinguished variables, the union of $B(r, \rho_i, i)$ can't be too big, which means that if $f(S)$ is large and $|S|$ is small, this strategy of the provers has a relatively high probability of causing the verifier to weakly accept.
    
    More formally, each prover $P_i$ answers question $q$ by uniformly picking an answer from the set 
    \[
        \{a\in \{0,1\}^{2l} \mid (q, a, i)\in S\}.
    \]
    With each pair $(q,i) \in Q\times [k]$ we associate a weight $\alpha_{q,i}$, which is defined as the size of the above set. If $\alpha_{q,i} = 0$ prover $P_i$ answers question $q$ arbitrarily.
    It is trivial that
    \begin{equation}\label{eq:|S|}        
    \sum_{q,i} \alpha_{q,i} = |S|.
    \end{equation}
    With each randomness $r\in R$ we associate a weight $\alpha_r = \sum_{i=1}^k \alpha_{q_i(\phi, r),i}$.

    We call $r$ good if the two following conditions hold:
    \begin{enumerate}
        \item $\alpha_r \le \frac{3\beta k}{\varepsilon}$.
        \item There exist $(q_1, a_1, i_1), (q_2, a_2, i_2) \in S$ s.t $i_1 \ne i_2$, $q_1 = q_{i_1}(\phi, r), q_2 = q_{i_2}(\phi, r)$ and $\rho(\phi, r, q_1, a_1) = \rho(\phi, r, q_2, a_2)$. 
    \end{enumerate} 
    Intuitively, if $r$ is good the provers have a ``good'' probability of getting the verifier to weakly accept. 
    More formally:
    \begin{lemma}\label{lem:good}
If the randomness $r\in R$ chosen by the verifier is good, then when the provers employ the above strategy, the verifier weakly accepts with probability at least $\left(\frac{\varepsilon}{3Mk}\right)^2$.
    \end{lemma}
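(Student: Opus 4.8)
The plan is to read off both conclusions almost directly from the two conditions defining a good $r$. By condition (2), I would fix a pair of items $(q_1, a_1, i_1), (q_2, a_2, i_2) \in S$ with $i_1 \neq i_2$, $q_1 = q_{i_1}(\phi, r)$, $q_2 = q_{i_2}(\phi, r)$, and $\rho(\phi, r, q_1, a_1) = \rho(\phi, r, q_2, a_2)$. Since $(q_1, a_1, i_1) \in S$, the answer $a_1$ belongs to the set $\{a \in \{0,1\}^{2\ell} \mid (q_1, a, i_1) \in S\}$, of size $\alpha_{q_1, i_1} \ge 1$, from which $P_{i_1}$ samples uniformly when asked $q_1$; hence under the prescribed strategy $P_{i_1}$ outputs $a_1$ on question $q_1$ with probability exactly $1/\alpha_{q_1, i_1}$, and likewise $P_{i_2}$ outputs $a_2$ on question $q_2$ with probability $1/\alpha_{q_2, i_2}$.

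Next I would invoke independence of the provers: in the $k$-prover system the answer of $P_i$ is a function of $\phi$ and $q_i(\phi, r)$ alone, and since $i_1 \neq i_2$ the two events ``$P_{i_1}$ answers $a_1$'' and ``$P_{i_2}$ answers $a_2$'' involve disjoint internal randomness and are therefore independent, so they jointly occur with probability $1/(\alpha_{q_1, i_1}\,\alpha_{q_2, i_2})$. Whenever they both occur, the extracted assignments satisfy $\rho_{i_1} = \rho(\phi, r, q_1, a_1) = \rho(\phi, r, q_2, a_2) = \rho_{i_2}$, so the pair $P_{i_1} \neq P_{i_2}$ agrees on the distinguished variables and the verifier weakly accepts. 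Thus the weak-acceptance probability (for this fixed good $r$) is at least $1/(\alpha_{q_1, i_1}\,\alpha_{q_2, i_2})$.

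Finally I would bound the two weights using condition (1). Because $q_1 = q_{i_1}(\phi, r)$ and $q_2 = q_{i_2}(\phi, r)$, the quantities $\alpha_{q_1, i_1}$ and $\alpha_{q_2, i_2}$ are the $i_1$-th and $i_2$-th summands of $\alpha_r = \sum_{i=1}^{k} \alpha_{q_i(\phi, r), i}$, a sum of nonnegative terms, so each is at most $\alpha_r$. Condition (1) gives $\alpha_r \le \frac{3\beta k}{\varepsilon}$, and since $\beta < M$ this is strictly less than $\frac{3Mk}{\varepsilon}$. Combining, the weak-acceptance probability is at least $\frac{1}{\alpha_{q_1, i_1}\,\alpha_{q_2, i_2}} > \left(\frac{\varepsilon}{3Mk}\right)^2$, which is the claimed bound.

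I do not expect a genuine obstacle here, since the lemma is essentially bookkeeping; the only points needing a little care are (i) ensuring each of the two weights is bounded by $\alpha_r$ rather than by the larger $|S|$, which is exactly why condition (2) insists the two questions are the ones actually sent by the verifier and why $i_1 \neq i_2$ matters, and (ii) spelling out the independence step, which relies on the structural property of the $k$-prover proof system that each prover's response depends only on its own question.
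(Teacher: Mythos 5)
Your proof is correct and follows essentially the same approach as the paper's: fix the pair guaranteed by condition (2), note that each prover independently picks the required answer with probability at least $1/\alpha_r$, and then bound $\alpha_r$ via condition (1). The paper's version is slightly terser (it goes straight to the bound $\alpha_{q_j,i_j} \le \alpha_r$ rather than keeping the exact factor $1/(\alpha_{q_1,i_1}\alpha_{q_2,i_2})$), but the argument and the final estimate are identical.
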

    \begin{proof}
        Since $r$ is good, there exist $(q_1, a_1, i_1), (q_2, a_2, i_2) \in S$ s.t. $i_1 \ne i_2$, $q_1 = q_{i_1}(\phi, r), q_2 = q_{i_2}(\phi, r)$ and $\rho(\phi, r, q_1, a_1) = \rho(\phi, r, q_2, a_2)$. 
        The probability of prover $P_{i_1}$ answering $a_1$ is at least $\frac{1}{\alpha_r}$ (since it selects its answer uniformly out of at most $\alpha_{q_1, i_1} \le \alpha_r$ answers), the same is true for prover $P_{i_2}$ answering $a_2$. This means the probability of the verifier weakly accepting is at least 
        \[
        \left(\frac{1}{\alpha_r}\right)^2 \ge \left(\frac{\varepsilon}{3\beta k}\right)^2 \ge \left(\frac{\varepsilon}{3Mk}\right)^2,
        \]
        which concludes the proof.
    \end{proof}

Based on Lemma~\ref{lem:good}, to obtain a contradiction to Lemma~\ref{k-prover lemma}, it remains to show that the probability that $r$ is good is sufficiently large. This is cast in the following lemma.

    \begin{lemma}\label{lem:suf}
        For a sufficiently large $k$, the fraction of good $r$ is at least $\frac{\varepsilon}{3}$.
    \end{lemma}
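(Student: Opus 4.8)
The plan is to estimate $f(S)$ as the average, over a uniform random $r\in R$, of the normalized cover size $\mu_r := \frac{1}{|U_r|}\bigl|\bigcup_{(i,j)\in\mathcal{I}_r} B(r,j,i)\bigr|$, where $\mathcal{I}_r\subseteq[k]\times[L]$ collects those pairs $(i,j)$ for which some $(q_i(\phi,r),a,i)\in S$ induces the assignment $\rho(\phi,r,q_i(\phi,r),a)=j$; since $U=\biguplus_{r\in R}U_r$ with all $|U_r|$ equal, indeed $f(S)=E_r[\mu_r]$. Two observations tie $\mu_r$ to the two clauses of ``goodness''. First, $|\mathcal{I}_r|\le\alpha_r$, because for each prover $i$ there are at most $\alpha_{q_i(\phi,r),i}$ admissible answers, hence at most that many induced assignments. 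Second, if $r$ fails condition~(2), then no two pairs of $\mathcal{I}_r$ share the same $j\in[L]$, so Claim~\ref{claim: bounding disjoint cover} gives \emph{exactly} $\mu_r=1-(1-\tfrac1k)^{|\mathcal{I}_r|}$; if $r$ satisfies condition~(2) I will simply use $\mu_r\le1$.

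The second ingredient is that $E_r[\alpha_r]=\beta k$. For each fixed $i$, as $r$ ranges uniformly over $R$ the question $q_i(\phi,r)$ is uniformly distributed over $Q$: its $\ell/2$ clause-coordinates are i.i.d.\ uniform clauses, and its $\ell/2$ variable-coordinates are i.i.d.\ uniform variables — a uniform variable of a uniform clause is a uniform element of $[n]$ precisely because in a 3CNF-5 formula every variable occurs in exactly five clauses — and these coordinates read disjoint parts of $r$. Hence $E_r[\alpha_{q_i(\phi,r),i}]=\frac{1}{|Q|}\sum_{q\in Q}\alpha_{q,i}$, and summing over $i$ together with $\sum_{q,i}\alpha_{q,i}=|S|=\beta k|Q|$ gives $E_r[\alpha_r]=\beta k$. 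By Markov's inequality, $\Pr_r[\alpha_r>3\beta k/\varepsilon]\le\varepsilon/3$, i.e.\ condition~(1) fails with probability at most $\varepsilon/3$.

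Now suppose toward a contradiction that the fraction of good $r$ is $g<\varepsilon/3$. Partition $R$ into $B_1$ (condition~(1) fails), the good set $G$, and $B_2$ (condition~(1) holds but condition~(2) fails); then $\Pr[B_1]\le\varepsilon/3$, $\Pr[G]=g$, and $p_2:=\Pr[B_2]=1-\Pr[B_1]-g>1-2\varepsilon/3$. Bounding $\mu_r\le1$ on $B_1\cup G$ and $\mu_r=1-(1-\tfrac1k)^{|\mathcal{I}_r|}$ on $B_2$, then applying Jensen's inequality to the convex map $x\mapsto(1-\tfrac1k)^x$ with $E[\,|\mathcal{I}_r|\,\mathbb{1}_{B_2}]\le E_r[\alpha_r]=\beta k$, I obtain
\[
f(S)\;\le\;\Pr[B_1]+g+p_2-p_2\Bigl(1-\tfrac1k\Bigr)^{\beta k/p_2}\;=\;1-p_2\Bigl(1-\tfrac1k\Bigr)^{\beta k/p_2}.
\]
Since $p\mapsto p\,(1-\tfrac1k)^{\beta k/p}$ is increasing on $(0,1]$ and $p_2>1-2\varepsilon/3$, this yields $f(S)\le1-(1-2\varepsilon/3)(1-\tfrac1k)^{\beta k/(1-2\varepsilon/3)}$. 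Because $(1-\tfrac1k)^k\to e^{-1}$ and $\beta\le M$, for $k$ large enough in terms of $M$ and $\varepsilon$ alone we have $(1-2\varepsilon/3)(1-\tfrac1k)^{\beta k/(1-2\varepsilon/3)}\ge(1-2\varepsilon/3)e^{-\beta/(1-2\varepsilon/3)}-\varepsilon/3$ uniformly over $\beta\in(0,M]$; and writing $\eta=2\varepsilon/3$, using $e^{-x}\ge1-x$ and the elementary bound $e^{-\beta}(1+\beta)\le1$,
\[
e^{-\beta}-(1-\eta)e^{-\beta/(1-\eta)}\;=\;e^{-\beta}\bigl(1-(1-\eta)e^{-\beta\eta/(1-\eta)}\bigr)\;\le\;e^{-\beta}\,\eta(1+\beta)\;\le\;\eta.
\]
Combining, $f(S)\le1-e^{-\beta}+\eta+\varepsilon/3=1-e^{-\beta}+\varepsilon$, contradicting the standing assumption $f(S)>1-e^{-\beta}+\varepsilon$. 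Hence $g\ge\varepsilon/3$.

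The step I expect to be most delicate is the last one: arranging the constants so that a single threshold on $k$, depending only on $M$ and $\varepsilon$, simultaneously forces the finite-$k$ error below $\varepsilon/3$ for all $\beta\in(0,M]$ and lets $e^{-\beta}(1+\beta)\le1$ absorb the entire $\beta$-dependence of the remaining slack (which is why $\beta\le M$, rather than $\beta\le\log(1/\varepsilon)$, suffices). A smaller but genuine point is checking that $q_i(\phi,r)$ is \emph{exactly} uniform on $Q$, which is precisely where the 3CNF-5 structure (each variable in exactly five clauses) is used.
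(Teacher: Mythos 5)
Your proof is correct and follows essentially the same route as the paper's: assume fewer than an $\varepsilon/3$-fraction of good $r$, compute $E_r[\alpha_r]=\beta k$, use Markov to discard the large-$\alpha_r$ states, apply Claim~\ref{claim: bounding disjoint cover} on the remaining ``bad but small'' states, exploit convexity of $x\mapsto(1-1/k)^x$, and take $k$ large. Your reformulation via $f(S)=E_r[\mu_r]$ together with a conditional Jensen bound over $B_2$ is a cleaner repackaging of the paper's concavity-of-the-sum argument, and your closed-form final step (using $e^{-x}\ge 1-x$ and $e^{-\beta}(1+\beta)\le 1$ to get $e^{-\beta}-(1-\eta)e^{-\beta/(1-\eta)}\le\eta$) replaces the paper's implicit limit argument with an explicit, uniform-in-$\beta$ bound. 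One small sharpening worth keeping: you correctly note that the number of distinct $B(r,j,i)$ in the cover is $|\mathcal{I}_r|\le\alpha_r$ (the paper writes this as an equality, which is not quite right but is only ever used in the harmless direction).
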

    
    Before presenting the proof of Lemma~\ref{lem:suf}, we briefly provide some intuition. Our first observation is that the mean value of $\alpha_r$ for $r\in R$ is exactly $\beta k$. 
    This serves two purposes, the first is that it allows us to treat the condition $\alpha_r > \frac{3Mk}{\varepsilon}$ as negligible (as it is only met by at most a $\frac{\epsilon}{3}$-fraction of values of $r$). 
    We then assume by contradiction that the negation of condition (2) of $r$ being good is negligible (as in condition 2 is met by at most a $\frac{\epsilon}{3}$-fraction of values of $r$). 
    When neglecting values of $r$ that satisfy condition 2 of $r$ being good, we apply Claim~\ref{claim: bounding disjoint cover} to show that the set $S$ only covers $(1-(1-\frac{1}{k})^{\alpha_r}) |U_r|$ of $|U_r|$, at which point we use the concavity of the function $y(x) = (1-(1-\frac{1}{k})^x)$ to bound how much of $U$ is covered by $S$ with $y(E[\alpha_r]) \approx y(\beta k) \approx 1-e^{-\beta}$, producing our contradiction. 

    We now present the formal proof of Lemma~\ref{lem:suf}.
    
    \begin{proof}
        Denote by $R_{\text{Good}}$ the set of good values of $r$. Assume towards contradiction that $|R_{\text{Good}}| \le \frac{\varepsilon\cdot |R|}{3} $.
        Denote by $R_{\text{Large}}$ the set of values of $r$ with $\alpha_r > \frac{3\beta k}{\varepsilon}$, i.e.,
        \[
        R_{\text{Large}} = \left\{r\in R\mid \alpha_r > \frac{3\beta k}{\varepsilon} \right\}.
        \]
        Finally, denote by $g_S(r)$ the number of elements in $U_r$ covered by $S$, i.e.:
        \[
        g_S(r) = \left|\bigcup_{(q,a,i)\in S} h((q,a,i)) \cap U_r\right|.
        \]
        Note that for any $r\in R$, it holds that $g_S(r) \le  |U_r| = k^L$.
        
        Since $f(S) \ge 1-e^{-\beta} + \varepsilon$, we have 
        \[
        \sum_{r\in R} g_S(r) = f(S) \cdot |U| \ge (1-e^{-\beta} + \varepsilon) \cdot |R| \cdot k^L.
        \]
        The average value of $\alpha_r$ for $r\in R$ is exactly $\beta k$, since
        \[
        \frac{1}{|R|} \sum_{r\in R} \alpha_r = \frac{1}{|R|} \sum_{r\in R} \sum_{i=1}^k \alpha_{q_i(\phi, r),i} = \frac{1}{|R|} \sum_{q\in Q,i \in [k]} \frac{|R|}{|Q|} \alpha_{q,i} = \frac{|S|}{|Q|}  = \beta k,
        \]
        where second equality follows from the fact that there are exactly $|R|/|Q|$ random states that cause the verifier to send out question $q$ to prover $P_i$, the third equality is by Equation~\eqref{eq:|S|}, and the last equality is since $|S| = \beta k' = \beta k |Q|$.
        This gives us, by Markov's inequality, $\frac{|R_{\text{Large}}|}{|R|} \le \frac{\varepsilon}{3}$, since
        \[
        \frac{|R_{\text{Large}}|}{|R|} \le \frac{\beta k}{{3\beta k}/{\varepsilon}} = \frac{\varepsilon}{3}.
        \]
        We now have both $|R_{\text{Large}}|$ and $ |R_{\text{Good}}|$ are at most $  \frac{\varepsilon\cdot |R|}{3} $. This, together with the fact that $R_{\text{Large}}$ and $R_{\text{Good}}$ are disjoint, imply that
        \begin{equation} \label{ineq: lower bound coverage by rs}
        \begin{split}
        \sum_{r\in R \setminus (R_{\text{Large}} \cup R_{\text{Good}})} g_S(r) &= \sum_{r\in R} g_S(r) - \sum_{r\in R_{\text{Large}}} g_S(r) - \sum_{r\in R_{\text{Good}}} g_S(r) \\ 
        &\ge (1-e^{-\beta} + \varepsilon) \cdot |R| \cdot k^L - |R_{\text{Large}}|\cdot k^L - |R_{\text{Good}}|\cdot k^L \\
        &\ge \left(1-e^{-\beta} + \frac{\varepsilon}{3}\right) \cdot |R| \cdot k^L.
        \end{split}
        \end{equation}
        Additionally, the average value of $\alpha_r$ for $r \in R \setminus (R_{\text{Large}} \cup R_{\text{Good}})$ is bounded from above by $(1+\frac{2\varepsilon}{3})\cdot \beta k$ since
        \[
        \begin{split}
        \frac{1}{|R| - |R_{\text{Large}}| - |R_{\text{Good}}|}\sum_{r\in R \setminus (R_{\text{Large}} \cup R_{\text{Good}})} \alpha_r &\le  \frac{1}{1-\frac{|R_{\text{Good}}|}{|R|-|R_{\text{Large}}|}} \cdot \frac{1}{|R| - |R_{\text{Large}}|} \sum_{r\in R \setminus R_{\text{Large}}} \alpha_r \\
        &\le \frac{1}{1-\frac{\frac{\varepsilon}{3}|R|}{|R|-\frac{\varepsilon}{3}|R|}} \cdot \frac{1}{|R|} \sum_{r\in R} \alpha_r  
        = \frac{1-\frac{\varepsilon}{3}}{1-\frac{2\varepsilon}{3}}\cdot \beta k \\ & \le \left(1+\frac{2\varepsilon}{3}\right)\cdot \beta k,
        \end{split}
        \]
        where the first inequality is because we're only adding non-negative values to the sum. The second inequality is because the average value of $\alpha_r$ in $R \setminus R_{\text{Large}}$ is at most the average value of $\alpha_r$ in $R$, as we only removed elements with a large $\alpha_r$. The last inequality is because $\frac{1-x}{1-2x} \le (1+2x)$ in the range $x\in [0,\frac{1}{4}]$, and $\frac{\varepsilon}{3} \le \frac{1}{3e} < \frac{1}{4}$.

        Finally, we note that for any $r \notin R_{\text{Good}}$, from Claim~\ref{claim: bounding disjoint cover}, and since the number of sets of the form $B(r, j, i)$ that are included in the covering of $U_r$ by $S$ is $\alpha_r$, we have $g_S(r) = (1-(1-\frac{1}{k})^{\alpha_r})k^L$. Since the function $y(x) = (1-(1-\frac{1}{k})^x)$ is concave, the sum
        \[
        \sum_{r\in R \setminus (R_{\text{Large}} \cup R_{\text{Good}})} g_r(S) = \sum_{r\in R \setminus (R_{\text{Large}} \cup R_{\text{Good}})} \left(1-\left(1-\frac{1}{k}\right)^{\alpha_r}\right) k^L
        \]
        is maximized when all $\alpha_r$ are equal to the mean, and therefore is bounded from above by
        \begin{eqnarray*}\label{eq:lower_grs}
        \begin{split}
        \sum_{r\in R \setminus (R_{\text{Large}} \cup R_{\text{Good}})} g_r(S) &\le k^L \sum_{r\in R \setminus (R_{\text{Large}} \cup R_{\text{Good}})} \left(1-\left(1-\frac{1}{k}\right)^{(1+\frac{2\varepsilon}{3})\beta k }\right) \\
        &\le k^L \cdot |R| \cdot \left(1-\left(1-\frac{1}{k}\right)^{(1+\frac{2\varepsilon}{3})\beta k }\right).
        \end{split}
        \end{eqnarray*}

        By combining with  Inequality~\eqref{ineq: lower bound coverage by rs}, we get that   the difference function in the upper and lower bounds (divided by $k^L
       \cdot |R|$) of  $\sum_{r\in R \setminus (R_{\text{Large}} \cup R_{\text{Good}})} g_r(S)$ must satisfy $$\delta(\varepsilon,k) =  e^{-\beta} -\left(1-\frac{1}{k}\right)^{-(1+\frac{2\varepsilon}{3})\beta k} - \frac{\varepsilon}{3} \geq 0.$$
        However, for $\delta^\star(\varepsilon) = \lim_{k\rightarrow \infty }\delta(\varepsilon,k) = e^{-\beta} -e^{-(1+\frac{2\varepsilon}{3})\beta } - \frac{\varepsilon}{3} $ it holds that $\delta^\star(0)=0$, and the function $\delta^\star$ is strictly decreasing in $\varepsilon$ for $\varepsilon\in [0,1]$, thus for every $\epsilon>0$ there exists a large enough $k$ for which $\delta(\varepsilon,k) < 0$ which  is a contradiction.
    \end{proof}
To conclude the  proof of Lemma~\ref{lemma: coverage reduction},        we observe that by combining Lemma~\ref{lem:good} and Lemma~\ref{lem:suf} it is sufficient to choose a sufficiently large value of $\ell$  such that $\frac{\varepsilon}{3}\cdot \left(\frac{\varepsilon}{3Mk}\right)^2 > k^2 \cdot2^{-c\ell}$, which contradicts Lemma~\ref{k-prover lemma}. \qed
\section{A Discussion on Approaches for Hardness Results}\label{app:approach}
In this appendix, we discuss why one cannot extend the hardness results from  \citep{multi-agent-contracts} for the cases of XOS and subaditive success probability functions to show a constant hardness result for submodular functions. \citet{multi-agent-contracts} show that a better-than-constant approximation for XOS $f$ is not possible in polynomial time, even given access to both value and demand oracles (a demand oracle returns a set $S \subseteq A$ that maximizes $f(S) - \sum_{i\in S} p_i$ for every price vector $p=(p_1, \dots, p_n)$).

The idea behind their approach was to ``hide'' an especially good set $T\subseteq A$ within an otherwise symmetric success probability function. When incentivizing the agents $T$, the principal would have a significantly greater utility than any other set $S\subseteq A$, but due to the symmetry of $f$, one can't guess $T$ with a non-negligible advantage using a polynomial number of value and demand queries.

In this section we show that such an approach cannot be used in the case of submodular success probability functions. Generalizing their approach, we characterize it as defining a pseudo-symmetric success probability function $f$ (i.e., $f(S)$ is a function of $|S|$, except for a single set $T$ which may be better). We show a PTAS of the optimal contract for any psuedo-symmetric submodular success probability function, which shows this approach cannot produce inapproximability results for submodular $f$. Our claim is formalized in the following theorem:
\begin{theorem} \label{thm: psuedo-symmetric PTAS}
    Let $A=[n]$ be the set of agents, $f:2^A \rightarrow [0,1]$ be a psuedo-symmetric submodular success probability function, and let $c_1 \le \dots \le c_n$ be associated agent costs.
    The subset of agents $S \in \arg \max_{S\in C}{g(S)}$ is a $(1-\varepsilon)$-approximation of the optimal solution, where $C= \{ \emptyset\} \cup \{[i] \mid i \in [n]\} \cup \{S \subseteq A \mid |S| \le \frac{2}{\varepsilon}\}$.
\end{theorem}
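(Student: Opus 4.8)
The plan is to show that the best set in the candidate family $C$ attains at least $(1-\varepsilon)\,g(S^\star)$, where $S^\star$ is an optimal set. Let me write the pseudo-symmetric function as $f(S) = \phi(|S|)$ for all $S \neq T$, and $f(T) \geq \phi(|T|)$, where $\phi$ is a concave (by submodularity), nondecreasing function on $\{0,1,\dots,n\}$. First I would split into two cases according to whether the optimal set $S^\star$ equals the special set $T$ or not. If $S^\star = T$, I will further split on whether $|T| \le 2/\varepsilon$: if it is, then $T \in C$ and we are done exactly. If $|T| > 2/\varepsilon$, I would argue that $T$ itself cannot be too much better than the best "prefix" set $[|T|]$ of the same size — and here I would use the marginal-cost structure: to incentivize each of the $|T|$ agents in $T$, the principal must pay at least $c_i / f(i\mid T\setminus\{i\})$, and because $f$ is pseudo-symmetric, $f(i\mid T\setminus\{i\}) = \phi(|T|) - f(T\setminus\{i\})$, and $f(T\setminus\{i\}) = \phi(|T|-1)$ since $T\setminus\{i\}\neq T$. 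So the marginal values are all equal to $\phi(|T|)-\phi(|T|-1)$, the same as for the prefix set $[|T|]$. The only advantage $T$ has is $f(T) \ge \phi(|T|)$ in the leading factor $f(S)$, but this is capped at $1$, so $g(T) \le g([|T|]) \cdot \frac{f(T)}{\phi(|T|)} \le g([|T|])/\phi(|T|)$; I then need $\phi(|T|) \ge 1-\varepsilon$, which should follow because otherwise shrinking $T$ improves the cost term without losing much value — this needs a short argument but is the kind of thing that works.

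For the main case $S^\star \neq T$ (so $f(S^\star) = \phi(|S^\star|)$), I would first reduce to prefix sets: among all sets of size $s = |S^\star|$ avoiding $T$, the one minimizing the total incentive payment is obtained by a greedy / exchange argument that picks the $s$ cheapest agents, i.e. $[s]$, because the marginal contributions $f(i\mid S\setminus\{i\}) = \phi(s)-\phi(s-1)$ are identical across all such sets and all agents, so the payment $\sum_{i\in S} c_i/(\phi(s)-\phi(s-1))$ is minimized by taking the smallest $c_i$'s. (I need to handle the minor nuisance that $S^\star$ might contain some agent whose removal gives $T$, or that $[s]=T$, but these are low-order perturbations of a single index.) Hence $g([s]) \ge g(S^\star)$ and $[s]\in C$, giving an exact optimum in this case — so actually the only place the $(1-\varepsilon)$ loss enters is the $S^\star=T$, $|T|>2/\varepsilon$ subcase.

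The step I expect to be the main obstacle is controlling $g(T)$ versus the candidate sets when $T$ is large: I must simultaneously (a) show $\phi(|T|)$ is close to $1$, or else (b) show some smaller prefix or some small set in $C$ does nearly as well as $T$. The clean way is the concavity-based bound: if $\phi(|T|) < 1-\varepsilon$ then the set $[\lceil 2/\varepsilon\rceil]$ — which is in $C$ — already captures a $(1-\varepsilon)$ fraction of whatever $g(T)$ could be, using concavity of $\phi$ to lower-bound $\phi(2/\varepsilon)$ relative to $\phi(|T|)$ and the fact that $g$ is bounded by $f \le 1$. I would also need to double-check the degenerate cases where some marginal $f(i\mid S\setminus\{i\})$ is zero (handled by the paper's convention that the ratio is $0$ or $\infty$), and the empty set, which is why $\emptyset\in C$. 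Putting these pieces together yields $\max_{S\in C} g(S) \ge (1-\varepsilon)\, g(S^\star)$, which is the claim.
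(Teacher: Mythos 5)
Your overall strategy (compare against prefix sets, handle small sets separately, use some shrinking for large sets) is in the right spirit, but there are two genuine gaps, and one of them is exactly the step you dismiss as a ``minor nuisance.''

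First, in the case $S^\star = T$ with $|T| > 2/\varepsilon$, your marginal calculation is wrong: for $i \in T$ you have $f(i \mid T\setminus\{i\}) = f(T) - \phi(|T|-1)$, not $\phi(|T|) - \phi(|T|-1)$. When $f(T) > \phi(|T|)$ the true marginal is strictly larger, so the payments for $T$ are strictly cheaper per agent than for a generic size-$|T|$ set, and your asserted bound $g(T) \le g([|T|])\cdot f(T)/\phi(|T|)$ does not follow. The fallback ``show $\phi(|T|) \ge 1-\varepsilon$ or else use $[\lceil 2/\varepsilon\rceil]$ via concavity'' is hand-waved: concavity gives $\phi(\lceil 2/\varepsilon\rceil) \ge \frac{\lceil 2/\varepsilon\rceil}{|T|}\phi(|T|)$, which does not by itself translate into $g([\lceil 2/\varepsilon\rceil]) \ge (1-\varepsilon)\,g(T)$ since $g$ involves the cost term and not just $f$.

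Second, and more seriously, the case $S^\star \ne T$ with $|S^\star| = |T|+1$ is not a perturbation of a single index -- it is precisely where the loss has to be paid, and your prefix reduction can fail badly there. If $T \subseteq [\,|S^\star|\,]$ but $T \not\subseteq S^\star$, then for the unique $j$ with $[\,|S^\star|\,]\setminus\{j\} = T$ the marginal is $\phi(|T|+1) - f(T) = \phi(|T|+1) - \phi(|T|) - v$, which can be much smaller than $\phi(|T|+1)-\phi(|T|)$ (even zero, in which case $g([\,|S^\star|\,]) = -\infty$), while every marginal in $S^\star$ equals the symmetric value $\phi(|T|+1)-\phi(|T|)$. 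So $g([\,|S^\star|\,]) \ge g(S^\star)$ is simply false in general, and this is not a subcase you can wave away.

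The paper sidesteps both problems by casing on the \emph{size} of $S^\star$ rather than its identity: for $|S^\star| \notin \{|T|,|T|+1\}$ all marginals are symmetric (both for $S^\star$ and for $[\,|S^\star|\,]$), so prefixes win exactly; for $|S^\star| \in \{|T|,|T|+1\}$ and $|S^\star| > 2/\varepsilon$, submodularity gives a subset $S'\subseteq S^\star$ of size $|S^\star|-2$ with $f(S') \ge \frac{|S^\star|-2}{|S^\star|}f(S^\star) \ge (1-\varepsilon)f(S^\star)$, and crucially $|S'|-1,\,|S'| \notin \{|T|,|T|+1\}$, so $S'$ is back in the ``clean'' regime and $g([\,|S'|\,]) \ge g(S') \ge (1-\varepsilon) g(S^\star)$. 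The shrink-by-two step is the missing idea in your proposal, and it is what puts the $(1-\varepsilon)$ loss exactly where it belongs.
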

\begin{proof}
     Let $A=[n]$ be the set of agents, $f:2^A \rightarrow [0,1]$ be a psuedo-symmetric submodular success probability function, and let $c_1 \le \dots \le c_n$ be associated agent costs.
     Let $T\subseteq A$ be a subset of agents, $h:\mathbb{N} \rightarrow [0,1]$, and $v\in [0,1]$ such that $f(S) = h(|S|)+v \cdot 1[S=T]$.
     
    Denote an optimal contract for the problem instance by $S^\star$ (i.e., $S^\star \in \arg \max_{S\subseteq A} g(S)$), and let $S \in \arg \max_{S\in C}{g(S)}$, where $C$ is the same as in the theorem statement, we need to show $g(S) \ge (1-\varepsilon) \cdot g(S^\star)$. 

    We first note that for any set $S'\subseteq A$ with size $|S'|\notin \{|T|, |T|+1\}$ it holds that 
    \[
    g(S') = \left(1-\sum_{i\in S'} \frac{c_i}{f(i \mid S' \setminus \{i\})}\right) f(S')= \left(1-\frac{1}{h(|S'|) - h(|S'|-1)}\sum_{i\in S'} c_i\right) h(|S'|).
    \]
    This implies (since $c_1 \le \dots \le c_n$) that for
    \[
    g([|S'|]) \ge g(S').
    \]
    
    Therefore, if $|S^\star| < |T|$ or $|S^\star| > |T|+1$, it holds that $g(S) \ge g([|S^\star|]) \ge g(S^\star)$, as needed. 
    
    Also, if $|S^\star| \le \frac{2}{\varepsilon}$, it holds that $S^\star \in C$, implying, from our definition of $S$, that $g(S) \ge g(S^\star)$, as needed.
    Finally, if $\frac{2}{\varepsilon} < |S^\star|$ and $ |S^\star| \in \{|T|, |T|+1\}$, from submodularity there exists a set $S'\subseteq S^\star$ of size $|S^\star|-2$ with $f(S') \ge \frac{|S'|}{|S^\star|} f(S^\star) \ge (1-\varepsilon) f(S^\star)$. Since $S' \subseteq S^\star$ we get
    \[
    \begin{split}
    g(S') &= \left(1-\sum_{i\in S'} \frac{c_i}{f(i \mid S' \setminus \{i\})}\right) f(S') 
    \ge \left(1-\sum_{i\in S^\star} \frac{c_i}{f(i \mid S^\star \setminus \{i\})}\right) f(S') \\ 
    &\ge  \left(1-\sum_{i\in S^\star} \frac{c_i}{f(i \mid S^\star \setminus \{i\})}\right) \cdot (1-\varepsilon) \cdot f(S^\star) = (1-\varepsilon)\cdot g(S^\star),
    \end{split}
    \]
    where the first inequality is from submodularity of $f$, 
    and since $S' \subseteq S^\star$. Thus, we have 
    \[
    g(S) \ge g([|S'|]) \ge g(S') \ge (1-\varepsilon)g(S^\star),
    \]
    as needed.
\end{proof}

Theorem~\ref{thm: psuedo-symmetric PTAS} clearly gives us a PTAS for the optimal contract when $f$ is a pseudo-symmetric submodular function, as an algorithm that simply computes $\arg \max_{S\in C} g(S)$ runs in polytime, as $|C| = O(n^{2/\varepsilon})$.

This proves that the inapproximability approach from \citep{multi-agent-contracts} doesn't work for submodular functions. It is also worth noting that this approach produces (when it applies) different hardness results than our own, since our results apply even when the algorithm is given a succinct representation of the success probability function, and has complete information about it. In other words - our approach shows computational hardness results, whereas the approach in \citep{multi-agent-contracts} shows information theoretical hardness.

\end{document}